\pgfplotsset{compat=newest}
\newlength\figureheight
\newlength\figurewidth
\newtheorem{theorem}{Theorem}[section]
\newtheorem{lemma}[theorem]{Lemma}
\newtheorem{proposition}[theorem]{Proposition}
\newtheorem{corollary}[theorem]{Corollary}
\theoremstyle{definition}
\newtheorem{definition}[theorem]{Definition}
\newtheorem{example}[theorem]{Example}
\newtheorem{remark}[theorem]{Remark}
\numberwithin{equation}{section}
\numberwithin{table}{section}
\numberwithin{figure}{section}
\newcommand{\R}{\mathbb{R}}
\newcommand{\transp}{\top}
\newcommand{\rmd}{\mathrm{d}}
\newcommand{\rme}{\mathrm{e}}
\newcommand{\ones}{\mathbf{1}_d}
\newcommand{\calA}{\mathcal{A}}
\newcommand{\calF}{\mathcal{F}}
\newcommand{\calG}{\mathcal{G}}
\newcommand{\calK}{\mathcal{K}}
\DeclareMathOperator{\E}{\mathbb{E}}
\DeclareMathOperator{\cov}{cov}
\DeclareMathOperator{\COA}{COA}
\DeclareMathOperator{\RDR}{RDR}
\begin{document}

\title{Robust Utility Maximizing Strategies under Model Uncertainty and their Convergence}

\author[1]{J\"{o}rn Sass\thanks{\href{mailto:sass@mathematik.uni-kl.de}{sass@mathematik.uni-kl.de}}}
\author[1]{Dorothee Westphal\thanks{\href{mailto:westphal@mathematik.uni-kl.de}{westphal@mathematik.uni-kl.de}}}
\affil[1]{Department of Mathematics, Technische Universit\"{a}t Kaiserslautern}

\date{November~2, 2021}

\maketitle

\begin{abstract}
	In this paper we investigate a utility maximization problem with drift uncertainty in a multivariate continuous-time Black--Scholes type financial market which may be incomplete. We impose a constraint on the admissible strategies that prevents a pure bond investment and we include uncertainty by means of ellipsoidal uncertainty sets for the drift.
	Our main results consist firstly in finding an explicit representation of the optimal strategy and the worst-case parameter, secondly in proving a minimax theorem that connects our robust utility maximization problem with the corresponding dual problem.
	Thirdly, we show that, as the degree of model uncertainty increases, the optimal strategy converges to a generalized uniform diversification strategy.
	
	\medskip
	
	\noindent
	\textit{Keywords:} Portfolio optimization, Drift uncertainty, Minimax theorems, Diversification
	
	\smallskip
	
	\noindent
	\textit{2010 Mathematics Subject Classification:} 91G10, 91B16, 93E20
\end{abstract}

\section{Introduction}\label{sec:introduction}

Model uncertainty is a challenge that is inherent in many applications of mathematical models. Optimization procedures in general take place under a particular model. This model, however, might be misspecified due to statistical estimation errors, incomplete information, biases, and for various other reasons. In that sense, any specified model must be understood as an approximation of the unknown ``true'' model. Difficulties arise since a strategy which is optimal under the approximating model might perform rather badly for the true model specifications.
A natural way to deal with model uncertainty is to consider worst-case optimization.

Model uncertainty, also called \emph{Knightian uncertainty} in reference to the seminal book by Knight~\cite{knight_1921}, has been addressed in numerous papers. Gilboa and Schmeidler~\cite{gilboa_schmeidler_1989} and Schmeidler~\cite{schmeidler_1989} formulate rigorous axioms on preference relations that account for risk aversion and uncertainty aversion. A robust utility functional in their sense is a mapping
\[ X\mapsto \inf_{Q\in\mathcal{Q}}\E_Q\bigl[U(X)\bigr], \]
where $U$ is a utility function and $\mathcal{Q}$ a convex set of probability measures.
Chen and Epstein~\cite{chen_epstein_2002} give a continuous-time extension of this multiple-priors utility. In Maccheroni et al.~\cite{maccheroni_marinacci_rustichini_2006} the authors thoroughly axiomatize the robust approach to utility maximization via so-called ambiguity-averse preferences.

Optimal investment decisions under such preferences are investigated in Quenez~\cite{quenez_2004} and Schied~\cite{schied_2005}. An extension of those results by means of a duality approach is given in Schied~\cite{schied_2007}.
Uncertainty about both drift and volatility in a continuous-time Brownian framework under multiple priors is studied by Lin and Riedel~\cite{lin_riedel_2014}. Further papers addressing drift uncertainty in financial markets are Garlappi et al.~\cite{garlappi_uppal_wang_2007} and Biagini and P\i nar~\cite{biagini_pinar_2017}. The latter also focuses on ellipsoidal uncertainty sets, as we do in this work.
Neufeld and Nutz~\cite{neufeld_nutz_2018} incorporate jumps of the price process by considering a L\'{e}vy processes setup.

A relation between model uncertainty and portfolio diversification is investigated in a recent paper by Pham et al.~\cite{pham_wei_zhou_2018}.
Pflug et al.~\cite{pflug_pichler_wozabal_2012} study a one-period risk minimization problem under model uncertainty and show convergence of the optimal strategy to the uniform diversification strategy. Our results generalize these findings to a continuous-time utility maximization problem and provide an explanation for the good performance of the uniform diversification strategy also in a continuous-time setting.

\bigskip

The optimization problem that we address here is a utility maximization problem in a continuous-time financial market. The most basic utility maximization problem in a Black--Scholes market is the Merton problem of maximizing expected utility of terminal wealth. It can be written in the form
\[ V(x_0) = \sup_{\pi\in\calA(x_0)} \E\bigl[U(X^\pi_T)\bigr], \]
where $U\colon\R_+\to\R$ is a utility function, $X^\pi_T$ denotes the terminal wealth achieved when using strategy $\pi$, and $\calA(x_0)$ is the class of admissible strategies starting with initial capital $x_0$.
Merton~\cite{merton_1969} solves this problem for power and logarithmic utility in a multivariate financial market model and gives a corresponding optimal strategy.
However, the setup of the problem assumes that an investor knows the market parameters, in particular the drift $\mu$ of asset returns. This is a rather unrealistic assumption since drift parameters are notoriously difficult to estimate.
To obtain strategies that are robust with respect to a possible misspecification of the drift we consider the worst-case optimization problem
\[ \overline{V}(x_0) = \adjustlimits \sup_{\pi\in\calA(x_0)} \inf_{\mu\in K} \E_\mu\bigl[U(X^\pi_T)\bigr]. \]
Here, we write $\E_\mu[\cdot]$ for the expectation with respect to a measure $\mathbb{P}^\mu$ under which the drift of the asset returns is $\mu\in\R^d$, with $d$ denoting the number of risky assets in the market. The set $K\subseteq\R^d$ is called the \emph{uncertainty set}.
Our aim is to study the structure of optimal strategies, as well as their asymptotic behavior as the uncertainty set $K$ increases. Since for large uncertainty, investors usually do not invest in the risky assets at all, we restrict the class of admissible strategies by imposing a constraint that prevents a pure bond investment. We focus on ellipsoidal uncertainty sets $K$, see~\eqref{eq:uncertainty_ellipsoid}.

Our main results consist firstly in finding an explicit representation of the optimal strategy and the worst-case drift parameter for the robust utility maximization problem with constrained strategies and ellipsoidal uncertainty sets. Secondly, by using this explicit representation, a minimax theorem of the form
\[ \adjustlimits \sup_{\pi\in\calA(x_0)} \inf_{\mu\in K} \E_\mu\bigl[U(X^\pi_T)\bigr] = \adjustlimits \inf_{\mu\in K} \sup_{\pi\in\calA(x_0)} \E_\mu\bigl[U(X^\pi_T)\bigr] \]
is proven.
Thirdly, we show that the optimal strategy converges to a generalized uniform diversification strategy. In case of $K$ being a ball, this is the equal weight strategy, corresponding to uniform diversification. This result is somewhat surprising since in the limit the optimal strategy does not depend on the volatility structure of the assets anymore. In that sense, our results help to explain the popularity of uniform diversification strategies by the presence of uncertainty in the model.

\bigskip

The paper is organized as follows. In Section~\ref{cha:a_robust_utility_maximization_problem} we state our multivariate, possibly incomplete, Black--Scholes type financial market model and introduce the robust utility maximization problem.
Our main results are given in Section~\ref{cha:a_duality_approach}, where we solve our optimization problem for power and logarithmic utility. The main idea is to solve the dual problem explicitly and to show then that the solution forms a saddle point of the problem. We give representations of the optimal strategy and the worst-case drift parameter and prove a minimax theorem.
In Section~\ref{cha:asymptotic_behavior_as_uncertainty_increases} we study the asymptotic behavior of the optimal strategy and the worst-case parameter as the degree of uncertainty goes to infinity. We show that the optimal strategy converges to a generalized uniform diversification strategy, where by uniform diversification we mean the equal weight or $1/d$ strategy for the investment in the risky assets. Furthermore, we analyze the influence of the investor's risk aversion on the speed of convergence and investigate measures for the performance of the optimal robust strategies.
Section~\ref{sec:outlook} gives an outlook on more general financial market models with stochastic drift processes for which we state a suitable problem formulation. Our results can then be used to derive an explicit representation of the optimal strategy as well as a minimax theorem also in the more general model.
For better readability, all proofs are collected in Appendix~\ref{app:proofs}.

\paragraph{Notation.}
We use the notation $I_d$ for the identity matrix in $\R^{d\times d}$ as well as $e_i$, $i=1,\dots,d$, for the $i$-th standard unit vector in $\R^d$, and $\ones$ for the vector in $\R^d$ containing a one in every component. We shortly write $\R_+=(0,\infty)$. By $\langle\cdot,\cdot\rangle$ we denote the scalar product on $\R^d\times\R^d$ with $\langle x,y\rangle=x^\transp y$ for $x,y\in\R^d$.
If $x\in\R^d$ is a vector, $\lVert x\rVert$ denotes the Euclidean norm of $x$.

\section{Robust Utility Maximization Problem}\label{cha:a_robust_utility_maximization_problem}

\subsection{Financial market model}

We consider a continuous-time financial market with one risk-free and various risky assets. By $T>0$ we denote some finite investment horizon. Let $(\Omega, \calF, \mathbb{F}, \mathbb{P})$ be a filtered probability space where the filtration $\mathbb{F}=(\calF_t)_{t\in[0,T]}$ satisfies the usual conditions. All processes are assumed to be $\mathbb{F}$-adapted.
The risk-free asset $S^0$ is of the form $S^0_t=\rme^{rt}$, $t\in[0,T]$, where $r\in\R$ is the constant risk-free interest rate.
Aside from the risk-free asset, investors can also invest in $d\geq 2$ risky assets. Their return process $R=(R^1,\dots,R^d)^\transp$ is defined by
\[ \rmd R_t = \nu\,\rmd t + \sigma\,\rmd W_t, \quad R_0=0, \]
where $W=(W_t)_{t\in[0,T]}$ is an $m$-dimensional Brownian motion under $\mathbb{P}$ with $m\geq d$, allowing for incomplete markets. Further, $\nu\in\R^d$ and $\sigma\in\R^{d\times m}$, where we assume that $\sigma$ has full rank equal to $d$.

We introduce model uncertainty by assuming that the true drift of the stocks is only known to be an element of some set $K\subseteq\R^d$ with $\nu\in K$ and that investors want to maximize their worst-case expected utility when the drift takes values within $K$. The value $\nu$ can be thought of as an estimate for the drift that was for instance obtained from historical stock prices. Changing the drift from $\nu$ to some $\mu\in K$ can be expressed by a change of measure. For this purpose, define the process $(Z^\mu_t)_{t\in[0,T]}$ by
\[ Z^\mu_t = \exp\Bigl(\theta(\mu)^\transp W_t -\frac{1}{2}\lVert\theta(\mu)\rVert^2 t\Bigr), \]
where $\theta(\mu)=\sigma^\transp(\sigma\sigma^\transp)^{-1}(\mu-\nu)$. We can then define a new measure $\mathbb{P}^\mu$ by setting $\frac{\rmd \mathbb{P}^\mu}{\rmd \mathbb{P}} = Z^\mu_T$. Note that since $\theta(\mu)$ is a constant, the process $(Z^\mu_t)_{t\in[0,T]}$ is a strictly positive martingale. Therefore, $\mathbb{P}^\mu$ is a probability measure that is equivalent to $\mathbb{P}$ and we obtain from Girsanov's Theorem that the process $(W^\mu_t)_{t\in[0,T]}$, defined by $W^\mu_t = W_t-\theta(\mu)t$, is a Brownian motion under $\mathbb{P}^\mu$. We can thus rewrite the return dynamics as
\begin{equation*}
	\rmd R_t = \nu\,\rmd t + \sigma\,\rmd W_t = \nu\,\rmd t + \sigma\bigl(\rmd W^\mu_t+\theta(\mu)\,\rmd t\bigr) = \mu\,\rmd t + \sigma\,\rmd W^\mu_t,
\end{equation*}
and see that a change of measure from $\mathbb{P}$ to $\mathbb{P}^\mu$ corresponds to changing the drift in the return dynamics from $\nu$ to $\mu$. We thus shortly write $\E_\mu[\cdot]$ for the expectation under measure $\mathbb{P}^\mu$ and $\E[\cdot]=\E_\nu[\cdot]$ for the expectation under our reference measure $\mathbb{P}=\mathbb{P}^\nu$.

An investor's trading decisions are described by a self-financing trading strategy $(\pi_t)_{t\in[0,T]}$ with values in $\R^d$. The entry $\pi^i_t$, $i=1, \dots, d$, is the proportion of wealth invested in asset $i$ at time $t$. The corresponding wealth process $(X^\pi_t)_{t\in[0,T]}$ given initial wealth $x_0>0$ can then be described by the stochastic differential equation
\[ \rmd X^\pi_t = X^\pi_t\Bigl( r\,\rmd t + \pi_t^\transp(\mu-r\ones)\,\rmd t + \pi_t^\transp \sigma\,\rmd W^\mu_t \Bigr), \quad X^\pi_0=x_0, \]
for any $\mu\in K$.
We require trading strategies to be $\mathbb{F}^R$-adapted, where $\mathbb{F}^R=(\calF^R_t)_{t\in[0,T]}$ for $\calF^R_t=\sigma((R_s)_{s\in[0,t]})$. The admissibility set is defined as
\[ \calA(x_0) = \biggl\{\pi=(\pi_t)_{t\in[0,T]} \;\bigg|\; \pi \text{ is } \mathbb{F}^R\text{-adapted}, \, X^\pi_0=x_0, \, \E_\mu\biggl[\int_0^T\! \lVert\sigma^\transp\!\pi_t\rVert^2\,\rmd t\biggr]<\infty \text{ for all } \mu\in K\biggr\}. \]
Our robust portfolio optimization problem can then be formulated as
\begin{equation}\label{eq:value_function_robust}
	\overline{V}(x_0) = \adjustlimits \sup_{\pi\in\calA(x_0)} \inf_{\mu\in K} \E_\mu\bigl[U_\gamma(X^\pi_T)\bigr],
\end{equation}
where $U_\gamma$ is a power or logarithmic utility function, i.e.\ $U_\gamma\colon\R_+\to\R$ for $\gamma\in(-\infty,1)$, where $U_\gamma(x)=\frac{x^\gamma}{\gamma}$ for $\gamma\neq 0$ denotes power utility and $U_0(x)=\log(x)$ logarithmic utility.

\subsection{Constraint on the admissible strategies}\label{sec:constraint_on_the_admissible_strategies}

In the following, our aim is to investigate problem~\eqref{eq:value_function_robust} in detail.
First, we make the observation that for a large degree of model uncertainty the trivial strategy $\pi\equiv 0$ becomes optimal both for logarithmic and for power utility. This result has been shown in a similar setting by Biagini and P\i nar~\cite[Sec.~3.1--3.2]{biagini_pinar_2017} who address in addition to the finite horizon setting also the case with an infinite time horizon.

\begin{proposition}\label{prop:invest_only_in_bond}
	Let $\gamma\in(-\infty,1)$ and $K\subseteq \R^d$. If $r\ones\in K$, then the strategy $(\pi_t)_{t\in[0,T]}$ with $\pi_t=0$ for all $t\in[0,T]$ is optimal for the optimization problem
	\begin{equation}\label{eq:recall_value_function_robust}
		\adjustlimits \sup_{\pi\in\calA(x_0)} \inf_{\mu\in K} \E_\mu\bigl[U_\gamma(X^\pi_T)\bigr].
	\end{equation}
\end{proposition}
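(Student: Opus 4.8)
The plan is to show that for \emph{every} admissible strategy $\pi\in\calA(x_0)$ the robust value $\inf_{\mu\in K}\E_\mu[U_\gamma(X^\pi_T)]$ can never exceed the value delivered by the bond-only strategy $\pi\equiv 0$. For $\pi\equiv 0$ the terminal wealth is the deterministic quantity $X^0_T=x_0\rme^{rT}$, which does not depend on $\mu$, so that $\inf_{\mu\in K}\E_\mu[U_\gamma(X^0_T)]=U_\gamma(x_0\rme^{rT})$. Hence it suffices to prove $\inf_{\mu\in K}\E_\mu[U_\gamma(X^\pi_T)]\le U_\gamma(x_0\rme^{rT})$ for each $\pi\in\calA(x_0)$.

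The conceptual core is the observation that the hypothesis $r\ones\in K$ permits the adversary to select the drift $\mu=r\ones$, for which the excess return $\mu-r\ones$ vanishes. Under $\mathbb{P}^{r\ones}$ the wealth dynamics lose all drift beyond the risk-free rate, and solving the resulting SDE yields the explicit representation $X^\pi_T=x_0\rme^{rT}M_T$, where $M_T=\exp\bigl(\int_0^T\pi_t^\transp\sigma\,\rmd W^{r\ones}_t-\tfrac12\int_0^T\lVert\sigma^\transp\pi_t\rVert^2\,\rmd t\bigr)$ is the stochastic exponential of $\int_0^\cdot\pi_t^\transp\sigma\,\rmd W^{r\ones}_t$. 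As a nonnegative local martingale, $M$ is a supermartingale, so $\E_{r\ones}[M_T]\le M_0=1$. Bounding the infimum from above by this one choice of drift reduces the claim to $\E_{r\ones}[U_\gamma(X^\pi_T)]\le U_\gamma(x_0\rme^{rT})$.

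I would finish this reduced inequality by distinguishing the sign of $\gamma$ and exploiting the concavity or convexity of the relevant power. For logarithmic utility the admissibility condition makes $\int_0^T\pi_t^\transp\sigma\,\rmd W^{r\ones}_t$ a square-integrable martingale, so taking logarithms and expectations gives $\E_{r\ones}[U_0(X^\pi_T)]=\log(x_0\rme^{rT})-\tfrac12\E_{r\ones}\bigl[\int_0^T\lVert\sigma^\transp\pi_t\rVert^2\,\rmd t\bigr]\le\log(x_0\rme^{rT})$. For power utility I would write $\E_{r\ones}[U_\gamma(X^\pi_T)]=\frac{(x_0\rme^{rT})^\gamma}{\gamma}\,\E_{r\ones}[M_T^\gamma]$ and apply Jensen to $M_T$: for $0<\gamma<1$ concavity gives $\E_{r\ones}[M_T^\gamma]\le(\E_{r\ones}[M_T])^\gamma\le 1$, whereas for $\gamma<0$ convexity gives $\E_{r\ones}[M_T^\gamma]\ge(\E_{r\ones}[M_T])^\gamma\ge 1$; multiplying by the prefactor $\frac{(x_0\rme^{rT})^\gamma}{\gamma}$, which is positive in the first case and negative in the second, yields $\E_{r\ones}[U_\gamma(X^\pi_T)]\le U_\gamma(x_0\rme^{rT})$ in both regimes.

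I expect the main subtlety to lie not in any single estimate but in the bookkeeping of integrability and of the direction of the inequalities across the cases $\gamma\in(0,1)$, $\gamma=0$, and $\gamma<0$: one must verify that each expectation is well defined (permitting the value $-\infty$ when $\gamma<0$, in which case the claim holds trivially) and combine the supermartingale bound $\E_{r\ones}[M_T]\le 1$ with the correct monotonicity of $x\mapsto x^\gamma$. Conceptually, however, everything rests on the single idea that the worst-case drift $r\ones$ turns every risky position into pure noise with no compensating excess return, so that a risk-averse investor can only be hurt by deviating from the pure bond investment.
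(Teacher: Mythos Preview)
Your proposal is correct and follows the same overall line as the paper: bound the infimum from above by the single choice $\mu=r\ones$, under which the wealth process loses all excess return, and then show that the resulting expected utility never exceeds $U_\gamma(x_0\rme^{rT})$. The logarithmic case is handled identically in both arguments.

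For power utility the two proofs differ in a minor technical detail. The paper factors the integrand as
\[
\exp\Bigl(\gamma\!\int_0^T\!\pi_t^\transp\sigma\,\rmd W^{r\ones}_t-\tfrac12\gamma^2\!\int_0^T\!\lVert\sigma^\transp\pi_t\rVert^2\,\rmd t\Bigr)\exp\Bigl(-\tfrac12\gamma(1-\gamma)\!\int_0^T\!\lVert\sigma^\transp\pi_t\rVert^2\,\rmd t\Bigr),
\]
drops the second factor (using its sign together with the sign of $\tfrac{x_0^\gamma}{\gamma}$), and then invokes the supermartingale property of the first factor. Your route instead keeps the single stochastic exponential $M_T$ and applies Jensen's inequality to $M_T^\gamma$, splitting on the convexity/concavity of $x\mapsto x^\gamma$. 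Your argument is arguably cleaner for $\gamma<0$: there the paper's final step (``the exponential local martingale has expectation $\le 1$'') is multiplied by the negative prefactor $\tfrac{x_0^\gamma}{\gamma}$, which on its own would flip the inequality; your Jensen bound $\E_{r\ones}[M_T^\gamma]\ge 1$ avoids this delicacy entirely. Either way, the conceptual content is the same and the differences are purely in bookkeeping.
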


This observation implies that as the level of uncertainty about the true drift parameter exceeds a certain threshold, it is optimal for investors to not invest anything in the stocks.

\begin{remark}
	Proposition~\ref{prop:invest_only_in_bond} could be reformulated in terms of robust utility functionals by assuming only that a martingale measure is in the ambiguity set.
	The statement of the proposition is in line with \O{}ksendal and Sulem~\cite{oksendal_sulem_2008, oksendal_sulem_2011} where the authors obtain a similar result for optimality of $\pi\equiv 0$. They consider a jump diffusion model with a worst-case approach where the market chooses a scenario from a fixed but very comprehensive set of probability measures. In contrast, it is shown in Zawisza~\cite{zawisza_2018} that, if the model allows for stochastic interest rate $r$, the optimal strategy does not invest exclusively in the bond.
	Lin and Riedel~\cite{lin_riedel_2021} show that, when there is a large degree of uncertainty about interest rates, the investor even puts all money in the risky assets.
\end{remark}

Investing everything in the risk-free asset is a sensible but very extreme reaction to model uncertainty. We are interested in finding out which strategies are reasonable under high model uncertainty if investors still want to invest a part of their wealth into the risky assets, or, alternatively, if they are forced to invest due to some external requirements. For that purpose, we introduce a constraint on our strategies that prevents investors from solely investing in the bond.
Consider for some $h>0$ the admissibility set
\[ \calA_h(x_0)=\bigl\{ \pi\in\calA(x_0) \,\big|\, \langle\pi_t,\ones\rangle = h \text{ for all } t\in[0,T] \bigr\}. \]
We do not want to exclude short-selling, so negative entries of $\pi$ are possible.
Taking $h=1$ would imply that investors are not allowed to invest anything in the risk-free asset. They must then distribute all of their wealth among the risky assets. For instance, a constraint of the form $\langle\pi_t,\ones\rangle = h>0$ typically applies for some mutual funds when investors are required to invest a certain amount in risky assets.
Moreover, it has been studied in DeMiguel et al.~\cite{demiguel_garlappi_nogales_uppal_2009} how constraining the norm of portfolio weight vectors in a one-period model can improve portfolio performance in the presence of estimation errors.

\begin{remark}
	The admissibility set $\calA_h(x_0)$ might seem unnecessarily restrictive at first glance. Instead of fixing $\langle\pi_t,\ones\rangle=h$ one might want to consider utility maximization among the larger class of strategies $\pi$ with $\langle\pi_t,\ones\rangle\geq h$. However, we are mainly interested in the asymptotic behavior of the optimal strategies as the level of uncertainty increases. It is intuitively clear that, when uncertainty is large, investors seek to invest as little as possible in the risky assets.
	Therefore, we consider optimization among strategies in $\calA_h(x_0)$ and use our results to show that enlarging the class of admissible strategies asymptotically does not change the value of the optimization problem, see Section~\ref{sec:relaxing_the_investment_constraint}.
\end{remark}

\section{A Duality Approach}\label{cha:a_duality_approach}

In this section we solve for power or logarithmic utility $U_\gamma$ and for specific uncertainty sets $K$ the optimization problem
\begin{equation}\label{eq:robust_problem_power_log}
	\adjustlimits \sup_{\pi\in\calA_h(x_0)} \inf_{\mu\in K} \E_\mu\bigl[U_\gamma(X^\pi_T)\bigr].
\end{equation}

\begin{remark}
	In the situation with logarithmic utility and uncertainty sets that are balls in some $p$-norm, $p\in[1,\infty)$, it is possible to carry over methods from a one-period risk minimization problem as in Pflug et al.~\cite{pflug_pichler_wozabal_2012} to our continuous-time robust utility maximization problem.
	If $K=\{\mu\in\R^d\,|\,\lVert\mu-\nu\rVert_p\leq\kappa\}$, then for every $\varepsilon>0$ there exists a $\kappa_0>0$ such that for all $\kappa\geq\kappa_0$ the strategy $\pi^*(\kappa)$ that is optimal for
	\[ \adjustlimits \sup_{\substack{\pi\in\calA_h(x_0)\\ \pi \text{ deterministic}}} \inf_{\mu\in K} \E_\mu\bigl[\log(X^\pi_T)\bigr] \]
	satisfies
	\[ \biggl\lVert\frac{1}{T}\int_0^T\Bigl(\pi_s^*(\kappa)-\frac{h}{d}\ones\Bigr)\rmd s\biggr\rVert_q<\varepsilon, \]
	where $q\in(1,\infty]$ with $\frac{1}{p}+\frac{1}{q}=1$. See Westphal~\cite[Thm.~3.4]{westphal_2019} for a proof. This shows that the optimal strategy among the deterministic ones converges, as model uncertainty increases, to a uniform diversification strategy $\pi^u$ with $\pi^u_t=\frac{h}{d}\ones$ for every $t\in[0,T]$. Hence, as uncertainty about the true drift parameter goes to infinity, investors split the proportion $h$ of their money more and more evenly among all risky assets.
	
	This approach has several drawbacks. Firstly, we can follow the ideas from Pflug et al.~\cite{pflug_pichler_wozabal_2012} in continuous time only for logarithmic utility and uncertainty sets $K$ that are balls in $p$-norm.
	Secondly, we have to restrict to the class of deterministic strategies to be able to use their methods. However, it is by no means clear in the first place that an optimal strategy to our problem should be a deterministic one. In fact, in many worst-case optimization problems it is even beneficial to use randomized strategies, see Delage et al.~\cite{delage_kuhn_wiesemann_2019}.
	And lastly, the above result does not yield an explicit solution to the robust optimization problem, it only gives asymptotic results for large levels of uncertainty.
	To overcome these problems we follow here a different approach that works for both power and logarithmic utility and that results in an explicit solution of the optimization problem.
\end{remark}

We study the case where the uncertainty set is an ellipsoid in $\R^d$ centered around the reference parameter $\nu$, i.e.\
\begin{equation}\label{eq:uncertainty_ellipsoid}
	K=\bigl\{ \mu\in\R^d \,\big|\, (\mu-\nu)^\transp \Gamma^{-1}(\mu-\nu) \leq \kappa^2 \bigr\}.
\end{equation}
Here, $\kappa>0$, $\nu\in\R^d$, and $\Gamma\in\R^{d\times d}$ is symmetric and positive definite. The matrix $\Gamma$ determines the shape of the ellipsoid, the value of $\kappa$ its size. Higher values of $\kappa$ correspond to more uncertainty about the true drift.

By means of $\Gamma$ we can model that some (linear combinations of) drifts are known at a higher degree of accuracy than others. A special case discussed in the literature is $\Gamma=\sigma\sigma^\transp$, see e.g.\ Biagini and P\i nar~\cite{biagini_pinar_2017}. But also different forms of $\Gamma$ can be motivated. For $\Gamma=I_d$ we simply get a ball in the Euclidean norm with radius $\kappa$ and center $\nu$. By setting $\Gamma$ equal to a diagonal matrix different from the identity we can give different weights to the uncertainty of the single asset drifts.

More generally, assume that the reference drift parameter $\nu$ is obtained as the value of an unbiased estimator $\hat{\mu}$ for the true drift, say from observing historical returns. Then the covariance matrix $\cov(\hat{\mu})$ is a reasonable choice for $\Gamma$, because then the uncertainty set $K$ constitutes a natural (asymptotic) confidence region for the true drift. This flexibility in the form of $\Gamma$ is especially useful for a generalization of our model to a setting with time-dependent drift and uncertainty sets, see Section~\ref{sec:outlook}, where we give a short outlook on Sass and Westphal~\cite{sass_westphal_2021}. In that follow-up work a time-dependent uncertainty set is constructed based on filtering techniques.

\subsection{Solution of the non-robust problem}\label{subs:solution_of_the_non-robust_problem}

To solve the optimization problem~\eqref{eq:robust_problem_power_log} we first address the non-robust constrained utility maximization problem under a fixed parameter $\mu\in\R^d$.
We repeatedly make use of a specific matrix that we introduce in the following lemma.

\begin{lemma}\label{lem:definition_D}
	Consider the matrix
	\[ D =
	\begin{pmatrix}
		1	&	& 0	& -1 \\
			&\ddots	&	&\vdots \\
		0	&	& 1	& -1
	\end{pmatrix}\in\R^{(d-1)\times d}. \]
	Then, given that $\sigma\in\R^{d\times m}$ has rank $d$, $D\sigma$ has rank $d-1$.
\end{lemma}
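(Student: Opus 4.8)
The plan is to exploit the very simple structure of $D$ together with the full-rank hypothesis on $\sigma$. First I would record what $D$ does: its $i$-th row is $e_i^\transp - e_d^\transp$ for $i=1,\dots,d-1$, so as a linear map $\R^d\to\R^{d-1}$ it sends $x=(x_1,\dots,x_d)^\transp$ to $(x_1-x_d,\dots,x_{d-1}-x_d)^\transp$. These $d-1$ rows are manifestly linearly independent, hence $D$ itself has rank $d-1$ and is surjective onto $\R^{d-1}$; equivalently its kernel is the line $\operatorname{span}(\ones)$, and its transpose $D^\transp$ has trivial kernel, i.e.\ full column rank $d-1$.

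The key observation is then that full rank $d$ of $\sigma\in\R^{d\times m}$ means $\sigma$, viewed as a map $\R^m\to\R^d$, is surjective: its rows span $\R^d$, or equivalently $\sigma\sigma^\transp$ is invertible, which is exactly the positive-definiteness already invoked to define $\theta(\mu)$. Composing the two maps, the image of $D\sigma\colon\R^m\to\R^{d-1}$ is $D(\sigma(\R^m))=D(\R^d)=\R^{d-1}$, so $D\sigma$ is surjective and therefore has rank $d-1$. The version I would actually write out argues dually via the left null space: suppose $y^\transp D\sigma=0$ for some $y\in\R^{d-1}$, and set $z=D^\transp y\in\R^d$, so that $z^\transp\sigma=0$. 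Thus $z$ lies in the left null space of $\sigma$, which is trivial because $\sigma$ has full row rank $d$; hence $z=D^\transp y=0$, and since $D^\transp$ has full column rank $d-1$ this forces $y=0$. The rows of $D\sigma$ are therefore linearly independent, giving $\operatorname{rank}(D\sigma)=d-1$. (One could equally quote Sylvester's rank inequality, $\operatorname{rank}(D\sigma)\ge\operatorname{rank}(D)+\operatorname{rank}(\sigma)-d=d-1$, together with the trivial bound $\operatorname{rank}(D\sigma)\le\operatorname{rank}(D)=d-1$.)

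There is essentially no hard step here; the whole content of the lemma is that left-multiplying a full-row-rank matrix by the rank-$(d-1)$ matrix $D$ cannot drop the rank below $d-1$. The only point requiring care is to use the full rank of $\sigma$ in the correct direction—namely that $\operatorname{rank}(\sigma)=d$ gives surjectivity onto $\R^d$ (equivalently a trivial left null space), rather than anything about its columns—so that the single dimension killed by $D$ is the only one lost.
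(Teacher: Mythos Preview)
Your proposal is correct and essentially matches the paper's own proof: the paper also shows directly that the rows of $D\sigma$ are linearly independent by writing the $i$-th row as $\sigma_{i,\cdot}-\sigma_{d,\cdot}$ and using the linear independence of the rows of $\sigma$. Your left-null-space argument is just the same computation in matrix notation (with $z=D^\transp y$ playing the role of the coefficient vector $(a_1,\dots,a_{d-1},-\sum a_i)$), and your additional remarks on surjectivity and Sylvester's inequality are pleasant alternatives but not needed.
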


The matrix $D$ defined in the lemma above comes up naturally in calculations when using the constraint $\langle \pi_t,\ones\rangle=h$ in the form $\pi^d_t = h-\sum_{i=1}^{d-1} \pi^i_t$. This can be seen as a reduction of the problem from $d$ dimensions to $d-1$ dimensions.
For better readability of the calculations below we introduce the following notation.

\begin{definition}\label{def:matrix_A_vector_c}
	We define the matrix $A\in\R^{d\times d}$ and the vector $c\in\R^d$ by
	\begin{align*}
		A &= D^\transp(D\sigma\sigma^\transp D^\transp)^{-1}D, \\
		c &= e_d-D^\transp(D\sigma\sigma^\transp D^\transp)^{-1}D\sigma\sigma^\transp e_d = (I_d-A\sigma\sigma^\transp)e_d,
	\end{align*}
	where $D\in\R^{(d-1)\times d}$ is as given in Lemma~\ref{lem:definition_D} and $e_d$ is the $d$-th standard unit vector in $\R^d$.
\end{definition}

Note that we assume $\sigma\in\R^{d\times m}$ to have full rank, hence by the previous lemma we know that $D\sigma$ has full rank, in particular $D\sigma\sigma^\transp D^\transp=D\sigma(D\sigma)^\transp$ is nonsingular.
Using this notation we give the optimal strategy for the constrained optimization problem given a fixed drift $\mu$.
The possible incompleteness of the market does not complicate our approach here. The reason is that, for determining the optimal strategy, we can essentially reduce the problem to an unconstrained less-dimensional financial market where the optimal strategy can be obtained as a classical Merton strategy.

\begin{proposition}\label{prop:optimal_strategy_non-robust}
	Let $\mu\in\R^d$. Then the optimal strategy for the optimization problem
	\[ \sup_{\pi\in\calA_h(x_0)} \E_\mu\bigl[U_\gamma(X^\pi_T)\bigr] \]
	is the strategy $(\pi_t)_{t\in[0,T]}$ with
	\[ \pi_t = \frac{1}{1-\gamma}A\mu +hc \]
	for all $t\in[0,T]$, with $A$ and $c$ as in Definition~\ref{def:matrix_A_vector_c}.
\end{proposition}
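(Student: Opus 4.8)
The plan is to use the constraint to collapse one degree of freedom, turning the constrained $d$-dimensional problem into an unconstrained $(d-1)$-dimensional one, and then to reduce the maximization of $\E_\mu[U_\gamma(X^\pi_T)]$ to the maximization of a single strictly concave quadratic function over an affine hyperplane. Concretely, the substitution $\pi^d_t=h-\sum_{i=1}^{d-1}\pi^i_t$ is exactly the reparametrization $\pi_t=h\,e_d+D^\transp\tilde\pi_t$ with a free control $\tilde\pi_t\in\R^{d-1}$: since the rows of $D$ are $e_i-e_d$ they span $\{\ones\}^\perp$, so $h\,e_d+D^\transp\tilde\pi_t$ ranges over all of $\{\pi:\langle\pi,\ones\rangle=h\}$ as $\tilde\pi_t$ ranges over $\R^{d-1}$, and this is a bijection between $\calA_h(x_0)$ and the admissible $\tilde\pi$. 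As a first step I would check that the candidate $\pi^*=\frac{1}{1-\gamma}A\mu+hc$ is admissible: it is constant, hence trivially square-integrable, and $\langle\pi^*,\ones\rangle=h$ because $\ones^\transp A=(D\ones)^\transp(D\sigma\sigma^\transp D^\transp)^{-1}D=0$ and $\ones^\transp c=\ones^\transp(I_d-A\sigma\sigma^\transp)e_d=1$, both using $D\ones=0$.

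Next I would pass to the objective. Itô's formula applied to $\log X^\pi$ gives
\[ \log X^\pi_T=\log x_0+\int_0^T\Bigl(r+\pi_t^\transp(\mu-r\ones)-\tfrac12\lVert\sigma^\transp\pi_t\rVert^2\Bigr)\rmd t+\int_0^T\pi_t^\transp\sigma\,\rmd W^\mu_t. \]
For $\gamma=0$ the stochastic integral is a true $\mathbb{P}^\mu$-martingale by the integrability built into $\calA_h(x_0)$, so $\E_\mu[\log X^\pi_T]=\log x_0+\E_\mu\bigl[\int_0^T f(\pi_t)\,\rmd t\bigr]$ with $f(\pi)=r+\pi^\transp(\mu-r\ones)-\tfrac12(1-\gamma)\pi^\transp\sigma\sigma^\transp\pi$. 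For power utility I would instead write
\[ (X^\pi_T)^\gamma=x_0^\gamma\,M_T\,\exp\Bigl(\gamma\int_0^T f(\pi_t)\,\rmd t\Bigr),\qquad M_T=\mathcal{E}\Bigl(\gamma\int\pi^\transp\sigma\,\rmd W^\mu\Bigr)_T, \]
where the Dol\'{e}ans--Dade exponential $M$ is a nonnegative local martingale, hence a supermartingale; the $-\tfrac12\gamma^2\lVert\sigma^\transp\pi_t\rVert^2$ correction absorbed into $M$ is precisely what turns $\tfrac12$ into $\tfrac12(1-\gamma)$ inside $f$. In every case the objective is thus a monotone transform of $\int_0^T f(\pi_t)\,\rmd t$.

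The core computation is then the finite-dimensional maximization of $f$ over $\{\langle\pi,\ones\rangle=h\}$. Because $1-\gamma>0$ and $\sigma\sigma^\transp$ is positive definite, $f$ is strictly concave; after substituting $\pi=h\,e_d+D^\transp\tilde\pi$ its quadratic part becomes $-\tfrac12(1-\gamma)\tilde\pi^\transp D\sigma\sigma^\transp D^\transp\tilde\pi$, strictly concave in $\tilde\pi$ since $D\sigma\sigma^\transp D^\transp$ is nonsingular by Lemma~\ref{lem:definition_D}. Setting the gradient in $\tilde\pi$ to zero yields the unique maximizer
\[ \tilde\pi^*=\tfrac{1}{1-\gamma}(D\sigma\sigma^\transp D^\transp)^{-1}D\mu-h\,(D\sigma\sigma^\transp D^\transp)^{-1}D\sigma\sigma^\transp e_d, \]
and transforming back via $\pi^*=h\,e_d+D^\transp\tilde\pi^*$ and using the definitions of $A$ and $c$ collapses this to exactly $\pi^*=\tfrac{1}{1-\gamma}A\mu+hc$. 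By construction $f(\pi_t)\le f(\pi^*)$ pointwise in $(t,\omega)$ for every admissible strategy.

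It remains to upgrade this pointwise bound to genuine optimality of the constant $\pi^*$, and this is where I expect the only real difficulty. For $\gamma=0$ it is immediate from the martingale identity above; for $\gamma\in(0,1)$ one estimates, using $M_T\ge0$, $f(\pi_t)\le f(\pi^*)$, $\gamma>0$, and $\E_\mu[M_T]\le1$,
\[ \E_\mu\bigl[(X^\pi_T)^\gamma\bigr]=x_0^\gamma\,\E_\mu\Bigl[M_T\exp\Bigl(\gamma\int_0^T f(\pi_t)\,\rmd t\Bigr)\Bigr]\le x_0^\gamma\,\rme^{\gamma T f(\pi^*)}, \]
with equality at $\pi^*$, where $M$ is a genuine martingale. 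The delicate case is $\gamma<0$, where $U_\gamma=\tfrac1\gamma(\cdot)^\gamma$ with $\tfrac1\gamma<0$, so optimality now requires the reverse inequality $\E_\mu[(X^\pi_T)^\gamma]\ge x_0^\gamma\rme^{\gamma T f(\pi^*)}$; the pointwise bound only gives $\exp(\gamma\int_0^T f(\pi_t)\,\rmd t)\ge\rme^{\gamma T f(\pi^*)}$, and the estimate goes through only once we know $\E_\mu[M_T]=1$, i.e.\ that $M$ is a \emph{true} martingale rather than a strict supermartingale. I would secure this either by building a Novikov-type integrability requirement into the admissible controls $\tilde\pi$, or, more robustly, by a verification argument for the reduced problem: the ansatz $V(t,x)=U_\gamma(x)\,\rme^{\gamma f(\pi^*)(T-t)}$ solves the associated HJB equation with the constant $\tilde\pi^*$ as pointwise maximizer, and a standard localization removes the integrability concern while confirming that no admissible strategy can beat $\pi^*$.
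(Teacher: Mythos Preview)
Your proposal is correct, including your identification of the $\gamma<0$ subtlety and the verification remedy. The paper takes a slightly different technical route: rather than maximizing $f$ pointwise and then controlling the Dol\'{e}ans--Dade exponential $M_T=\mathcal{E}(\gamma\int\pi^\transp\sigma\,\rmd W^\mu)_T$ via a supermartingale/verification argument, it first performs a Girsanov change of measure with the \emph{constant} density $\mathcal{E}(\gamma h\,e_d^\transp\sigma\cdot W^\mu)_T$ to strip off the fixed $he_d$-part of the strategy, thereby recasting the constrained problem as a genuine unconstrained $(d-1)$-dimensional Merton problem with parameters $\widetilde r,\widetilde\mu,\widetilde\sigma$, and then simply cites Merton's theorem for all $\gamma$ at once. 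The gain is that the Girsanov density has deterministic integrand and is automatically a true martingale, so the martingale-versus-supermartingale issue you flag for $\gamma<0$ is absorbed entirely into the cited classical result and never has to be handled explicitly. Your route is more self-contained (no change of measure, no external citation) at the price of the extra verification step; the reduced optimizer $\widetilde\pi^*$ and the back-transformation $\pi^*=he_d+D^\transp\widetilde\pi^*=\tfrac{1}{1-\gamma}A\mu+hc$ are identical in both approaches.
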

      
In the proof the $d$-dimensional constrained problem is reduced to a $(d-1)$-dimensional unconstrained problem. Using the form of the optimal strategy in the $(d-1)$-dimensional market which is known from Merton~\cite{merton_1969} yields the following representation for the optimal expected utility from terminal wealth.

\begin{corollary}\label{cor:optimal_utility_non-robust}
	Let $\mu\in\R^d$. Then the optimal expected utility from terminal wealth is
	\begin{equation*}
		\begin{aligned}
			\sup_{\pi\in\calA_h(x_0)} &\E_\mu\bigl[U_\gamma(X^\pi_T)\bigr] \\
			&=
			\begin{dcases}
				\frac{x_0^\gamma}{\gamma}\exp\Bigl(\gamma T\Bigl( \widetilde{r}+\frac{1}{2(1-\gamma)}\bigl(\widetilde{\mu}-\widetilde{r}\mathbf{1}_{d-1}\bigr)^\transp(\widetilde{\sigma}\widetilde{\sigma}^\transp )^{-1}\bigl(\widetilde{\mu}-\widetilde{r}\mathbf{1}_{d-1}\bigr)\Bigr)\Bigr), &\gamma\neq 0,\\
				\log(x_0) + \Bigl( \widetilde{r}+\frac{1}{2}\bigl(\widetilde{\mu}-\widetilde{r}\mathbf{1}_{d-1}\bigr)^\transp(\widetilde{\sigma}\widetilde{\sigma}^\transp )^{-1}\bigl(\widetilde{\mu}-\widetilde{r}\mathbf{1}_{d-1}\bigr) \Bigr)T, &\gamma=0,
			\end{dcases}
		\end{aligned}
	\end{equation*}
	where
	\begin{equation}\label{eq:recall_substitution_r_mu_sigma}
		\begin{aligned}
			\widetilde{\sigma}&=D\sigma, \\
			\widetilde{r}&=(1-h)r+he_d^\transp\mu-\frac{1}{2}(1-\gamma)\lVert h\sigma^\transp e_d \rVert^2, \\
			\widetilde{\mu}&=D\mu - h(1-\gamma)D\sigma\sigma^\transp e_d+\widetilde{r}\mathbf{1}_{d-1}.
		\end{aligned}
	\end{equation}
\end{corollary}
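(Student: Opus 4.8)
The plan is to use that, by Proposition~\ref{prop:optimal_strategy_non-robust}, the maximiser is \emph{constant} in time, so that $X^\pi_T$ is lognormal and the objective can be written out explicitly; the substitutions~\eqref{eq:recall_substitution_r_mu_sigma} will then emerge automatically from recasting the constrained $d$-dimensional problem as an unconstrained $(d-1)$-dimensional Merton problem. To eliminate the constraint, note that $D$ from Lemma~\ref{lem:definition_D} has the block form $(I_{d-1}\mid-\mathbf{1}_{d-1})$, so that writing $\pi=D^\transp\widetilde{\pi}+he_d$ with a free vector $\widetilde{\pi}\in\R^{d-1}$ sets $\pi^i=\widetilde{\pi}^i$ for $i<d$ and $\pi^d=h-\sum_{i<d}\widetilde{\pi}^i$; this parametrises exactly the constant strategies with $\langle\pi,\ones\rangle=h$. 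The full column rank $d-1$ of $D\sigma$ guarantees that $\widetilde{\sigma}\widetilde{\sigma}^\transp=D\sigma\sigma^\transp D^\transp$ is invertible, so the reduced market is a well-posed (possibly incomplete) Black--Scholes market with $d-1$ risky assets.

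Next I would evaluate the objective for a constant $\pi$. Applying It\^o's formula to $\log X^\pi_t$ under $\mathbb{P}^\mu$, together with $\pi^\transp(\mu-r\ones)=\pi^\transp\mu-rh$, gives
\[ \log X^\pi_T=\log x_0+\bigl(r(1-h)+\pi^\transp\mu-\tfrac12\lVert\sigma^\transp\pi\rVert^2\bigr)T+\pi^\transp\sigma\,W^\mu_T. \]
Since $\pi^\transp\sigma\,W^\mu_T$ is centred Gaussian with variance $T\lVert\sigma^\transp\pi\rVert^2$ under $\mathbb{P}^\mu$, computing the relevant moment generating function yields $\E_\mu[U_\gamma(X^\pi_T)]=\tfrac{x_0^\gamma}{\gamma}\exp(\gamma T\,g(\pi))$ for $\gamma\neq0$ and $\E_\mu[\log X^\pi_T]=\log x_0+T\,g(\pi)$ for $\gamma=0$, with the common exponent $g(\pi)=r(1-h)+\pi^\transp\mu-\tfrac12(1-\gamma)\lVert\sigma^\transp\pi\rVert^2$. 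Irrespective of the sign of $\gamma$, maximising expected utility over $\calA_h(x_0)$ therefore reduces to maximising the concave quadratic $g$ over the affine constraint set.

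Finally I would substitute $\pi=D^\transp\widetilde{\pi}+he_d$ into $g$ and collect terms in $\widetilde{\pi}$: the constant part becomes $\widetilde{r}$, the linear coefficient becomes $\widetilde{\mu}-\widetilde{r}\mathbf{1}_{d-1}$, and the quadratic part is $-\tfrac12(1-\gamma)\widetilde{\pi}^\transp\widetilde{\sigma}\widetilde{\sigma}^\transp\widetilde{\pi}$, so that $g=\widetilde{r}+\widetilde{\pi}^\transp(\widetilde{\mu}-\widetilde{r}\mathbf{1}_{d-1})-\tfrac12(1-\gamma)\widetilde{\pi}^\transp\widetilde{\sigma}\widetilde{\sigma}^\transp\widetilde{\pi}$. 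This is precisely the exponent appearing in Merton's value function for the unconstrained $(d-1)$-asset problem with rate $\widetilde{r}$, drift $\widetilde{\mu}$ and volatility $\widetilde{\sigma}$; its unconstrained maximiser $\widetilde{\pi}=\tfrac{1}{1-\gamma}(\widetilde{\sigma}\widetilde{\sigma}^\transp)^{-1}(\widetilde{\mu}-\widetilde{r}\mathbf{1}_{d-1})$ collapses the two quadratic contributions, leaving $g=\widetilde{r}+\tfrac{1}{2(1-\gamma)}(\widetilde{\mu}-\widetilde{r}\mathbf{1}_{d-1})^\transp(\widetilde{\sigma}\widetilde{\sigma}^\transp)^{-1}(\widetilde{\mu}-\widetilde{r}\mathbf{1}_{d-1})$. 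Inserting this maximal value of $g$ into the closed-form objective from the previous step gives the two claimed expressions.

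I expect the only genuine obstacle to be book-keeping: verifying that the constant and linear coefficients produced by the substitution coincide with the definitions of $\widetilde{r}$ and $\widetilde{\mu}$ in~\eqref{eq:recall_substitution_r_mu_sigma}. In particular one must track the cross term $h\,\widetilde{\pi}^\transp D\sigma\sigma^\transp e_d$ arising when expanding $\lVert\sigma^\transp(D^\transp\widetilde{\pi}+he_d)\rVert^2$, since this is exactly what forces the $-h(1-\gamma)D\sigma\sigma^\transp e_d$ correction in $\widetilde{\mu}$ and the $-\tfrac12(1-\gamma)\lVert h\sigma^\transp e_d\rVert^2$ correction in $\widetilde{r}$. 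As a consistency check, the reduced maximiser pulls back via $D^\transp\widetilde{\pi}+he_d=\tfrac{1}{1-\gamma}A\mu+hc$, reproducing the optimal strategy of Proposition~\ref{prop:optimal_strategy_non-robust}; alternatively one may skip the re-optimisation entirely and simply insert that strategy into $g$.
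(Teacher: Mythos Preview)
Your argument is correct and lands on the same reduction as the paper: parametrise $\pi=D^\transp\widetilde{\pi}+he_d$, recognise the objective as that of an unconstrained $(d-1)$-dimensional Merton problem with data $(\widetilde{r},\widetilde{\mu},\widetilde{\sigma})$, and read off the value function. The algebraic identifications you flag (the cross term $h\widetilde{\pi}^\transp D\sigma\sigma^\transp e_d$ producing the corrections in $\widetilde{r}$ and $\widetilde{\mu}$) are exactly right.

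There is one genuine difference in execution. The paper carries out the reduction for \emph{arbitrary} admissible $\pi$ (not just constants) and, for $\gamma\neq 0$, uses a Girsanov change of measure to absorb the $he_d^\transp\sigma\,\rmd W^\mu$ piece of the stochastic integral; the representations~\eqref{eq:reduction_of_utility_to_d-1} and~\eqref{eq:reduction_of_utility_to_d-1_log} are then literally the expected utility in a reduced market, and the corollary follows by quoting Merton's value function. You instead invoke Proposition~\ref{prop:optimal_strategy_non-robust} up front to restrict to constant $\pi$, which makes $X^\pi_T$ lognormal and lets you compute $\E_\mu[(X^\pi_T)^\gamma]$ by a Gaussian moment-generating-function calculation, bypassing the measure change entirely. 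Your route is a little more elementary for the corollary itself, at the cost of relying on the proposition as a black box; the paper's route is self-contained in that the reduction simultaneously proves the proposition and the corollary without first assuming constancy of the optimiser.
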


The previous results give a representation of the optimal strategy and the optimal expected utility of terminal wealth under the constraint $\langle\pi_t,\ones\rangle = h$, given that the drift parameter $\mu$ is known. Of course, both the strategy and the terminal wealth then depend on $\mu$. However, we aim at solving the robust utility maximization problem
\[ \adjustlimits \sup_{\pi\in\calA_h(x_0)} \inf_{\mu\in K} \E_\mu\bigl[U_\gamma(X^\pi_T)\bigr]. \]
For that purpose, we address in a next step the question what the worst possible parameter $\mu$ would be for the investor, given that she reacts optimally, i.e.\ by applying the strategy from Proposition~\ref{prop:optimal_strategy_non-robust}. This corresponds to solving the dual problem
\[ \adjustlimits \inf_{\mu\in K} \sup_{\pi\in\calA_h(x_0)} \E_\mu\bigl[U_\gamma(X^\pi_T)\bigr]. \]
Note here that we do not know yet whether equality holds between our original problem and the corresponding dual problem. In general the solution of the dual problem may not be of great help. In the following, after deriving the solution to the dual problem, we prove a minimax theorem that establishes the desired equality. Results from the literature, e.g.\ from Quenez~\cite{quenez_2004}, do not directly carry over to our setting as we discuss in Remark~\ref{rem:minimax_theorems_in_literature} below.

\subsection{The worst-case parameter}\label{subs:the_worst-case_parameter}

From Corollary~\ref{cor:optimal_utility_non-robust} we have a representation of the optimal expected utility of terminal wealth, depending on the transformed parameters $\widetilde{r}$, $\widetilde{\mu}$ and $\widetilde{\sigma}$. Note that for any $\gamma\in(-\infty,1)$, minimizing this expression in $\mu$ is equivalent to minimizing
\[ \widetilde{r}+\frac{1}{2(1-\gamma)}\bigl(\widetilde{\mu}-\widetilde{r}\mathbf{1}_{d-1}\bigr)^\transp(\widetilde{\sigma}\widetilde{\sigma}^\transp )^{-1}\bigl(\widetilde{\mu}-\widetilde{r}\mathbf{1}_{d-1}\bigr). \]
We now plug in the representations of $\widetilde{r}$, $\widetilde{\mu}$ and $\widetilde{\sigma}$ from the corollary and obtain
\begin{multline}
	(1-h)r+he_d^\transp\mu-\frac{1}{2}(1-\gamma)\lVert h\sigma^\transp e_d \rVert^2\\
	+\frac{1}{2(1-\gamma)}\bigl(D\mu - h(1-\gamma)D\sigma\sigma^\transp e_d\bigr)^\transp(D\sigma\sigma^\transp D^\transp)^{-1}\bigl(D\mu - h(1-\gamma)D\sigma\sigma^\transp e_d\bigr).
\end{multline}
Our aim is to minimize the above expression in $\mu$. We see that many terms do not depend on $\mu$. The minimization is therefore equivalent to the minimization of
\begin{equation}\label{eq:what_is_to_be_minimized_in_mu}
	\begin{aligned}
		&he_d^\transp\mu+\frac{1}{2(1-\gamma)}\Bigl( \mu^\transp\! D^\transp(D\sigma\sigma^\transp\! D^\transp)^{-1}D\mu -2h(1-\gamma)(D\sigma\sigma^\transp\! e_d)^\transp(D\sigma\sigma^\transp\! D^\transp)^{-1}D\mu\Bigr)\\
		&=\frac{1}{2(1-\gamma)}\mu^\transp D^\transp(D\sigma\sigma^\transp D^\transp)^{-1}D\mu + h\Bigl( e_d^\transp\mu-(D\sigma\sigma^\transp e_d)^\transp(D\sigma\sigma^\transp D^\transp)^{-1}D\mu \Bigr) \\
		&= \frac{1}{2(1-\gamma)}\mu^\transp A\mu+hc^\transp\mu
	\end{aligned}
\end{equation}
on the ellipsoid $K$, where $A$ and $c$ were introduced in Definition~\ref{def:matrix_A_vector_c}.
To make this minimization problem easier, we apply a transformation to the elements $\mu\in K$. For that purpose, note that since $\Gamma\in\R^{d\times d}$ is assumed to be symmetric and positive definite, there exists some nonsingular matrix $\tau\in\R^{d\times d}$ such that $\Gamma=\tau\tau^\transp$. The matrix $\tau$ can be obtained for example by the Cholesky decomposition. Then we can rewrite the constraint $(\mu-\nu)^\transp \Gamma^{-1}(\mu-\nu) \leq \kappa^2$ as
\[ \kappa^2\geq (\mu-\nu)^\transp(\tau\tau^\transp)^{-1}(\mu-\nu)=(\mu-\nu)^\transp(\tau^\transp)^{-1}\tau^{-1}(\mu-\nu)=\bigl(\tau^{-1}(\mu-\nu)\bigr)^\transp\bigl(\tau^{-1}(\mu-\nu)\bigr). \]
Hence, for an arbitrary $\mu\in K$ we define $\rho:=\tau^{-1}(\mu-\nu)$ so that $\mu=\nu+\tau\rho$ and $\lVert\rho\rVert\leq\kappa$. We can then rewrite~\eqref{eq:what_is_to_be_minimized_in_mu} as
\begin{equation*}
	\begin{aligned}
		\frac{1}{2(1-\gamma)}\mu^\transp A\mu&+hc^\transp\mu
		= \frac{1}{2(1-\gamma)}\bigl((\tau\rho)^\transp A\tau\rho+2\nu^\transp A\tau\rho+\nu^\transp A\nu\bigr) +hc^\transp\tau\rho +hc^\transp\nu \\
		&= \frac{1}{2(1-\gamma)}\rho^\transp\tau^\transp A\tau\rho+\Bigl(\frac{1}{1-\gamma}A\nu+hc\Bigr)^\transp\tau\rho+\frac{1}{2(1-\gamma)}\nu^\transp A\nu +hc^\transp\nu.
	\end{aligned}
\end{equation*}
Minimizing~\eqref{eq:what_is_to_be_minimized_in_mu} in $\mu\in K$ is therefore equivalent to minimizing the function $g\colon B_\kappa(0)\to\R$ with
\[ g(\rho)=\frac{1}{2(1-\gamma)}\rho^\transp \tau^\transp A\tau\rho+\Bigl(hc+\frac{1}{1-\gamma}A\nu\Bigr)^\transp\tau\rho \]
in $\rho$ and then setting $\mu=\nu+\tau\rho$.
The behavior of $g$ is determined to a large extent by the matrix $A$ from Definition~\ref{def:matrix_A_vector_c}. So we analyze properties of $A$ next.

\begin{lemma}\label{lem:A_symmetric_positive_semidefinite}
	The matrix $A$ is symmetric and positive semidefinite with $\mathrm{ker}(A)=\mathrm{span}(\{\ones\})$.
\end{lemma}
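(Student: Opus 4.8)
The plan is to read off all three properties directly from the factored form of $A$, using only the rank information supplied by Lemma~\ref{lem:definition_D}. Write $M=D\sigma\sigma^\transp D^\transp$. Since $\sigma$ has full rank $d$, the matrix $\sigma\sigma^\transp$ is symmetric and positive definite; because $D\sigma$ has rank $d-1$ by Lemma~\ref{lem:definition_D}, the $(d-1)\times(d-1)$ matrix $M=(D\sigma)(D\sigma)^\transp$ is symmetric and positive definite, hence invertible with $M^{-1}$ again symmetric and positive definite. Symmetry of $A=D^\transp M^{-1}D$ is then immediate, since $A^\transp=D^\transp(M^{-1})^\transp D=D^\transp M^{-1}D=A$.

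For positive semidefiniteness and for the kernel I would work with the associated quadratic form. For any $x\in\R^d$ one has
\[ x^\transp A x=(Dx)^\transp M^{-1}(Dx)\geq 0, \]
because $M^{-1}$ is positive definite, which already gives positive semidefiniteness. The same identity pins down the kernel: if $Ax=0$ then $x^\transp Ax=0$, so $(Dx)^\transp M^{-1}(Dx)=0$, and positive definiteness of $M^{-1}$ forces $Dx=0$; conversely $Dx=0$ trivially yields $Ax=D^\transp M^{-1}Dx=0$. Hence $\mathrm{ker}(A)=\mathrm{ker}(D)$.

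It then remains only to identify $\mathrm{ker}(D)$ explicitly. Reading off the rows of $D$, the equation $Dx=0$ says $x_i-x_d=0$ for $i=1,\dots,d-1$, i.e.\ $x_1=\dots=x_d$, so that $x$ is a scalar multiple of $\ones$. Thus $\mathrm{ker}(D)=\mathrm{span}(\{\ones\})$, and combining with the previous step gives $\mathrm{ker}(A)=\mathrm{span}(\{\ones\})$, which completes all three claims.

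I do not expect a genuinely hard step here; the whole argument is a short consequence of the factorization plus Lemma~\ref{lem:definition_D}. The one point that needs care — and which I would flag as the crux — is that positive semidefiniteness by itself is not enough to determine the kernel: the passage from $x^\transp Ax=0$ to $Dx=0$ relies essentially on the \emph{strict} positive definiteness of $M^{-1}$, which in turn hinges on $D\sigma$ having full rank $d-1$. Everything else is routine bookkeeping.
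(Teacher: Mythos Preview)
Your proof is correct and follows essentially the same route as the paper: both factor $A=D^\transp M^{-1}D$ with $M=(D\sigma)(D\sigma)^\transp$, use Lemma~\ref{lem:definition_D} to obtain positive definiteness of $M$, and then read off symmetry, positive semidefiniteness via the quadratic form $x^\transp Ax=(Dx)^\transp M^{-1}(Dx)$, and finally $\ker(A)=\ker(D)=\mathrm{span}(\{\ones\})$. The only cosmetic difference is that the paper deduces $Ax=0\Rightarrow Dx=0$ from injectivity of $D^\transp$ (i.e.\ $\ker(D^\transp)=\{0\}$) together with invertibility of $M^{-1}$, whereas you argue through the vanishing quadratic form and strict positive definiteness of $M^{-1}$; both are equally valid one-line steps.
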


We immediately deduce that also $\tau^\transp A\tau\in\R^{d\times d}$ is symmetric and positive semidefinite with $\mathrm{ker}(\tau^\transp A\tau)=\mathrm{span}(\{\tau^{-1}\ones\})$. Having collected these properties of the matrix $A$ and of $\tau^\transp A\tau$ enables us to find the parameter $\rho$ that minimizes $g(\rho)$ on the set $B_\kappa(0)$.

\begin{lemma}\label{lem:worst-case_parameter}
	Let $0=\lambda_1<\lambda_2\leq\cdots\leq\lambda_d$ denote the eigenvalues of $\tau^\transp A\tau$, and let further $v_1=\frac{1}{\lVert \tau^{-1}\ones \rVert}\tau^{-1}\ones, v_2,\dots,v_d\in\R^d$ denote the respective orthogonal eigenvectors with $\lVert v_i\rVert=1$ for all $i=1,\dots, d$.
	Then the minimum of the function $g\colon B_\kappa(0)\to\R$ with
	\[ g(\rho)=\frac{1}{2(1-\gamma)}\rho^\transp \tau^\transp A\tau\rho+\Bigl(hc+\frac{1}{1-\gamma}A\nu\Bigr)^\transp\tau\rho \]
	on the domain $B_\kappa(0)=\{ \rho\in\R^d \,|\, \lVert \rho\rVert \leq \kappa \}$ is attained by the vector
	\[ \rho^*=-\sum_{i=1}^d \biggl({\frac{\lambda_i}{1-\gamma}+\frac{h}{\psi(\kappa)\lVert \tau^{-1}\ones \rVert}}\biggr)^{-1}\biggl\langle h\tau^\transp c+\frac{\lambda_i}{1-\gamma}\tau^{-1}\nu, v_i\biggr\rangle v_i, \]
	where $\psi(\kappa)\in(0,\kappa]$ is uniquely determined by $\lVert\rho^*\rVert=\kappa$.
\end{lemma}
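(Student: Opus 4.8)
The plan is to treat the minimization of $g$ on $B_\kappa(0)$ as a convex trust-region subproblem. Writing $M=\tau^\transp A\tau$ and $b=h\tau^\transp c+\frac{1}{1-\gamma}\tau^\transp A\nu$, we have $g(\rho)=\frac{1}{2(1-\gamma)}\rho^\transp M\rho+b^\transp\rho$ with gradient $\nabla g(\rho)=\frac{1}{1-\gamma}M\rho+b$. Since $\gamma<1$ and, by the remark following Lemma~\ref{lem:A_symmetric_positive_semidefinite}, $M$ is symmetric positive semidefinite, the function $g$ is convex; as $B_\kappa(0)$ is compact and convex, a global minimizer exists, and the KKT conditions are both necessary and sufficient for it.

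First I would show that the minimizer lies on the boundary sphere $\lVert\rho\rVert=\kappa$. An interior minimizer would have to satisfy $\frac{1}{1-\gamma}M\rho=-b$, which is solvable only if $b$ lies in the range of $M$, i.e.\ is orthogonal to $\ker(M)=\mathrm{span}(\{v_1\})$. But $\langle b,v_1\rangle=h\langle\tau^\transp c,v_1\rangle=h\langle c,\tau v_1\rangle=\frac{h}{\lVert\tau^{-1}\ones\rVert}\langle c,\ones\rangle$, and a short computation using $c=(I_d-A\sigma\sigma^\transp)e_d$ together with $A\ones=0$ (Lemma~\ref{lem:A_symmetric_positive_semidefinite}) gives $\langle c,\ones\rangle=\ones^\transp e_d=1$. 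Hence $\langle b,v_1\rangle=\frac{h}{\lVert\tau^{-1}\ones\rVert}\neq0$, so no interior stationary point exists and the constraint is active.

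Next I would introduce a Lagrange multiplier $\beta\geq0$ for $\lVert\rho\rVert^2\leq\kappa^2$ and write the stationarity condition $\bigl(\frac{1}{1-\gamma}M+\beta I\bigr)\rho=-b$. For $\beta>0$ the matrix $\frac{1}{1-\gamma}M+\beta I$ is positive definite, hence invertible; expanding in the orthonormal eigenbasis $v_1,\dots,v_d$ of $M$ yields the componentwise formula $\langle\rho,v_i\rangle=-\bigl(\frac{\lambda_i}{1-\gamma}+\beta\bigr)^{-1}\langle b,v_i\rangle$. To match the stated form I would use the identity $\langle\tau^\transp A\nu,v_i\rangle=\lambda_i\langle\tau^{-1}\nu,v_i\rangle$, which follows from $M v_i=\lambda_i v_i$ and $(\tau^{-1})^\transp\tau^\transp=I_d$, to rewrite $\langle b,v_i\rangle=\langle h\tau^\transp c+\frac{\lambda_i}{1-\gamma}\tau^{-1}\nu,v_i\rangle$.

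Finally I would fix $\beta$ through the active constraint. The map $\beta\mapsto\lVert\rho(\beta)\rVert^2=\sum_i\bigl(\frac{\lambda_i}{1-\gamma}+\beta\bigr)^{-2}\langle b,v_i\rangle^2$ is continuous and strictly decreasing on $(0,\infty)$; since $\lambda_1=0$ and $\langle b,v_1\rangle\neq0$ it tends to $+\infty$ as $\beta\downarrow0$ and to $0$ as $\beta\to\infty$, so there is a unique $\beta^*>0$ with $\lVert\rho(\beta^*)\rVert=\kappa$. Setting $\psi(\kappa):=\frac{h}{\beta^*\lVert\tau^{-1}\ones\rVert}$ reproduces the displayed $\rho^*$, and because its $v_1$-component equals $-\langle b,v_1\rangle/\beta^*=-\psi(\kappa)$ we obtain $\psi(\kappa)=\lvert\langle\rho^*,v_1\rangle\rvert\leq\lVert\rho^*\rVert=\kappa$, giving $\psi(\kappa)\in(0,\kappa]$. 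Convexity then guarantees that this KKT point is the global minimizer. The main obstacle is the degeneracy caused by $\ker(M)$: it is precisely what forces the constraint to be active and the multiplier to be strictly positive, and one must track the $v_1$-direction carefully both to establish attainment on the boundary and to identify $\psi(\kappa)$.
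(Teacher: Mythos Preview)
Your argument is correct. Both you and the paper reduce to the orthonormal eigenbasis $v_1,\dots,v_d$ of $\tau^\transp A\tau$, show the minimizer lies on the sphere via the nonzero $v_1$-component of the linear term, and pin down the remaining scalar by a monotonicity argument. The difference is one of packaging: the paper eliminates the constraint by substituting $a_1=-\sqrt{\kappa^2-\sum_{i\geq2}a_i^2}$, minimizes the resulting function $\tilde g(a_2,\dots,a_d)$ directly, and verifies global optimality by checking that its Hessian is positive definite; you instead keep the constraint, introduce a Lagrange multiplier $\beta$, and invoke convexity so that the KKT conditions are sufficient. Your route is arguably cleaner---the Lagrange multiplier $\beta$ is exactly $\frac{h}{\psi(\kappa)\lVert\tau^{-1}\ones\rVert}$ in the paper's notation, and convexity spares you the explicit Hessian computation---while the paper's substitution approach is more self-contained and avoids any appeal to KKT theory. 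Either way the key observation is the same: the degenerate direction $v_1$ both forces the constraint to be active and furnishes the parameter $\psi(\kappa)$.
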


Note that $\psi(\kappa)$ in the above lemma is the unique value in $(0,\kappa]$ that makes $\rho^*$ lie on the boundary of $B_\kappa(0)$. In the representation $\rho^*=\sum_{i=1}^d a_iv_i$ it holds $a_1=-\psi(\kappa)$, i.e.\ $\psi(\kappa)$ is the negative of the coefficient belonging to $v_1$. Recall that $v_1$ is the eigenvector to eigenvalue zero of $\tau^\transp A\tau$, hence it plays an important role in the minimization of the function $g$ above.
In Section~\ref{cha:asymptotic_behavior_as_uncertainty_increases} we will study the asymptotic behavior for large uncertainty $\kappa$. It will turn out that asymptotically $v_1$ will be the dominant component in the representation $\rho^*=\sum_{i=1}^d a_iv_i$, a claim that we show by analyzing the asymptotic behavior of $\psi(\kappa)$.
The previous lemma now yields the solution of the dual problem to our original optimization problem.

\begin{theorem}\label{thm:solution_of_the_inf_sup_problem}
	Let $0=\lambda_1<\lambda_2\leq\cdots\leq\lambda_d$ denote the eigenvalues of $\tau^\transp A\tau$, and let further $v_1=\frac{1}{\lVert \tau^{-1}\ones \rVert}\tau^{-1}\ones, v_2,\dots,v_d\in\R^d$ denote the respective orthogonal eigenvectors with $\lVert v_i\rVert=1$ for all $i=1,\dots, d$.
	Then
	\[ \adjustlimits \inf_{\mu\in K} \sup_{\pi\in\calA_h(x_0)} \E_\mu\bigl[U_\gamma(X^\pi_T)\bigr] = \E_{\mu^*}\bigl[U_\gamma(X^{\pi^*}_T)\bigr], \]
	where
	\[ \mu^*=\nu-\tau\sum_{i=1}^d \biggl( \frac{\lambda_i}{1-\gamma}+\frac{h}{\psi(\kappa)\lVert \tau^{-1}\ones \rVert} \biggr)^{-1}\biggl\langle h\tau^\transp c+\frac{\lambda_i}{1-\gamma}\tau^{-1}\nu, v_i\biggr\rangle v_i \]
	for $\psi(\kappa)\in(0,\kappa]$ that is uniquely determined by $\lVert\tau^{-1}(\mu^*-\nu)\rVert=\kappa$, and where $(\pi^*_t)_{t\in[0,T]}$ is for all $t\in[0,T]$ defined by
	\[ \pi^*_t = \frac{1}{1-\gamma}A\mu^* +hc. \]
\end{theorem}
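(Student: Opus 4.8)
The plan is to exploit the fact that, once the inner supremum has been evaluated for each fixed drift, the dual problem collapses to a purely deterministic minimization over the ellipsoid $K$ that has effectively already been carried out in the discussion preceding the statement. For each fixed $\mu\in\R^d$, Proposition~\ref{prop:optimal_strategy_non-robust} identifies the maximizer of the inner problem as $\frac{1}{1-\gamma}A\mu+hc$, and Corollary~\ref{cor:optimal_utility_non-robust} records the corresponding value $\sup_{\pi\in\calA_h(x_0)}\E_\mu[U_\gamma(X^\pi_T)]$ in closed form. Substituting this into $\inf_{\mu\in K}\sup_{\pi\in\calA_h(x_0)}\E_\mu[U_\gamma(X^\pi_T)]$ turns the dual problem into the finite-dimensional minimization of that closed-form expression over $\mu\in K$, so the whole theorem reduces to locating this deterministic minimizer and reading off the associated strategy.

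First I would dispose of the utility-dependent outer layer by checking a monotonicity statement uniformly in $\gamma$. For $\gamma\in(0,1)$ the prefactor $x_0^\gamma/\gamma$ is positive and $t\mapsto\exp(\gamma T t)$ is increasing; for $\gamma<0$ the prefactor is negative while $t\mapsto\exp(\gamma T t)$ is decreasing, so their product is again increasing in $t$; and for $\gamma=0$ the value is affine and increasing in the quadratic term. In every case minimizing the optimal utility over $\mu$ is therefore equivalent to minimizing $\widetilde{r}+\frac{1}{2(1-\gamma)}\bigl(\widetilde{\mu}-\widetilde{r}\mathbf{1}_{d-1}\bigr)^\transp(\widetilde{\sigma}\widetilde{\sigma}^\transp)^{-1}\bigl(\widetilde{\mu}-\widetilde{r}\mathbf{1}_{d-1}\bigr)$. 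Inserting the substitutions~\eqref{eq:recall_substitution_r_mu_sigma} and discarding all terms independent of $\mu$ reduces this, exactly as in~\eqref{eq:what_is_to_be_minimized_in_mu}, to minimizing $\frac{1}{2(1-\gamma)}\mu^\transp A\mu+hc^\transp\mu$ over $K$.

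Next I would pass to the transformed variable $\rho=\tau^{-1}(\mu-\nu)$, under which the ellipsoidal constraint becomes $\lVert\rho\rVert\leq\kappa$ and the objective becomes precisely the function $g$ of Lemma~\ref{lem:worst-case_parameter}. That lemma supplies the minimizer $\rho^*$ in explicit form, and setting $\mu^*=\nu+\tau\rho^*$ reproduces term by term the expression stated in the theorem, with the same $\psi(\kappa)$ characterized by $\lVert\rho^*\rVert=\kappa$, i.e.\ by $\lVert\tau^{-1}(\mu^*-\nu)\rVert=\kappa$. Since $\lVert\rho^*\rVert=\kappa$, the point $\mu^*$ lies on the boundary of $K$ and is thus admissible for the outer infimum. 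Finally, evaluating Proposition~\ref{prop:optimal_strategy_non-robust} at the drift $\mu^*$ identifies the inner maximizer as $\pi^*_t=\frac{1}{1-\gamma}A\mu^*+hc$, whence the dual value equals $\sup_{\pi\in\calA_h(x_0)}\E_{\mu^*}[U_\gamma(X^\pi_T)]=\E_{\mu^*}[U_\gamma(X^{\pi^*}_T)]$, which is the assertion.

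I do not expect a genuine analytic obstacle at this stage, since the theorem is essentially an assembly of the preceding results. The real difficulty has been quarantined into Lemma~\ref{lem:worst-case_parameter}: minimizing the degenerate, merely positive-semidefinite quadratic $g$ over a ball, using that the objective decreases without bound along the kernel direction $v_1=\tau^{-1}\ones/\lVert\tau^{-1}\ones\rVert$ (this is what forces the minimizer onto the sphere and gives rise to the implicitly defined multiplier $\psi(\kappa)$). The only points in the present proof that warrant care are the uniform-in-$\gamma$ monotonicity reduction of the first step, where the sign of the power-utility prefactor must be tracked correctly across $\gamma>0$, $\gamma<0$, and $\gamma=0$, and the routine verification that $\nu+\tau\rho^*$ indeed matches the stated formula for $\mu^*$.
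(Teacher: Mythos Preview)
Your proposal is correct and follows essentially the same route as the paper's own proof: invoke Proposition~\ref{prop:optimal_strategy_non-robust} and Corollary~\ref{cor:optimal_utility_non-robust} to reduce the dual problem to minimizing $\frac{1}{2(1-\gamma)}\mu^\transp A\mu+hc^\transp\mu$ over $K$, change variables to $\rho=\tau^{-1}(\mu-\nu)$, apply Lemma~\ref{lem:worst-case_parameter}, and then read off $\pi^*$ from Proposition~\ref{prop:optimal_strategy_non-robust} at $\mu^*$. Your explicit treatment of the monotonicity-in-$\gamma$ reduction is slightly more detailed than the paper's (which handles that step in the discussion preceding the theorem rather than in the proof itself), but the argument is otherwise identical.
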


\begin{remark}\label{rem:minimax_theorems_in_literature}
	The preceding theorem solves the problem
	\begin{equation}\label{eq:the_inf_sup_problem}
		\adjustlimits \inf_{\mu\in K} \sup_{\pi\in\calA_h(x_0)} \E_\mu\bigl[U_\gamma(X^\pi_T)\bigr].
	\end{equation}
	This is the corresponding dual problem to our original optimization problem
	\begin{equation}\label{eq:the_sup_inf_problem}
		\adjustlimits \sup_{\pi\in\calA_h(x_0)} \inf_{\mu\in K} \E_\mu\bigl[U_\gamma(X^\pi_T)\bigr],
	\end{equation}
	but in general the values of these two problems do not coincide. There are, of course, special cases in which the supremum and the infimum do interchange. Those results are called \emph{minimax theorems} in the literature. In a portfolio optimization setting that is similar to ours a minimax theorem has been shown in Quenez~\cite{quenez_2004}. Here, the author applies classical techniques from Kramkov and Schachermayer~\cite{kramkov_schachermayer_1999, kramkov_schachermayer_2003} for incomplete markets and embeds them into a multiple-priors framework. However, there are two main points that distinguish our setting from the one in Quenez~\cite{quenez_2004}. Firstly, the results in that paper are only shown for non-negative utility functions and therefore not directly applicable to power utility $U_\gamma$ with a negative $\gamma$. Secondly, the constraint $\langle\pi_t,\ones\rangle=h$ that we put on the admissible trading strategies alters the structure of attainable terminal wealths so that it would be necessary to adjust the proofs and check several technical assumptions.
	
	In addition, note that a minimax theorem does not endow us with the form of the optimal strategy (or the worst-case drift) yet. To obtain an explicit representation of the same, we would still need to go through the calculations done in this section. In the following, we will use the explicit representation of the optimal strategy for~\eqref{eq:the_inf_sup_problem} to show that it indeed also solves~\eqref{eq:the_sup_inf_problem} and that in this case, the supremum and the infimum can be interchanged.
\end{remark}

\subsection{A minimax theorem}\label{subs:a_minimax_theorem}

The following representation of $\pi^*$ is useful for proving our minimax theorem.

\begin{lemma}\label{lem:representation_of_pi_star}
	The strategy $\pi^*$ from Theorem~\ref{thm:solution_of_the_inf_sup_problem} satisfies
	\[ \pi^*_t = -\frac{h}{\psi(\kappa)\lVert \tau^{-1}\ones \rVert}\Gamma^{-1}(\mu^*-\nu) \]
	for all $t\in[0,T]$.
\end{lemma}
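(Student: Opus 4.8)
The plan is to verify the claimed identity by reducing both expressions for $\pi^*_t$ to the same vector. Write $\beta:=\frac{h}{\psi(\kappa)\lVert\tau^{-1}\ones\rVert}$ for brevity and recall from Lemma~\ref{lem:worst-case_parameter} that $\mu^*-\nu=\tau\rho^*$. Using $\Gamma=\tau\tau^\transp$, the right-hand side of the asserted formula becomes
\[ -\beta\,\Gamma^{-1}(\mu^*-\nu) = -\beta(\tau^\transp)^{-1}\tau^{-1}\tau\rho^* = -\beta(\tau^\transp)^{-1}\rho^*, \]
whereas Theorem~\ref{thm:solution_of_the_inf_sup_problem} gives $\pi^*_t=\frac{1}{1-\gamma}A\mu^*+hc$. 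Since $\tau$ is nonsingular, the claimed equality is equivalent, after substituting $\mu^*=\nu+\tau\rho^*$ and multiplying from the left by $\tau^\transp$, to the linear identity
\[ \frac{1}{1-\gamma}\tau^\transp A\nu + \Bigl(\tfrac{1}{1-\gamma}\tau^\transp A\tau + \beta I_d\Bigr)\rho^* + h\tau^\transp c = 0. \]

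The central step is to recognize $\rho^*$ as the solution of exactly this system. The only subtlety in the definition of $\rho^*$ from Lemma~\ref{lem:worst-case_parameter} is the $\lambda_i$-dependent term inside the inner product, and I would fold it into $\tau^\transp A\nu$ using the symmetry of $\tau^\transp A\tau$ together with $\tau^\transp A\tau\,v_i=\lambda_i v_i$:
\[ \frac{\lambda_i}{1-\gamma}\langle\tau^{-1}\nu,v_i\rangle = \frac{1}{1-\gamma}\langle\tau^{-1}\nu,\tau^\transp A\tau\,v_i\rangle = \frac{1}{1-\gamma}\langle\tau^\transp A\nu,v_i\rangle. \]
Hence each coefficient of $\rho^*$ equals $-\bigl(\frac{\lambda_i}{1-\gamma}+\beta\bigr)^{-1}\langle h\tau^\transp c+\frac{1}{1-\gamma}\tau^\transp A\nu,\,v_i\rangle$, and because $\frac{1}{1-\gamma}\tau^\transp A\tau+\beta I_d$ is diagonal in the orthonormal basis $v_1,\dots,v_d$ with strictly positive eigenvalues $\frac{\lambda_i}{1-\gamma}+\beta$ (note $\lambda_i\geq 0$, $1-\gamma>0$ and $\beta>0$), expanding the fixed vector $h\tau^\transp c+\frac{1}{1-\gamma}\tau^\transp A\nu$ in this basis yields
\[ \rho^* = -\Bigl(\tfrac{1}{1-\gamma}\tau^\transp A\tau + \beta I_d\Bigr)^{-1}\Bigl(h\tau^\transp c + \tfrac{1}{1-\gamma}\tau^\transp A\nu\Bigr). \]

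From here the conclusion is immediate: applying $\frac{1}{1-\gamma}\tau^\transp A\tau+\beta I_d$ gives $\bigl(\frac{1}{1-\gamma}\tau^\transp A\tau+\beta I_d\bigr)\rho^*=-h\tau^\transp c-\frac{1}{1-\gamma}\tau^\transp A\nu$, and substituting this into the reduced identity above makes every term cancel. I expect the only real obstacle to be the bookkeeping that absorbs the $\lambda_i$-dependent summand into $\tau^\transp A\nu$; once $\rho^*$ is identified as the solution of the linear system $\bigl(\frac{1}{1-\gamma}\tau^\transp A\tau+\beta I_d\bigr)\rho^*=-\bigl(h\tau^\transp c+\frac{1}{1-\gamma}\tau^\transp A\nu\bigr)$, the remainder is forced linear algebra.
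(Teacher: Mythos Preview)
Your proof is correct and follows essentially the same route as the paper's own argument: both hinge on the eigenvector identity $\lambda_i\langle\tau^{-1}\nu,v_i\rangle=\langle\tau^\transp A\nu,v_i\rangle$ to rewrite the defining coefficients of $\rho^*$, and then read off the linear relation $\bigl(\tfrac{1}{1-\gamma}\tau^\transp A\tau+\beta I_d\bigr)\rho^*=-\bigl(h\tau^\transp c+\tfrac{1}{1-\gamma}\tau^\transp A\nu\bigr)$, from which the claimed formula for $\pi^*_t$ follows after multiplying by $(\tau^\transp)^{-1}$. Your packaging via the invertible matrix $\tfrac{1}{1-\gamma}\tau^\transp A\tau+\beta I_d$ is slightly more compact than the paper's coefficient-by-coefficient computation, but the substance is the same.
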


The preceding lemma characterizes the strategy $\pi^*$, which is the best strategy an investor can choose when the drift of stocks is $\mu^*$. In the following we show that, vice versa, $\mu^*$ is also the parameter the market has to choose to minimize the investor's expected utility of terminal wealth, given that the investor applies strategy $\pi^*$. It then follows that the point $(\pi^*,\mu^*)$ is a \emph{saddle point} of our problem, i.e.\ it holds
\[ \E_{\mu^*}\bigl[U_\gamma(X^{\pi}_T)\bigr] \leq \E_{\mu^*}\bigl[U_\gamma(X^{\pi^*}_T)\bigr] \leq \E_{\mu}\bigl[U_\gamma(X^{\pi^*}_T)\bigr] \]
for all $\mu\in K$ and $\pi\in\calA_h(x_0)$. This property is essential for proving our minimax theorem. Note that the inequality
\[ \adjustlimits \sup_{\pi\in\calA_h(x_0)} \inf_{\mu\in K} \E_\mu\bigl[U_\gamma(X^\pi_T)\bigr] \leq \adjustlimits \inf_{\mu\in K} \sup_{\pi\in\calA_h(x_0)} \E_\mu\bigl[U_\gamma(X^\pi_T)\bigr] \]
always holds when interchanging supremum and infimum, see Ekeland and Temam~\cite[Ch.~VI, Prop.~1.1]{ekeland_temam_1976}, for example. For the reverse inequality the saddle point property is needed.

\begin{theorem}\label{thm:duality_result}
	Let $K=\{ \mu\in\R^d \,|\, (\mu-\nu)^\transp \Gamma^{-1}(\mu-\nu) \leq \kappa^2 \}$. Then the parameter $\mu$ that attains the minimum in
	\[ \inf_{\mu\in K} \E_\mu\bigl[U_\gamma(X^{\pi^*}_T)\bigr] \]
	is $\mu^*$, where both $\mu^*$ and $\pi^*$ are defined as in Theorem~\ref{thm:solution_of_the_inf_sup_problem}.
	In particular, it follows that
	\[ \adjustlimits \sup_{\pi\in\calA_h(x_0)} \inf_{\mu\in K} \E_\mu\bigl[U_\gamma(X^\pi_T)\bigr] = \E_{\mu^*}\bigl[U_\gamma(X^{\pi^*}_T)\bigr] = \adjustlimits \inf_{\mu\in K} \sup_{\pi\in\calA_h(x_0)} \E_\mu\bigl[U_\gamma(X^\pi_T)\bigr]. \]
\end{theorem}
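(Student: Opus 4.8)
The plan is to reduce the theorem to the saddle point inequalities
\[ \E_{\mu^*}\bigl[U_\gamma(X^\pi_T)\bigr] \leq \E_{\mu^*}\bigl[U_\gamma(X^{\pi^*}_T)\bigr] \leq \E_{\mu}\bigl[U_\gamma(X^{\pi^*}_T)\bigr] \]
for all $\pi\in\calA_h(x_0)$ and $\mu\in K$. The left inequality is immediate: it merely asserts that $\pi^*$ is optimal for the non-robust problem with fixed drift $\mu^*$, which is exactly the content of Proposition~\ref{prop:optimal_strategy_non-robust} applied to $\mu=\mu^*$. Hence the whole substance of the theorem is the right inequality, i.e.\ the claim that $\mu^*$ attains $\inf_{\mu\in K}\E_\mu[U_\gamma(X^{\pi^*}_T)]$. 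Once this is shown, the final chain of equalities follows by combining it with the always-valid relation $\sup_\pi\inf_\mu\leq\inf_\mu\sup_\pi$ (Ekeland and Temam~\cite{ekeland_temam_1976}) and the already-established Theorem~\ref{thm:solution_of_the_inf_sup_problem}, which identifies $\inf_\mu\sup_\pi$ with $\E_{\mu^*}[U_\gamma(X^{\pi^*}_T)]$.

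To prove the right inequality I would first exploit that $\pi^*$ is constant in time, so that integrating the wealth SDE gives, under each $\mathbb{P}^\mu$, the log-normal terminal wealth
\[ X^{\pi^*}_T = x_0\exp\Bigl(\bigl(r+(\pi^*)^\transp(\mu-r\ones)-\tfrac{1}{2}\lVert\sigma^\transp\pi^*\rVert^2\bigr)T+(\pi^*)^\transp\sigma W^\mu_T\Bigr). \]
Since $(\pi^*)^\transp\sigma W^\mu_T$ is centered Gaussian with variance $\lVert\sigma^\transp\pi^*\rVert^2 T$ under $\mathbb{P}^\mu$, a direct moment-generating-function computation yields, for $\gamma\neq0$,
\[ \E_\mu\bigl[U_\gamma(X^{\pi^*}_T)\bigr] = \frac{x_0^\gamma}{\gamma}\exp\Bigl(\gamma T\bigl(r+(\pi^*)^\transp(\mu-r\ones)-\tfrac{1}{2}(1-\gamma)\lVert\sigma^\transp\pi^*\rVert^2\bigr)\Bigr), \]
with the analogous affine expression for $\gamma=0$. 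In every case the sole $\mu$-dependence enters through the linear term $\langle\pi^*,\mu\rangle$. Comparing the sign of the prefactor $x_0^\gamma/\gamma$ with the sign of $\gamma$ in the exponent, one checks that for all $\gamma\in(-\infty,1)$ — whether $\gamma$ is positive, negative, or zero — minimizing $\E_\mu[U_\gamma(X^{\pi^*}_T)]$ over $\mu\in K$ is equivalent to minimizing the linear functional $\mu\mapsto\langle\pi^*,\mu\rangle$ over $K$. This case distinction across the three sign regimes is the one place that requires genuine care, and I would regard it as the main (if modest) obstacle.

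It remains to minimize the linear functional over the ellipsoid $K$ and to identify the minimizer with $\mu^*$. Because $\langle\pi^*,\ones\rangle=h>0$ forces $\pi^*\neq0$ and $\Gamma$ is positive definite, the minimizer lies on the boundary of $K$ and is, by a standard Lagrange/Cauchy--Schwarz argument, $\mu=\nu-\kappa\,\Gamma\pi^*/\sqrt{(\pi^*)^\transp\Gamma\pi^*}$; in particular $\mu-\nu$ is a strictly negative multiple of $\Gamma\pi^*$ with $(\mu-\nu)^\transp\Gamma^{-1}(\mu-\nu)=\kappa^2$. Here Lemma~\ref{lem:representation_of_pi_star} does the crucial work: it gives $\Gamma^{-1}(\mu^*-\nu)=-\frac{h}{\psi(\kappa)\lVert\tau^{-1}\ones\rVert}\pi^*$, whence $\mu^*-\nu=-\frac{\psi(\kappa)\lVert\tau^{-1}\ones\rVert}{h}\Gamma\pi^*$ is likewise a negative multiple of $\Gamma\pi^*$, while Theorem~\ref{thm:solution_of_the_inf_sup_problem} ensures $\lVert\tau^{-1}(\mu^*-\nu)\rVert=\kappa$, placing $\mu^*$ on the boundary of $K$. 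Since the ray from $\nu$ in the fixed direction $-\Gamma\pi^*$ meets the boundary of the strictly convex ellipsoid in exactly one point, $\mu^*$ coincides with the minimizer, which proves the right inequality.

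Finally I would assemble the conclusion: the right inequality yields $\inf_{\mu\in K}\E_\mu[U_\gamma(X^{\pi^*}_T)]=\E_{\mu^*}[U_\gamma(X^{\pi^*}_T)]$, so that
\[ \adjustlimits\sup_{\pi\in\calA_h(x_0)}\inf_{\mu\in K}\E_\mu\bigl[U_\gamma(X^\pi_T)\bigr] \geq \E_{\mu^*}\bigl[U_\gamma(X^{\pi^*}_T)\bigr] = \adjustlimits\inf_{\mu\in K}\sup_{\pi\in\calA_h(x_0)}\E_\mu\bigl[U_\gamma(X^\pi_T)\bigr], \]
where the last equality is Theorem~\ref{thm:solution_of_the_inf_sup_problem}. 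Together with the reverse inequality $\sup_\pi\inf_\mu\leq\inf_\mu\sup_\pi$ this forces all three quantities to agree, establishing the stated minimax identity.
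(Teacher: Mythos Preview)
Your proof is correct and follows essentially the same route as the paper: reduce $\inf_{\mu\in K}\E_\mu[U_\gamma(X^{\pi^*}_T)]$ to the minimization of the linear functional $\mu\mapsto\langle\pi^*,\mu\rangle$ over the ellipsoid, write down the Lagrange minimizer $\nu-\kappa\,\Gamma\pi^*/\sqrt{(\pi^*)^\transp\Gamma\pi^*}$, and invoke Lemma~\ref{lem:representation_of_pi_star} to identify it with $\mu^*$, after which the saddle point argument closes exactly as you describe. One small slip to fix: the intermediate equation you quote from Lemma~\ref{lem:representation_of_pi_star} should read $\Gamma^{-1}(\mu^*-\nu)=-\frac{\psi(\kappa)\lVert\tau^{-1}\ones\rVert}{h}\,\pi^*$ (you inverted the constant), though your very next line $\mu^*-\nu=-\frac{\psi(\kappa)\lVert\tau^{-1}\ones\rVert}{h}\Gamma\pi^*$ is correct, so this is only a typo.
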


The previous theorem establishes duality between our original robust utility maximization problem and the dual problem where supremum and infimum are interchanged. Additionally, we now also know the solution to our original problem. The optimal strategy for our constrained robust utility maximization problem is given in a nearly explicit way. Note that the parameter $\mu^*$ in Theorem~\ref{thm:solution_of_the_inf_sup_problem} is not given explicitly since the parameter $\psi(\kappa)$ is defined in an implicit way. However, finding $\psi(\kappa)$ numerically can be done in a straightforward way by a numerical root search of a monotone function. For this reason, determining $\mu^*$ and $\pi^*$ numerically does not pose any problems.

\begin{remark}
	One can think of other reasonable sets $K$ for modelling uncertainty about the drift parameter $\mu$. Our duality approach can also be applied to the optimization problem with
	\[ K=\bigl\{\mu\in\R^d \,\big|\, \ones^\transp\mu=b\bigr\} \]
	for some $b\in\R$. The motivation for this uncertainty set is that one has an estimate for the performance of a stock index, and therefore the overall average performance of the stocks, but not for the single stocks themselves. In that case, one can show that the optimal strategy for the optimization problem
	\[ \adjustlimits \inf_{\mu\in K} \sup_{\pi\in\calA_h(x_0)} \E_\mu\bigl[U_\gamma(X^\pi_T)\bigr] \]
	is $(\pi^*_t)_{t\in[0,T]}$ with $\pi^*_t = \frac{h}{d}\ones$ for all $t\in[0,T]$. The worst-case parameter $\mu^*$ can be determined explicitly given the eigenvalues and eigenvectors of the matrix $A$.
	Further, one can show a minimax theorem in analogy to Theorem~\ref{thm:duality_result}.
	The optimal strategy is here just a uniform diversification strategy given the constraint on the bond investment. In the next section we show how this fits into the framework of our results for ellipsoidal uncertainty sets when we let the degree of uncertainty $\kappa$ go to infinity.
\end{remark}

\section{Asymptotic Behavior as Uncertainty Increases}\label{cha:asymptotic_behavior_as_uncertainty_increases}

In this section we consider again the setting with ellipsoidal uncertainty sets as in~\eqref{eq:uncertainty_ellipsoid} and investigate what happens as the degree of uncertainty changes. Since $K$ is an ellipsoid, we increase the degree of uncertainty about the true drift parameter by increasing the radius $\kappa$, a lower value of $\kappa$ corresponds to a more precise knowledge of the true drift.

\subsection{Limit of worst-case parameter and optimal strategy}\label{sec:limit_of_worst-case_parameter_and_optimal_strategy}

In the following, we address in detail the asymptotic behavior of the worst-case parameter and the optimal strategy as uncertainty increases, i.e.\ as $\kappa$ goes to infinity. To underline the dependence on the degree of uncertainty, we write $\mu^*=\mu^*(\kappa)$ and $\pi^*=\pi^*(\kappa)$ in the following.

\begin{remark}
	The other asymptotic regime $\kappa\to 0$ corresponds to a more and more precise knowledge of the true drift. It is easy to see that
	\[ \lim_{\kappa\to 0}\mu^*(\kappa)=\nu \quad\text{and}\quad \lim_{\kappa\to 0}\pi_t^*(\kappa)=\frac{1}{1-\gamma}A\nu+hc \]
	for all $t\in[0,T]$. This means that the worst-case parameter converges to the reference drift $\nu$ and the optimal strategy to the best constrained strategy, given that the drift equals $\nu$. So we retrieve in the limit $\kappa\to 0$ the setting without model uncertainty.
\end{remark}

We now focus on $\kappa\to\infty$. Note that the only quantity in the representation of $\mu^*$ from Theorem~\ref{thm:solution_of_the_inf_sup_problem} that depends on $\kappa$ is $\psi(\kappa)$.

\begin{lemma}\label{lem:asymptotics_of_a_kappa}
	It holds $\lim_{\kappa\to\infty} \frac{\psi(\kappa)}{\kappa}=1$.
\end{lemma}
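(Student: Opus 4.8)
The plan is to read off the asymptotics directly from the explicit representation of $\rho^*$ in Lemma~\ref{lem:worst-case_parameter}, exploiting that among the coefficients of $\rho^*$ in the orthonormal eigenbasis $v_1,\dots,v_d$ only the one in the kernel direction $v_1$ grows with $\kappa$, while all others stay bounded. Writing $\rho^*=\sum_{i=1}^d a_iv_i$, the coefficients are
\[ a_i = -\Bigl(\tfrac{\lambda_i}{1-\gamma}+\tfrac{h}{\psi(\kappa)\lVert\tau^{-1}\ones\rVert}\Bigr)^{-1}\Bigl\langle h\tau^\transp c+\tfrac{\lambda_i}{1-\gamma}\tau^{-1}\nu,\, v_i\Bigr\rangle. \]

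First I would isolate $a_1$. Since $\lambda_1=0$, the prefactor collapses to $\psi(\kappa)\lVert\tau^{-1}\ones\rVert/h$ and the inner product to $h\langle\tau^\transp c,v_1\rangle=h\,c^\transp\ones/\lVert\tau^{-1}\ones\rVert$. Using $A\ones=0$ from Lemma~\ref{lem:A_symmetric_positive_semidefinite} together with $\ones^\transp e_d=1$ gives $c^\transp\ones=\ones^\transp e_d-\ones^\transp A\sigma\sigma^\transp e_d=1$, whence $a_1=-\psi(\kappa)$, as already recorded after Lemma~\ref{lem:worst-case_parameter}.

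The crux is the uniform boundedness of the remaining coefficients. For every $i\geq2$ one has $\lambda_i>0$ and $1-\gamma>0$, so the prefactor satisfies $\bigl(\tfrac{\lambda_i}{1-\gamma}+\tfrac{h}{\psi(\kappa)\lVert\tau^{-1}\ones\rVert}\bigr)^{-1}\leq\tfrac{1-\gamma}{\lambda_i}$ irrespective of $\kappa$. Hence $|a_i|\leq M_i$ for a constant $M_i$ independent of $\kappa$, and with $M^2:=\sum_{i=2}^d M_i^2$ we obtain $\sum_{i=2}^d a_i^2\leq M^2$ uniformly in $\kappa$. Invoking the defining relation $\lVert\rho^*\rVert=\kappa$ and orthonormality of the $v_i$ then yields $\psi(\kappa)^2=\kappa^2-\sum_{i=2}^d a_i^2$, so that $\kappa^2-M^2\leq\psi(\kappa)^2\leq\kappa^2$. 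Dividing by $\kappa^2$ and letting $\kappa\to\infty$ gives $\psi(\kappa)/\kappa\to1$ by the squeeze theorem.

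All steps are elementary; the only point demanding care — and the conceptual heart of the argument — is the strict positivity of $\lambda_2,\dots,\lambda_d$, which forces the non-kernel coefficients to stay bounded as the radius grows. This is precisely what makes $\psi(\kappa)=-a_1$ absorb essentially the whole radius $\kappa$, so that the entire mass of $\rho^*$ concentrates in the $v_1$-direction in the limit.
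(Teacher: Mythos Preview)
Your proof is correct and follows essentially the same approach as the paper: both arguments use $a_1=-\psi(\kappa)$, bound $|a_i|$ for $i\geq2$ uniformly in $\kappa$ via the strict positivity of $\lambda_i$, and then invoke the constraint $\sum_i a_i^2=\kappa^2$ to force $\psi(\kappa)/\kappa\to1$. The only cosmetic difference is that you package the conclusion as a squeeze argument on $\psi(\kappa)^2$, whereas the paper shows $\sum_{i\geq2}(a_i/\kappa)^2\to0$ termwise.
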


From this lemma we gain insights into the asymptotic behavior of $\mu^*$.

\begin{proposition}\label{prop:asymptotics_of_mu_star}
	It holds
	\[ \lim_{\kappa\to\infty} \frac{1}{\kappa}\tau^{-1}\bigl(\mu^*(\kappa)-\nu\bigr)=-v_1=-\frac{1}{\lVert \tau^{-1}\ones \rVert}\tau^{-1}\ones \]
	and
	\[ \lim_{\kappa\to\infty} \frac{1}{\kappa}\mu^*(\kappa) = -\tau v_1 = -\frac{1}{\lVert \tau^{-1}\ones \rVert}\ones. \]
\end{proposition}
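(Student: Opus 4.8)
The plan is to start from the explicit representation of $\mu^*(\kappa)$ given in Theorem~\ref{thm:solution_of_the_inf_sup_problem}, divide by $\kappa$, and pass to the limit term by term using Lemma~\ref{lem:asymptotics_of_a_kappa}. Recall that
\[ \frac{1}{\kappa}\tau^{-1}\bigl(\mu^*(\kappa)-\nu\bigr) = -\frac{1}{\kappa}\sum_{i=1}^d \biggl( \frac{\lambda_i}{1-\gamma}+\frac{h}{\psi(\kappa)\lVert \tau^{-1}\ones \rVert} \biggr)^{-1}\biggl\langle h\tau^\transp c+\frac{\lambda_i}{1-\gamma}\tau^{-1}\nu, v_i\biggr\rangle v_i, \]
so the whole analysis reduces to understanding the $\kappa$-dependence of the coefficients, which enters only through $\psi(\kappa)$. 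I would split the sum into the $i=1$ term, where $\lambda_1=0$, and the remaining terms $i=2,\dots,d$, where $\lambda_i>0$.

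For the terms $i\geq 2$, the factor $\bigl(\tfrac{\lambda_i}{1-\gamma}+\tfrac{h}{\psi(\kappa)\lVert\tau^{-1}\ones\rVert}\bigr)^{-1}$ stays bounded as $\kappa\to\infty$: since $\psi(\kappa)\in(0,\kappa]$ and $\psi(\kappa)\to\infty$ (because $\psi(\kappa)/\kappa\to 1$ by Lemma~\ref{lem:asymptotics_of_a_kappa}), the term $h/(\psi(\kappa)\lVert\tau^{-1}\ones\rVert)$ vanishes, and the coefficient converges to the finite constant $(1-\gamma)/\lambda_i$. The inner products $\langle h\tau^\transp c+\tfrac{\lambda_i}{1-\gamma}\tau^{-1}\nu,v_i\rangle$ do not depend on $\kappa$. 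Hence each of these $d-1$ summands is bounded, and after dividing by $\kappa$ it tends to zero. For the $i=1$ term, $\lambda_1=0$ so the coefficient is exactly $\psi(\kappa)\lVert\tau^{-1}\ones\rVert/h$, and the inner product collapses to $\langle h\tau^\transp c,v_1\rangle$ because the $\lambda_1$-term vanishes. Using $v_1=\tau^{-1}\ones/\lVert\tau^{-1}\ones\rVert$ and $\langle h\tau^\transp c,v_1\rangle = \tfrac{h}{\lVert\tau^{-1}\ones\rVert}\langle \tau^\transp c,\tau^{-1}\ones\rangle = \tfrac{h}{\lVert\tau^{-1}\ones\rVert}\langle c,\ones\rangle$, the $i=1$ contribution to $\tfrac{1}{\kappa}\tau^{-1}(\mu^*-\nu)$ becomes $-\tfrac{\psi(\kappa)}{\kappa}\langle c,\ones\rangle\, v_1$. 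Invoking Lemma~\ref{lem:asymptotics_of_a_kappa} gives $\psi(\kappa)/\kappa\to 1$, so this tends to $-\langle c,\ones\rangle\,v_1$.

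The remaining point is to verify $\langle c,\ones\rangle = 1$, which pins the limit down to exactly $-v_1$. From Definition~\ref{def:matrix_A_vector_c}, $c=(I_d-A\sigma\sigma^\transp)e_d$ with $A=D^\transp(D\sigma\sigma^\transp D^\transp)^{-1}D$, so $\ones^\transp c = \ones^\transp e_d - \ones^\transp A\sigma\sigma^\transp e_d$. Since $\ones^\transp e_d = 1$, it suffices to check $\ones^\transp A = 0$, which follows from $D\ones = 0$ (each row of $D$ sums to zero) together with the structure $A=D^\transp(\cdots)D$; equivalently, $\ones\in\mathrm{ker}(A)$ by Lemma~\ref{lem:A_symmetric_positive_semidefinite}, and $A$ is symmetric, so $\ones^\transp A = (A\ones)^\transp = 0$. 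This establishes the first limit. The second limit is then immediate: applying $\tau$ to both sides of the first and using continuity of the linear map gives $\tfrac{1}{\kappa}(\mu^*(\kappa)-\nu)\to -\tau v_1$, and since $\tfrac{1}{\kappa}\nu\to 0$, we conclude $\tfrac{1}{\kappa}\mu^*(\kappa)\to -\tau v_1 = -\ones/\lVert\tau^{-1}\ones\rVert$.

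The only genuinely nontrivial input is Lemma~\ref{lem:asymptotics_of_a_kappa}, which controls $\psi(\kappa)/\kappa$; everything downstream is bookkeeping in the eigenbasis and the identity $\langle c,\ones\rangle=1$. I therefore expect the main obstacle to lie not in this proposition but in the separation argument — rigorously justifying that the $i\geq 2$ coefficients stay bounded while only the $i=1$ coefficient grows linearly in $\kappa$ — which is precisely where Lemma~\ref{lem:asymptotics_of_a_kappa} does the heavy lifting.
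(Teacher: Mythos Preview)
Your proof is correct and follows essentially the same route as the paper: write $\tau^{-1}(\mu^*-\nu)$ in the eigenbasis $v_1,\dots,v_d$, observe that the $i\geq 2$ coefficients are bounded (hence vanish after dividing by $\kappa$), and use Lemma~\ref{lem:asymptotics_of_a_kappa} on the $i=1$ coefficient. The paper's version is terser only because the boundedness of the $i\geq 2$ terms and the identity $c^\transp\tau v_1=1/\lVert\tau^{-1}\ones\rVert$ (equivalently your $\langle c,\ones\rangle=1$) were already established in the proofs of Lemmas~\ref{lem:asymptotics_of_a_kappa} and~\ref{lem:worst-case_parameter}, so it simply cites those; your argument re-derives them explicitly, which is fine.
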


Hence, asymptotically the direction of the worst-case parameter is $-\ones$. This means that, as $\kappa$ tends to infinity, the worst drift which the market can choose for an investor who applies the optimal strategy $\pi^*$, is a drift vector where all entries are the same and negative. We have the following result for the asymptotic behavior of the investor's optimal strategy.

\begin{theorem}\label{thm:limit_of_optimal_strategy}
	For any $t\in[0,T]$ it holds
	\[ \lim_{\kappa\to\infty} \pi^*_t(\kappa)=\frac{h}{\ones^\transp\Gamma^{-1}\ones}\Gamma^{-1}\ones. \]
\end{theorem}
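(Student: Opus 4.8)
The plan is to combine the representation of $\pi^*$ from Lemma~\ref{lem:representation_of_pi_star} with the asymptotic results of Lemma~\ref{lem:asymptotics_of_a_kappa} and Proposition~\ref{prop:asymptotics_of_mu_star}, so that the limit reduces to a product of limits together with two algebraic identities relating $\tau$ and $\Gamma$. Since all the analytic work has already been done in the preceding results, the argument is essentially bookkeeping.

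First I would start from Lemma~\ref{lem:representation_of_pi_star}, which gives
\[ \pi^*_t(\kappa) = -\frac{h}{\psi(\kappa)\lVert \tau^{-1}\ones \rVert}\Gamma^{-1}\bigl(\mu^*(\kappa)-\nu\bigr). \]
Using $\Gamma=\tau\tau^\transp$, hence $\Gamma^{-1}=(\tau^{-1})^\transp\tau^{-1}$, I would factor out $\tau^{-1}$ to write $\Gamma^{-1}(\mu^*-\nu)=(\tau^{-1})^\transp\tau^{-1}(\mu^*-\nu)$. Splitting $1/\psi(\kappa)=(\kappa/\psi(\kappa))\cdot(1/\kappa)$ then yields
\[ \pi^*_t(\kappa) = -\frac{h}{\lVert \tau^{-1}\ones \rVert}\,\frac{\kappa}{\psi(\kappa)}\,(\tau^{-1})^\transp\Bigl(\frac{1}{\kappa}\tau^{-1}\bigl(\mu^*(\kappa)-\nu\bigr)\Bigr). \]

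Next I would pass to the limit $\kappa\to\infty$ in each factor. By Lemma~\ref{lem:asymptotics_of_a_kappa} we have $\kappa/\psi(\kappa)\to 1$, and by Proposition~\ref{prop:asymptotics_of_mu_star} the bracketed vector converges to $-v_1=-\tau^{-1}\ones/\lVert\tau^{-1}\ones\rVert$. As $(\tau^{-1})^\transp$ is a fixed matrix and these are finitely many convergent scalar and vector factors, the product converges to the product of the limits, giving
\[ \lim_{\kappa\to\infty}\pi^*_t(\kappa) = \frac{h}{\lVert \tau^{-1}\ones \rVert^2}(\tau^{-1})^\transp\tau^{-1}\ones = \frac{h}{\lVert \tau^{-1}\ones \rVert^2}\Gamma^{-1}\ones. \]
Finally I would rewrite the normalizing constant via $\lVert\tau^{-1}\ones\rVert^2=\ones^\transp(\tau^{-1})^\transp\tau^{-1}\ones=\ones^\transp\Gamma^{-1}\ones$ to obtain the claimed limit $\frac{h}{\ones^\transp\Gamma^{-1}\ones}\Gamma^{-1}\ones$.

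There is no genuine analytic obstacle here; the only point needing a little care is the justification of the product-of-limits step, which is immediate since all factors other than the fixed matrix $(\tau^{-1})^\transp$ converge in $\R$ or $\R^d$ and left multiplication by a constant matrix is continuous. I expect that the most error-prone part when writing this out is keeping the transposes straight in the identity $\Gamma^{-1}=(\tau^{-1})^\transp\tau^{-1}$ and in the norm identity $\lVert\tau^{-1}\ones\rVert^2=\ones^\transp\Gamma^{-1}\ones$, both of which are crucial for the cancellations that produce the final form.
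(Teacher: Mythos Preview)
Your proposal is correct and follows essentially the same route as the paper: start from the representation in Lemma~\ref{lem:representation_of_pi_star}, split off the factor $\kappa/\psi(\kappa)$, apply Lemma~\ref{lem:asymptotics_of_a_kappa} and Proposition~\ref{prop:asymptotics_of_mu_star}, and then use the identities $\Gamma^{-1}=(\tau^{-1})^\transp\tau^{-1}$ and $\lVert\tau^{-1}\ones\rVert^2=\ones^\transp\Gamma^{-1}\ones$ to simplify. The only cosmetic difference is that you factor $\Gamma^{-1}$ explicitly before taking the limit, whereas the paper writes $\frac{1}{\kappa}\Gamma^{-1}(\mu^*(\kappa)-\nu)$ and passes directly to $(\tau^\transp)^{-1}v_1$.
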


The theorem shows that the optimal strategy $\pi^*(\kappa)$ converges as the degree of uncertainty $\kappa$ goes to infinity. If $\Gamma=\sigma\sigma^\transp$, then the limit is a multiple of the minimum variance portfolio. Another interesting special case is $\Gamma=I_d$, i.e.\ when $K$ is simply a ball with radius $\kappa$.
In that case we have
\[ \lim_{\kappa\to\infty} \pi^*_t(\kappa)=\frac{h}{d}\ones \]
for any $t\in[0,T]$, hence the optimal strategy converges to a uniform diversification strategy, given by $\frac{h}{d}\ones$ at each point in time. Hence, when forced to invest a total fraction of $h>0$ in the risky assets, then in the limit for $\kappa$ going to infinity investors will diversify their portfolio uniformly. For general $\Gamma$ we shall speak of a generalized uniform diversification strategy.

This asymptotic behavior of the optimal strategy is striking because the limit is independent of the volatility matrix $\sigma$. In combination with the structure of the function $g$ in Lemma~\ref{lem:worst-case_parameter} this indicates that it might also be possible to allow for misspecified volatility. For a high level of uncertainty the optimal strategy is dominated by the matrix $\Gamma$ shaping the uncertainty ellipsoid whereas both the volatility structure of the assets and the reference drift $\nu$ become negligible. This effect is caused by the investor's reaction to the worst-case drift parameter $\mu^*$ which, as shown in Proposition~\ref{prop:asymptotics_of_mu_star}, behaves asymptotically like a multiple of $\ones$. The best reaction from the investor's point of view is to diversify among all assets, weighted by the uncertainty structure $\Gamma$. In the special case where $K$ is a ball, this leads to a uniform diversification strategy. This result is in line with Pflug et al.~\cite{pflug_pichler_wozabal_2012} who show convergence of the optimal strategy to the uniform diversification strategy in a risk minimization setting with increasing model uncertainty.

\begin{remark}
	Note that plugging in $\kappa=\infty$ into the definition of the ellipsoid yields the uncertainty set $K=\R^d$, so that in fact every drift parameter $\mu\in\R^d$ is deemed possible by the investor. Then one easily obtains the worst-case utility
	\[ \inf_{\mu\in\R^d} \E_{\mu}[U_\gamma(X^\pi_T)]=
	\begin{cases}
		0,			& \gamma\in(0,1),\\
		-\infty,	& \gamma\in(-\infty,0],
	\end{cases} \]
	for \emph{any} admissible $\pi$. Hence, every strategy performs equally bad in the limit case. In particular, plugging in $\kappa=\infty$ into the ellipsoid in the first place does not provide us with the optimal limit strategy of Theorem~\ref{thm:limit_of_optimal_strategy}.
	
	The intuition is that, as long as the uncertainty set is bounded, there exists a worst-case drift to which the investor can react in an optimal way. Nevertheless, when uncertainty goes to infinity, also the expected utility achieved by the best strategy will be driven to $-\infty$ in case that $\gamma\in(-\infty,0]$, respectively to zero in case $\gamma\in(0,1)$.
\end{remark}

\subsection{Relaxing the investment constraint}\label{sec:relaxing_the_investment_constraint}

We use the above results to show that, as uncertainty $\kappa$ goes to infinity, our robust optimization problem yields the same optimal value as a slightly different optimization problem with a more general class of admissible strategies. Recall that we have so far considered for $h>0$ the set
\[ \calA_h(x_0)=\bigl\{ \pi\in\calA(x_0) \,\big|\, \langle\pi_t,\ones\rangle = h \text{ for all } t\in[0,T] \bigr\} \]
as the class of admissible strategies. Requiring $\langle \pi_t,\ones\rangle\geq h$ instead of $\langle \pi_t,\ones\rangle= h$ obviously enlarges this set. In the following, we show for logarithmic utility that maximizing worst-case expected utility among bounded strategies in this larger set asymptotically leads to the same value as our original problem. We write $K=K(\kappa)$ for the uncertainty ellipsoid with radius~$\kappa$.

\begin{proposition}\label{prop:comparison_greater_equal_h_logarithm}
	Define for $h>0$ the admissibility set
	\[ \calA'_h(x_0)=\bigl\{ \pi\in\calA(x_0) \,\big|\, \langle\pi_t,\ones\rangle \geq h \text{ for all } t\in[0,T] \bigr\} \]
	and let $M>0$. Then there exists a $\kappa_M>0$ such that for all $\kappa\geq\kappa_M$ it holds
	\[ \adjustlimits \sup_{\substack{\pi\in\calA'_h(x_0)\\ \lVert\pi\rVert\leq M}} \inf_{\substack{\mu\in K(\kappa)\\ \phantom{0}}} \E_\mu\bigl[\log(X^\pi_T)\bigr] \leq \adjustlimits \sup_{\pi\in\calA_h(x_0)} \inf_{\mu\in K(\kappa)} \E_\mu\bigl[\log(X^\pi_T)\bigr]. \]
	Here we use $\lVert\pi\rVert\leq M$ as a short notation for $\lVert\pi_t\rVert\leq M$ for all $t\in[0,T]$.
\end{proposition}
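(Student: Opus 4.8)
The plan is to exploit the fact that for logarithmic utility the objective decouples in time. Writing $f(\pi,\mu)=r+\pi^\transp(\mu-r\ones)-\tfrac12\lVert\sigma^\transp\pi\rVert^2$, an application of It\^o's formula to $\log X^\pi_t$ together with the martingale property of the stochastic integral gives $\E_\mu[\log X^\pi_T]=\log x_0+\E_\mu\bigl[\int_0^T f(\pi_t,\mu)\,\rmd t\bigr]$ for every $\pi\in\calA'_h(x_0)$ and every $\mu\in K(\kappa)$. The martingale property holds because $\pi\in\calA(x_0)$ forces $\E_\mu[\int_0^T\lVert\sigma^\transp\pi_t\rVert^2\,\rmd t]<\infty$, and since $\sigma\sigma^\transp$ is positive definite this also controls $\E_\mu[\int_0^T\lVert\pi_t\rVert\,\rmd t]$, so all integrals are finite.

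The first, cheap, step is to replace the infimum over $\mu$ by evaluation at the single worst-case point $\mu^*=\mu^*(\kappa)$ from Theorem~\ref{thm:solution_of_the_inf_sup_problem}. Since $\mu^*\in K(\kappa)$, every admissible $\pi$ satisfies $\inf_{\mu\in K(\kappa)}\E_\mu[\log X^\pi_T]\le\E_{\mu^*}[\log X^\pi_T]$, and hence
\[ \adjustlimits\sup_{\substack{\pi\in\calA'_h(x_0)\\\lVert\pi\rVert\le M}}\inf_{\mu\in K(\kappa)}\E_\mu[\log X^\pi_T]\;\le\;\sup_{\pi\in\calA'_h(x_0)}\E_{\mu^*}[\log X^\pi_T]. \]
By the time decoupling above, the right-hand side equals $\log x_0+T\sup_{\langle\pi,\ones\rangle\ge h}f(\pi,\mu^*)$, the pointwise maximum of the strictly concave quadratic $f(\cdot,\mu^*)$ over the half-space $\{\pi:\langle\pi,\ones\rangle\ge h\}$; the constant-in-time strategy attaining this maximum is bounded, hence admissible, so the supremum is attained.

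The decisive step is to show that, for $\kappa$ large, this half-space maximum coincides with the maximum over the hyperplane $\{\langle\pi,\ones\rangle=h\}$. The unconstrained maximizer of $f(\cdot,\mu^*)$ is the Merton vector $(\sigma\sigma^\transp)^{-1}(\mu^*-r\ones)$, and by the KKT conditions for a concave maximization the half-space constraint is active, so that the maximum lies on the hyperplane, precisely when $\ones^\transp(\sigma\sigma^\transp)^{-1}(\mu^*-r\ones)\le h$. Proposition~\ref{prop:asymptotics_of_mu_star} gives $\tfrac1\kappa\mu^*(\kappa)\to-\tfrac{1}{\lVert\tau^{-1}\ones\rVert}\ones$, whence $\ones^\transp(\sigma\sigma^\transp)^{-1}(\mu^*(\kappa)-r\ones)\to-\infty$ because $\ones^\transp(\sigma\sigma^\transp)^{-1}\ones>0$. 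Thus there is a threshold $\kappa_M$ beyond which the constraint is active and $\sup_{\langle\pi,\ones\rangle\ge h}f(\pi,\mu^*)=\sup_{\langle\pi,\ones\rangle=h}f(\pi,\mu^*)$. The latter is exactly the non-robust constrained log-problem of Proposition~\ref{prop:optimal_strategy_non-robust} for drift $\mu^*$, so $\sup_{\pi\in\calA'_h(x_0)}\E_{\mu^*}[\log X^\pi_T]=\sup_{\pi\in\calA_h(x_0)}\E_{\mu^*}[\log X^\pi_T]$, which by the saddle-point property underlying Theorem~\ref{thm:duality_result} equals $\E_{\mu^*}[\log X^{\pi^*}_T]=\sup_{\pi\in\calA_h(x_0)}\inf_{\mu\in K(\kappa)}\E_\mu[\log X^\pi_T]$. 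Chaining the displayed inequalities then yields the claim.

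I expect the main obstacle to be this decisive step: verifying cleanly that the enlarged half-space constraint buys nothing against the worst-case drift, that is, that $\mu^*(\kappa)$ drives the unconstrained log-optimal portfolio to over-satisfy $\langle\pi,\ones\rangle\ge h$. This is where Proposition~\ref{prop:asymptotics_of_mu_star} is essential and where the threshold $\kappa_M$ is produced. The boundedness $\lVert\pi\rVert\le M$ plays only an auxiliary role, guaranteeing that the expectations in question are finite; in fact the resulting threshold can be taken independent of $M$.
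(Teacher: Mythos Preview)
Your proof is correct, and it takes a genuinely different route from the paper's. The paper decomposes an arbitrary $\pi'\in\calA'_h(x_0)$ with $\lVert\pi'\rVert\le M$ as $\pi'_t=\pi_t+\varepsilon_t\ones$ with $\pi\in\calA_h(x_0)$ and $\varepsilon_t\ge 0$, expands $\E_{\mu^*}[\log X^{\pi'}_T]$ accordingly, and shows that the additional term $\E_{\mu^*}\bigl[\int_0^T\varepsilon_t(\ones^\transp(\mu^*-r\ones)-\tfrac12\varepsilon_t\lVert\sigma^\transp\ones\rVert^2-\ones^\transp\sigma\sigma^\transp\pi_t)\,\rmd t\bigr]$ is non-positive once $\ones^\transp\mu^*(\kappa)$ is sufficiently negative. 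The bound $\lVert\pi'\rVert\le M$ is used precisely to control the cross term $\ones^\transp\sigma\sigma^\transp\pi_t$ uniformly, which is why the paper's threshold genuinely depends on $M$.

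Your argument sidesteps the decomposition entirely: you pass to the pointwise concave maximization of $f(\cdot,\mu^*)$ over the half-space $\{\langle\pi,\ones\rangle\ge h\}$ and use the KKT criterion $\ones^\transp(\sigma\sigma^\transp)^{-1}(\mu^*-r\ones)\le h$, driven to hold by Proposition~\ref{prop:asymptotics_of_mu_star}, to conclude that the half-space optimum sits on the hyperplane. This is cleaner and, as you observe, yields a threshold independent of $M$; in effect you prove the stronger statement with $\calA'_h(x_0)$ unrestricted. The paper's approach, on the other hand, is more hands-on and makes the role of the excess investment $\varepsilon_t\ones$ explicit, which connects more directly to the intuition that large uncertainty penalizes any over-investment in the risky assets.
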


For power utility, the result is slightly weaker. We first give a lemma that states some useful equalities concerning the matrix $A$ and vector $c$ from Definition~\ref{def:matrix_A_vector_c}.

\begin{lemma}\label{lem:properties_of_A_and_c}
	For the matrix $A$ and the vector $c$ we have
	\[ A\sigma\sigma^\transp A=A, \quad c^\transp\sigma\sigma^\transp A=0 \quad\text{and}\quad c^\transp\ones =1. \]
\end{lemma}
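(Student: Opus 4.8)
The plan is to derive all three identities directly from Definition~\ref{def:matrix_A_vector_c}, exploiting only two structural facts: that $\sigma\sigma^\transp$ is symmetric and positive definite, so that $D\sigma\sigma^\transp D^\transp$ is invertible by Lemma~\ref{lem:definition_D}, and that $D\ones=0$, which holds because each row of $D$ contains a single $+1$, a single $-1$, and zeros elsewhere. To lighten the notation I would abbreviate $\Sigma=\sigma\sigma^\transp$, so that $A=D^\transp(D\Sigma D^\transp)^{-1}D$ and $c=(I_d-A\Sigma)e_d$, and I would record at the outset that both $A$ and $\Sigma$ are symmetric. The guiding observation, from which everything follows, is that $A\Sigma$ is idempotent.

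For the first identity I would simply insert the definition and let the inner factor cancel:
\[ A\Sigma A = D^\transp(D\Sigma D^\transp)^{-1}\,(D\Sigma D^\transp)\,(D\Sigma D^\transp)^{-1}D = D^\transp(D\Sigma D^\transp)^{-1}D = A, \]
where the cancellation is legitimate precisely because $D\Sigma D^\transp$ is invertible. This yields $(A\Sigma)^2=A\Sigma$, which is the engine behind the remaining two identities.

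For the second identity I would write $c^\transp=e_d^\transp(I_d-\Sigma A)$, using the symmetry of $A$ and $\Sigma$ to transpose $A\Sigma$, and then compute
\[ c^\transp\Sigma A = e_d^\transp(\Sigma A-\Sigma A\Sigma A) = e_d^\transp\bigl(\Sigma A - \Sigma(A\Sigma A)\bigr) = e_d^\transp(\Sigma A-\Sigma A)=0, \]
where the middle step invokes $A\Sigma A=A$ from the first part.

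Finally, for $c^\transp\ones=1$ the key is $D\ones=0$, which gives $A\ones=D^\transp(D\Sigma D^\transp)^{-1}D\ones=0$ (consistent with $\mathrm{ker}(A)=\mathrm{span}(\{\ones\})$ from Lemma~\ref{lem:A_symmetric_positive_semidefinite}). Then, again using $c^\transp=e_d^\transp(I_d-\Sigma A)$,
\[ c^\transp\ones = e_d^\transp\ones - e_d^\transp\Sigma A\ones = e_d^\transp\ones = 1, \]
since $A\ones=0$ and the $d$-th component of $\ones$ equals one. None of the three steps presents a genuine obstacle; the only points requiring attention are to invoke the invertibility of $D\Sigma D^\transp$ when cancelling in the first identity and to use $D\ones=0$ for the third. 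The proof is therefore entirely a matter of bookkeeping once $A\Sigma$ is recognized as idempotent.
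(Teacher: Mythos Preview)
Your proof is correct and follows essentially the same route as the paper: the first identity is obtained by inserting the definition of $A$ and cancelling the inner $D\sigma\sigma^\transp D^\transp$, the second follows from the first via $c^\transp=e_d^\transp(I_d-\sigma\sigma^\transp A)$, and the third uses $A\ones=0$. The only cosmetic differences are your abbreviation $\Sigma=\sigma\sigma^\transp$ and the explicit remark that $A\Sigma$ is idempotent.
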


The next proposition gives a result similar to Proposition~\ref{prop:comparison_greater_equal_h_logarithm} for power utility. We define a different enlarged admissibility set $\overline{\calA}_h(x_0)$ in this case. The reason is that, in contrast to the logarithmic utility case, we cannot ensure that we can restrict to deterministic strategies in $\calA'_h(x_0)$.

\begin{proposition}\label{prop:comparison_greater_equal_h_power}
	Let $\gamma\neq 0$ and $h>0$ and define the admissibility set
	\[ \overline{\calA}_h(x_0)=\bigcup_{h'\geq h} \calA_{h'}(x_0). \]
	Then there exists a $\kappa'>0$ such that for all $\kappa\geq\kappa'$ it holds
	\[ \adjustlimits \sup_{\pi\in \overline{\calA}_h(x_0)} \inf_{\mu\in K(\kappa)} \E_\mu\bigl[U_\gamma(X^\pi_T)\bigr] = \adjustlimits \sup_{\pi\in\calA_h(x_0)} \inf_{\mu\in K(\kappa)} \E_\mu\bigl[U_\gamma(X^\pi_T)\bigr]. \]
\end{proposition}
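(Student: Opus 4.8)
The plan is to show both inequalities. The inequality $\geq$ is immediate, since $\calA_h(x_0)\subseteq\overline{\calA}_h(x_0)$ (take $h'=h$), so enlarging the feasible set can only increase the supremum. The work lies entirely in the reverse inequality $\leq$. To this end, I would fix an arbitrary $h'\geq h$ and compare the optimal robust value over $\calA_{h'}(x_0)$ with that over $\calA_h(x_0)$, showing that for $\kappa$ large enough the former cannot exceed the latter; taking the supremum over $h'\geq h$ then yields the claim, provided the threshold $\kappa'$ can be chosen uniformly in $h'$.

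The key computational tool is Corollary \ref{cor:optimal_utility_non-robust} together with the worst-case analysis of Section \ref{subs:the_worst-case_parameter}: for each fixed $h'$, the inner infimum over $K(\kappa)$ of the optimally-controlled utility reduces, up to the monotone transformation $x\mapsto\frac{x_0^\gamma}{\gamma}\exp(\gamma T\,\cdot\,)$, to minimizing $\frac{1}{2(1-\gamma)}\mu^\transp A\mu + h'c^\transp\mu$ over $\mu\in K(\kappa)$, where $A$ and $c$ are as in Definition \ref{def:matrix_A_vector_c}. First I would write this optimal robust value as a function $F(h',\kappa)$ of the investment level $h'$ and the uncertainty radius $\kappa$, using the explicit worst-case parameter from Lemma \ref{lem:worst-case_parameter}. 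The goal is then to show that for $\kappa$ sufficiently large, $h'\mapsto F(h',\kappa)$ is maximized at $h'=h$ over the range $h'\geq h$, i.e.\ that increasing the total risky investment beyond $h$ is strictly disadvantageous once uncertainty is large. The identities in Lemma \ref{lem:properties_of_A_and_c}, namely $A\sigma\sigma^\transp A=A$, $c^\transp\sigma\sigma^\transp A=0$, and $c^\transp\ones=1$, are what make this tractable: they decouple the $A$-part from the $c$-part of the quadratic and pin down the role of $c$ along the direction $\ones$, which is precisely the direction that dominates asymptotically by Proposition \ref{prop:asymptotics_of_mu_star}.

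I would make this precise by examining the asymptotics of $F(h',\kappa)$ as $\kappa\to\infty$. Since the worst-case drift direction tends to $-\ones$ (Proposition \ref{prop:asymptotics_of_mu_star}) and $c^\transp\ones=1$, the linear term $h'c^\transp\mu$ contributes, along the dominant eigendirection, a term that grows like $-h'\kappa$ times a positive constant, so that larger $h'$ drives the worst-case objective down. The kernel property $\ker(A)=\mathrm{span}(\ones)$ from Lemma \ref{lem:A_symmetric_positive_semidefinite} ensures the quadratic term $\mu^\transp A\mu$ does not grow in this dominant direction, so the linear penalty in $h'$ is not swamped. Combining these, the leading-order behavior of $F(h',\kappa)$ in the large-$\kappa$ regime is decreasing in $h'$, which forces the supremum over $h'\geq h$ to be attained at $h'=h$ for all $\kappa$ beyond some threshold $\kappa'$.

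The main obstacle I anticipate is twofold. First, the threshold $\kappa'$ must be chosen \emph{uniformly} over all $h'\geq h$; a naive argument might produce an $h'$-dependent threshold, and one must verify that the monotonicity in $h'$ kicks in at a single finite $\kappa'$ independent of how large $h'$ is. This should follow because the penalizing coefficient multiplying $h'$ is bounded below by a positive constant once $\kappa$ is large, but it requires care in controlling the subdominant eigendirections and the implicit quantity $\psi(\kappa)$ via Lemma \ref{lem:asymptotics_of_a_kappa}. Second, one must confirm that the supremum over the union $\overline{\calA}_h(x_0)=\bigcup_{h'\geq h}\calA_{h'}(x_0)$ may legitimately be evaluated as $\sup_{h'\geq h} F(h',\kappa)$, i.e.\ that within each slice $\calA_{h'}(x_0)$ the optimal robust value is exactly $F(h',\kappa)$ — this is where Theorem \ref{thm:duality_result} (applied with investment level $h'$ in place of $h$) is invoked to justify interchanging the inner supremum and infimum and to identify the saddle value. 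Once these two points are settled, the equality follows.
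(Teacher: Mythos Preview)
Your plan is correct and closely matches the paper's approach: reduce to comparing the saddle values $F(h',\kappa)=\E_{\mu'}[U_\gamma(X^{\pi'}_T)]$ and $F(h,\kappa)=\E_{\mu^*}[U_\gamma(X^{\pi^*}_T)]$ (legitimized, as you note, by applying Theorem~\ref{thm:duality_result} at level $h'$), and then use the identities of Lemma~\ref{lem:properties_of_A_and_c} to rewrite the relevant exponent as $\frac{1}{2(1-\gamma)}\mu^\transp A\mu+h'c^\transp\mu-h'r-\tfrac{1-\gamma}{2}(h')^2c^\transp\sigma\sigma^\transp c$.

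The one place where the paper is sharper than your sketch concerns precisely the uniformity obstacle you flag. Rather than tracking the asymptotics of the $h'$-dependent worst-case $\mu'=\mu'(h',\kappa)$ via Lemma~\ref{lem:worst-case_parameter} and Proposition~\ref{prop:asymptotics_of_mu_star}, the paper uses a one-line comparison: since $\mu'$ minimizes $\mu\mapsto\frac{1}{2(1-\gamma)}\mu^\transp A\mu+h'c^\transp\mu$ on $K(\kappa)$, its minimum value is bounded above by the value of that same functional at the \emph{fixed} point $\mu^*$ (the level-$h$ worst case). This replaces all $h'$-dependent drift quantities by expressions in $\mu^*$ alone, after which the difference between the $h'$- and $h$-exponents is controlled by $(h'-h)c^\transp\mu^*$ together with terms that are manifestly nonincreasing in $h'$. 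Since $c^\transp\ones=1$ and $\ones^\transp\mu^*(\kappa)\to-\infty$, one obtains $c^\transp\mu^*(\kappa)\to-\infty$ and hence a single threshold $\kappa'$, independent of $h'$, beyond which the comparison holds. Your asymptotic route would also succeed, but this optimality trick is what delivers the uniform $\kappa'$ without any further analysis of $\psi(\kappa)$ or the subdominant eigendirections.
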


The previous propositions show that as uncertainty increases it is reasonable for investors to choose strategies $\pi$ with $\langle\pi_t,\ones\rangle$ as small as possible. Even if the class of admissible strategies is enlarged, the optimal value will for large uncertainty be attained by a strategy from $\calA_h(x_0)$. This is in line with the intuition from Proposition~\ref{prop:invest_only_in_bond}, where we have seen that as uncertainty exceeds a certain threshold, investors prefer to not invest anything into the risky assets.

\subsection{Risk aversion and speed of convergence}\label{sec:risk_aversion_and_speed_of_convergence}

As the class of admissible strategies we now take again
\[ \calA_h(x_0)=\bigl\{ \pi\in\calA(x_0) \,\big|\, \langle\pi_t,\ones\rangle = h \text{ for all } t\in[0,T] \bigr\} \]
for some $h>0$.
We have seen in Section~\ref{sec:limit_of_worst-case_parameter_and_optimal_strategy} that the optimal strategy $\pi^*(\kappa)$ for our robust optimization problem with ellipsoidal uncertainty sets $K$ converges as the level of uncertainty $\kappa$ goes to infinity. If the uncertainty set $K$ is a ball, then the limit is a uniform diversification strategy $\frac{h}{d}\ones$.
In the following, we illustrate this convergence by an example and investigate which influence the risk aversion parameter $\gamma$ has on the speed of convergence.
Note that for our class of utility functions, the value $1-\gamma$ is equal to the Arrow--Pratt measure of relative risk aversion. The smaller $\gamma$ is, the more risk-averse is the investor.

\begin{example}\label{ex:example_pi_star_against_kappa}
	We consider a market with $d=8$ risky assets. The volatility matrix has the form
	\[ \sigma=
	\begin{pmatrix*}[l]
		0.3	& 0	& 0	& 0	& 0	& 0	& 0	& 0 \\
		0.2	& 0.3	& 0	& 0	& 0	& 0	& 0	& 0 \\
		0	& 0.2	& 0.3	& 0	& 0	& 0	& 0	& 0 \\
		0.3	& 0.2	& 0	& 0.4	& 0	& 0	& 0	& 0 \\
		0.2	& 0.3	& 0	& 0.1	& 0.3	& 0	& 0	& 0 \\
		0.1	& 0.1	& 0.1	& 0.1	& 0.2	& 0.2	& 0	& 0 \\
		0.2	& 0.1	& 0.2	& 0.1	& 0.2	& 0.2	& 0.4	& 0 \\
		0.1	& 0	& 0	& 0.2	& 0.1	& 0.1	& 0.2	& 0.4
	\end{pmatrix*}. \]
	Investors use strategies from $\calA_h(x_0)$ with $h=1$. Further, we take $\Gamma=I_d$ and $\nu=\frac{3}{10}\ones$ as parameters of the uncertainty ellipsoid. Note that for this choice of the parameter $\nu$ the optimal strategy in the situation without model uncertainty, i.e.\ with $\kappa=0$, does not depend on $\gamma$.
	We then compute the constant optimal portfolio composition $\pi^*(\kappa)$ based on different values of $\gamma$ and for all $\kappa\in(0,0.5)$, and plot the result in Figure~\ref{fig:optimal_portfolio_composition_for_various_gamma} against $\kappa$.
	For any fixed level of uncertainty $\kappa$, the optimal composition $\pi^*(\kappa)$ is plotted as a stacked plot where every color corresponds to one stock.
	
	For small values of $\kappa$, the optimal strategy $\pi^*$ is negative in some components. This leads to an overall investment larger than one on the positive side. As $\kappa$ becomes larger, the composition gets closer and closer to the uniform diversification vector.
	When comparing the different subplots one sees that the convergence is faster for higher values of $\gamma$, an effect that has been shown to hold in general, see Westphal~\cite[Rem.~5.9]{westphal_2019}. This might be surprising at first glance since one expects a more risk-averse investor to choose a ``safer'' strategy sooner than a less risk-averse investor does. However, the effect becomes more intuitive when keeping in mind that we address a robust optimization problem where an investor is confronted with the worst possible drift parameter in the uncertainty set. An investor with a high, positive value of $\gamma$ would, in the non-robust problem, invest in the assets with the allegedly highest drift. In the worst-case market this undiversified strategy would allow the market to choose a very extreme drift parameter with high absolute values for exactly these assets. This implies that a less risk-averse investor is much more prone to the market's choice of a drift parameter. To make up for this, there is more diversification, which can even be amplified by the constraint using $h=1$, and thus the optimal robust strategy converges very fast, so that even for small values of uncertainty $\kappa$, the investor is already driven into the diversified uniform strategy.
	\begin{figure}[p]
		\centering
		\begin{subfigure}{.5\textwidth}
			\centering
			\setlength\figureheight{4.8cm}
			\setlength\figurewidth{0.85\textwidth}
			%\tikzsetnextfilename{speed_of_convergence_-2}
			%\input{figures/speed_of_convergence_-2.tikz}
			\includegraphics{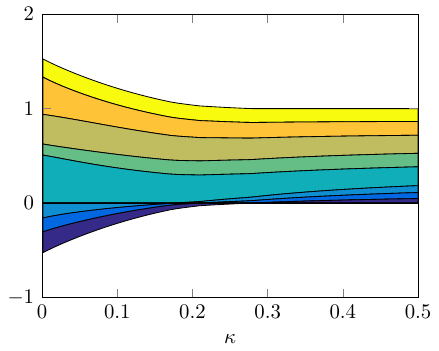}
			\caption{$\gamma=-2$}
		\end{subfigure}%
		\begin{subfigure}{.5\textwidth}
			\centering
			\setlength\figureheight{4.8cm}
			\setlength\figurewidth{0.85\textwidth}
			%\tikzsetnextfilename{speed_of_convergence_-1}
			%\input{figures/speed_of_convergence_-1.tikz}
			\includegraphics{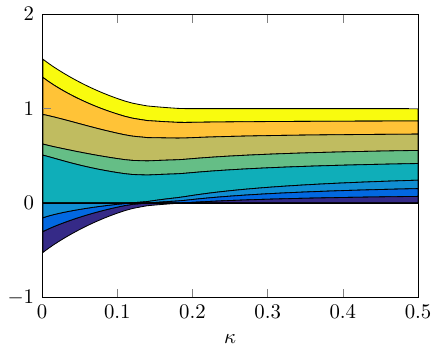}
			\caption{$\gamma=-1$}
		\end{subfigure}
		\newline
		\begin{subfigure}{.5\textwidth}
			\centering
			\setlength\figureheight{4.8cm}
			\setlength\figurewidth{0.85\textwidth}
			%\tikzsetnextfilename{speed_of_convergence_-0.5}
			%\input{figures/speed_of_convergence_-0.5.tikz}
			\includegraphics{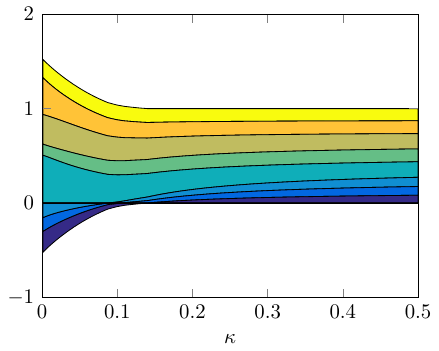}
			\caption{$\gamma=-0.5$}
		\end{subfigure}%
		\begin{subfigure}{.5\textwidth}
			\centering
			\setlength\figureheight{4.8cm}
			\setlength\figurewidth{0.85\textwidth}
			%\tikzsetnextfilename{speed_of_convergence_0}
			%\input{figures/speed_of_convergence_0.tikz}
			\includegraphics{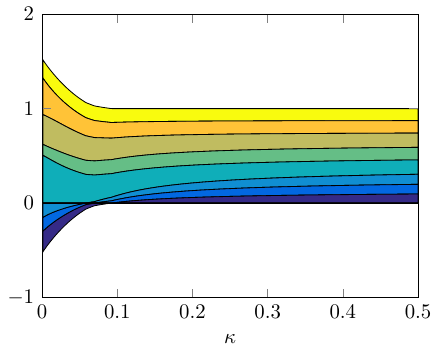}
			\caption{$\gamma=0$}
		\end{subfigure}
		\newline
		\begin{subfigure}{.5\textwidth}
			\centering
			\setlength\figureheight{4.8cm}
			\setlength\figurewidth{0.85\textwidth}
			%\tikzsetnextfilename{speed_of_convergence_0.5}
			%\input{figures/speed_of_convergence_0.5.tikz}
			\includegraphics{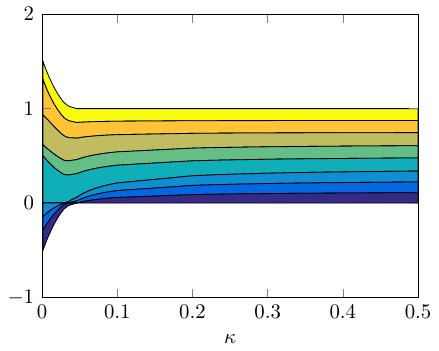}
			\caption{$\gamma=0.5$}
		\end{subfigure}%
		\begin{subfigure}{.5\textwidth}
			\centering
			\setlength\figureheight{4.8cm}
			\setlength\figurewidth{0.85\textwidth}
			%\tikzsetnextfilename{speed_of_convergence_0.9}
			%\input{figures/speed_of_convergence_0.9.tikz}
			\includegraphics{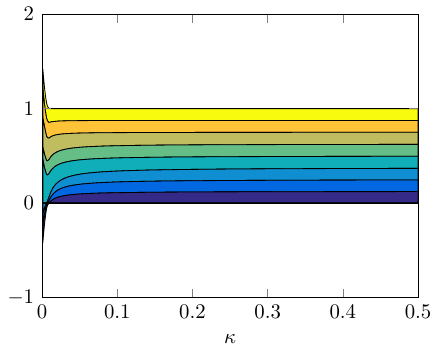}
			\caption{$\gamma=0.9$}
		\end{subfigure}
		\caption{Optimal portfolio composition $\pi^*$ plotted against $\kappa$ for different values of $\gamma$. The model parameters are given in Example~\ref{ex:example_pi_star_against_kappa}. For any $\gamma$, we observe convergence against a uniform diversification strategy. For larger values of $\gamma$, convergence appears to take place faster than for smaller values of $\gamma$.}\label{fig:optimal_portfolio_composition_for_various_gamma}
	\end{figure}
\end{example}

\subsection{Measures of robustness performance}\label{sec:measures_of_robustness_performance}

We have seen that introducing uncertainty in our utility maximization problem leads to more diversified strategies. The question arises what an investor gains from using robust strategies and what downside comes with behaving in a robust way in situations where it is not necessary. These two antithetic effects can be rated by the measures \emph{cost of ambiguity} and \emph{reward for distributional robustness} that have been studied in a different context in Analui~\cite[Sec.~3.4]{analui_2014}.

For our robust maximization problem, the center $\nu$ of the uncertainty ellipsoid can be seen as an estimation for the true drift of the stocks. If an investor was sure that the estimation was correct, she would simply maximize $\E_\nu[U_\gamma(X^\pi_T)]$. From Proposition~\ref{prop:optimal_strategy_non-robust} we know that the optimal strategy is then of the form $(\hat{\pi}_t)_{t\in[0,T]}$ with
\begin{equation}\label{eq:representation_of_pi_hat_robustness}
	\hat{\pi}_t = \frac{1}{1-\gamma}A\nu+hc
\end{equation}
for all $t\in[0,T]$. In the presence of uncertainty, the solution to our utility maximization problem is the strategy $(\pi^*_t)_{t\in[0,T]}$ with
\begin{equation}\label{eq:representation_of_pi_star_robustness}
	\pi^*_t = \frac{1}{1-\gamma}A\mu^*+hc
\end{equation}
for all $t\in[0,T]$, see Theorem~\ref{thm:duality_result}.
We now define measures for the robustness performance that consider the difference in the corresponding certainty equivalents when using $\hat{\pi}$ or $\pi^*$.

\begin{definition}
	We define the \emph{cost of ambiguity} as
	\[ \COA = U_\gamma^{-1}\bigl(\E_\nu\bigl[U_\gamma(X^{\hat{\pi}}_T)\bigr]\bigr)-U_\gamma^{-1}\bigl(\E_\nu\bigl[U_\gamma(X^{\pi^*}_T)\bigr]\bigr) \]
	and the \emph{reward for distributional robustness} as
	\[ \RDR = U_\gamma^{-1}\bigl(\E_{\mu^*}\bigl[U_\gamma(X^{\pi^*}_T)\bigr]\bigr)-U_\gamma^{-1}\bigl(\E_{\mu^*}\bigl[U_\gamma(X^{\hat{\pi}}_T)\bigr]\bigr). \]
\end{definition}

The cost of ambiguity captures how big the loss in the certainty equivalent is when using the robust strategy $\pi^*$, given that the estimation $\nu$ for the drift was actually correct. Note that $\hat{\pi}$ is the best strategy given drift $\nu$ and that $U_\gamma^{-1}$ is a strictly increasing function, hence $\COA$ is non-negative.
The reward for distributional robustness reflects how much an investor is rewarded when using the robust strategy $\pi^*$ compared to the ``naive'' strategy $\hat{\pi}$, assuming that indeed the worst possible drift parameter $\mu^*$ is the true one. We see that also $\RDR$ is non-negative since $\pi^*$ maximizes expected utility given $\mu^*$.

\begin{remark}
	A different definition of $\COA$ and $\RDR$ is possible where one measures the difference in expected utility rather than the difference of the certainty equivalents. The asymptotic behavior of the reward for distributional robustness for large uncertainty is then heavily affected by the parameter $\gamma$ of the investor's utility function. In particular, as $\kappa$ goes to infinity, the reward for distributional robustness goes to zero if $\gamma>0$ and to infinity if $\gamma<0$.
\end{remark}

\begin{proposition}\label{prop:oCOA_greater_than_oRDR}
	Independently of $\gamma\in(-\infty,1)$ it always holds $\COA\geq\RDR$.
\end{proposition}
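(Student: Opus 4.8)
The plan is to reduce the whole statement to the certainty equivalents of the two constant strategies $\hat{\pi}$ and $\pi^*$ under the two measures $\mathbb{P}^\nu$ and $\mathbb{P}^{\mu^*}$, and then to exploit a symmetry which forces the two utility gaps to coincide. First I would compute, for an arbitrary constant strategy $\pi$ with $\langle\pi,\ones\rangle=h$ and an arbitrary drift $\mu$, the certainty equivalent $U_\gamma^{-1}\bigl(\E_\mu[U_\gamma(X^\pi_T)]\bigr)$. Since $\pi$ is constant, $X^\pi_T$ is lognormal under $\mathbb{P}^\mu$, and a direct Gaussian moment computation gives, uniformly for all $\gamma\in(-\infty,1)$ (power and logarithmic utility alike),
\[ U_\gamma^{-1}\bigl(\E_\mu[U_\gamma(X^\pi_T)]\bigr) = x_0\exp\bigl(T f(\pi,\mu)\bigr), \quad f(\pi,\mu) = r + \pi^\transp(\mu-r\ones) - \tfrac12(1-\gamma)\lVert\sigma^\transp\pi\rVert^2. \]

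Next I would substitute the explicit forms $\hat{\pi}=\tfrac{1}{1-\gamma}A\nu+hc$ and $\pi^*=\tfrac{1}{1-\gamma}A\mu^*+hc$ from~\eqref{eq:representation_of_pi_hat_robustness} and~\eqref{eq:representation_of_pi_star_robustness} and simplify $\lVert\sigma^\transp\pi\rVert^2$ using Lemma~\ref{lem:properties_of_A_and_c}: the identities $A\sigma\sigma^\transp A=A$ and $c^\transp\sigma\sigma^\transp A=0$ annihilate the cross terms, so $f(\pi,\mu)$ becomes an explicit quadratic in $\mu$ and in the drift for which $\pi$ is optimal. The key computation is then that the two exponent gaps agree,
\[ T\bigl(f(\hat{\pi},\nu)-f(\pi^*,\nu)\bigr) = T\bigl(f(\pi^*,\mu^*)-f(\hat{\pi},\mu^*)\bigr) = \tfrac{T}{2(1-\gamma)}(\mu^*-\nu)^\transp A(\mu^*-\nu) =: \Lambda \ge 0, \]
the common value being the $A$-quadratic form in $\mu^*-\nu$, which is nonnegative because $A$ is positive semidefinite (Lemma~\ref{lem:A_symmetric_positive_semidefinite}) and $1-\gamma>0$. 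Factoring out the common $e^\Lambda-1$ then yields $\COA=x_0\,e^{Tf(\pi^*,\nu)}(e^\Lambda-1)$ and $\RDR=x_0\,e^{Tf(\hat{\pi},\mu^*)}(e^\Lambda-1)$, so that the claimed inequality $\COA\ge\RDR$ collapses to the single scalar inequality $f(\pi^*,\nu)\ge f(\hat{\pi},\mu^*)$.

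Finally I would identify this remaining inequality with the optimality of $\mu^*$. A short calculation in which the symmetric cross terms $(\mu^*)^\transp A\nu$ and $\nu^\transp A\mu^*$ cancel shows that $f(\pi^*,\nu)-f(\hat{\pi},\mu^*)=\Phi(\nu)-\Phi(\mu^*)$, where $\Phi(\mu)=\tfrac{1}{2(1-\gamma)}\mu^\transp A\mu+hc^\transp\mu$ is precisely the objective in~\eqref{eq:what_is_to_be_minimized_in_mu} that $\mu^*$ minimizes over the ellipsoid $K$. Since $\nu$ is the center of $K$ and hence feasible, $\Phi(\nu)\ge\Phi(\mu^*)$, which closes the argument. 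I expect the main obstacle to be the bookkeeping in the middle step: one must check that the two gaps are \emph{genuinely equal} rather than merely both nonnegative, since it is exactly this symmetry that lets the common factor $e^\Lambda-1$ drop out and leaves only a comparison of base levels; the second, more conceptual point is recognizing that this residual comparison is nothing but the feasibility-based inequality $\Phi(\nu)\ge\Phi(\mu^*)$.
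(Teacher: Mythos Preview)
Your proposal is correct and follows essentially the same route as the paper: both compute the certainty equivalents explicitly for the two constant strategies, extract a common factor arising from the fact that both exponent gaps equal $\frac{T}{2(1-\gamma)}(\mu^*-\nu)^\transp A(\mu^*-\nu)$, and reduce the remaining comparison to $\Phi(\nu)\ge\Phi(\mu^*)$, which holds because $\mu^*$ minimizes $\Phi$ over $K\ni\nu$. The only cosmetic difference is that the paper pulls out the factor $1-e^{-\Lambda}$ and then forms the ratio $\COA/\RDR$, whereas you pull out $e^\Lambda-1$ and compare the base levels $f(\pi^*,\nu)$ and $f(\hat{\pi},\mu^*)$ directly.
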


Furthermore, $\COA$ and $\RDR$ converge as $\kappa$ goes to infinity. We write $\COA(\kappa)$ and $\RDR(\kappa)$ to emphasize the dependence on the degree of uncertainty.

\begin{proposition}\label{prop:limit_of_oCOA_and_oRDR}
	As $\kappa$ goes to infinity, $\COA(\kappa)$ converges to a non-negative limit and $\RDR(\kappa)$ goes to zero.
\end{proposition}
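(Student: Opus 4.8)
The plan is to reduce both $\COA$ and $\RDR$ to explicit exponentials by evaluating the certainty equivalent of a constant strategy in closed form, and then to read off the two limits from the asymptotics of $\pi^*(\kappa)$ and $\mu^*(\kappa)$ already established in Theorem~\ref{thm:limit_of_optimal_strategy} and Proposition~\ref{prop:asymptotics_of_mu_star}. First I would observe that $\hat\pi$ and $\pi^*$ are both constant in time and lie in $\calA_h(x_0)$, so that under any $\mathbb{P}^\mu$ the process $X^\pi$ is a geometric Brownian motion and, by It\^o's formula applied to the wealth SDE,
\[ \log\bigl(X^\pi_T/x_0\bigr) = \Bigl(r + \pi^\transp(\mu - r\ones) - \tfrac{1}{2}\lVert\sigma^\transp\pi\rVert^2\Bigr)T + \pi^\transp\sigma\,W^\mu_T \]
is Gaussian under $\mathbb{P}^\mu$. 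Evaluating the corresponding log-normal moment (for $\gamma\neq 0$ with $U_\gamma^{-1}(y)=(\gamma y)^{1/\gamma}$, and directly for $\gamma=0$) shows that in all cases
\[ U_\gamma^{-1}\bigl(\E_\mu[U_\gamma(X^\pi_T)]\bigr) = x_0\exp\bigl(\Phi(\pi,\mu)\,T\bigr), \quad \Phi(\pi,\mu):=r + \pi^\transp(\mu - r\ones) - \tfrac{1-\gamma}{2}\lVert\sigma^\transp\pi\rVert^2. \]
Since $A\ones=0$ (Lemma~\ref{lem:A_symmetric_positive_semidefinite}) and $c^\transp\ones=1$ (Lemma~\ref{lem:properties_of_A_and_c}), both $\hat\pi$ and $\pi^*$ satisfy $\langle\pi,\ones\rangle=h$, so the linear term reduces to $\pi^\transp\mu+r(1-h)$.

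For the cost of ambiguity, $\hat\pi$ does not depend on $\kappa$, hence $\Phi(\hat\pi,\nu)$ is a fixed number. By Theorem~\ref{thm:limit_of_optimal_strategy}, $\pi^*(\kappa)\to\pi^\infty:=\frac{h}{\ones^\transp\Gamma^{-1}\ones}\Gamma^{-1}\ones$, and $\Phi(\cdot,\nu)$ is continuous in its first argument, so $\Phi(\pi^*(\kappa),\nu)\to\Phi(\pi^\infty,\nu)$, a finite limit. Consequently
\[ \COA(\kappa) = x_0\exp\bigl(\Phi(\hat\pi,\nu)T\bigr) - x_0\exp\bigl(\Phi(\pi^*(\kappa),\nu)T\bigr) \longrightarrow x_0\exp\bigl(\Phi(\hat\pi,\nu)T\bigr) - x_0\exp\bigl(\Phi(\pi^\infty,\nu)T\bigr). \]
Since each $\COA(\kappa)\geq 0$ (because $\hat\pi$ maximizes $\E_\nu[U_\gamma(X^\pi_T)]$ by Proposition~\ref{prop:optimal_strategy_non-robust} and $U_\gamma^{-1}$ is increasing), the limit is non-negative.

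For the reward for distributional robustness the essential feature is that the worst-case drift diverges. By Proposition~\ref{prop:asymptotics_of_mu_star}, $\frac{1}{\kappa}\mu^*(\kappa)\to-\frac{1}{\lVert\tau^{-1}\ones\rVert}\ones$; hence for any strategy $\pi$ with $\langle\pi,\ones\rangle=h$ one has $\frac{1}{\kappa}\pi^\transp\mu^*(\kappa)\to-\frac{h}{\lVert\tau^{-1}\ones\rVert}<0$, so $\pi^\transp\mu^*(\kappa)\to-\infty$. Because $\hat\pi$ is fixed and $\pi^*(\kappa)$ is convergent (hence bounded), the terms $\lVert\sigma^\transp\pi\rVert^2$ stay bounded, so both $\Phi(\hat\pi,\mu^*(\kappa))$ and $\Phi(\pi^*(\kappa),\mu^*(\kappa))$ tend to $-\infty$. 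Therefore both exponentials in
\[ \RDR(\kappa) = x_0\exp\bigl(\Phi(\pi^*(\kappa),\mu^*(\kappa))T\bigr) - x_0\exp\bigl(\Phi(\hat\pi,\mu^*(\kappa))T\bigr) \]
vanish, and $\RDR(\kappa)\to 0$.

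The only genuinely computational step is the closed-form certainty equivalent; everything else is substitution into the known limits. The point to watch is that in $\RDR$ the two certainty equivalents vanish \emph{individually} rather than their difference merely staying bounded, which hinges on $h>0$ together with both competing strategies sharing the budget $\langle\pi,\ones\rangle=h$, so that the worst-case drift $\mu^*$, asymptotically a negative multiple of $\ones$, drives each certainty equivalent to zero.
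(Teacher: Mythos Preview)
Your proof is correct and follows essentially the same route as the paper: both arguments write the certainty equivalents of the constant strategies $\hat\pi$ and $\pi^*$ in the closed form $x_0\exp(\Phi(\pi,\mu)T)$, then use the convergence $\pi^*(\kappa)\to\pi^\infty$ for $\COA$ and the divergence of $\pi^\transp\mu^*(\kappa)$ for $\RDR$. The only cosmetic difference is that the paper obtains $(\pi^*_0)^\transp\mu^*\to-\infty$ via the explicit relation $(\pi^*_0)^\transp\mu^*=(\pi^*_0)^\transp\nu-\kappa\sqrt{(\pi^*_0)^\transp\Gamma\pi^*_0}$ from the proof of Theorem~\ref{thm:duality_result}, whereas you deduce it directly from Proposition~\ref{prop:asymptotics_of_mu_star} together with $\langle\pi,\ones\rangle=h$.
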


Figure~\ref{fig:study_robustness_ce} illustrates the behavior of $\COA$ and $\RDR$ in dependence on the level of uncertainty $\kappa$. We consider a market with $d=8$ stocks, where the underlying market parameters are those from Example~\ref{ex:example_pi_star_against_kappa}. The figure shows $\COA$ and $\RDR$ plotted against $\kappa$ for different values of $\gamma$. Note that the scaling in the second row of subfigures is different from the scaling in the first row. The absolute values of $\COA$ and $\RDR$ become smaller as $\gamma$ increases.

We observe that the qualitative behavior of $\COA$ and $\RDR$ is the same for any value of the risk aversion coefficient $\gamma$. For any fixed $\gamma$ and $\kappa$, $\RDR$ is always less than $\COA$, a property that we have proven in Proposition~\ref{prop:oCOA_greater_than_oRDR}. As $\kappa$ increases, $\COA$ goes to a finite positive limit, whereas $\RDR$ tends to zero, as we have shown in Proposition~\ref{prop:limit_of_oCOA_and_oRDR}.

\begin{figure}[ht]
	\begin{subfigure}{.29\textwidth}
		\centering
		\setlength\figureheight{3cm}
		\setlength\figurewidth{0.82\textwidth}
		%\tikzsetnextfilename{robustness_measures_ce_-1}
		%\input{figures/robustness_measures_ce_-1.tikz}
		\includegraphics{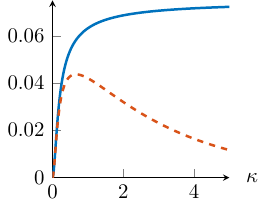}
		\setlength{\abovecaptionskip}{-10pt}
		\setlength{\belowcaptionskip}{10pt}
		\caption{$\gamma=-1$}
	\end{subfigure}%
	\begin{subfigure}{.29\textwidth}
		\centering
		\setlength\figureheight{3cm}
		\setlength\figurewidth{0.82\textwidth}
		%\tikzsetnextfilename{robustness_measures_ce_-0.5}
		%\input{figures/robustness_measures_ce_-0.5.tikz}
		\includegraphics{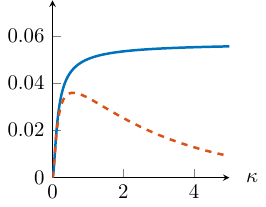}
		\setlength{\abovecaptionskip}{-10pt}
		\setlength{\belowcaptionskip}{10pt}
		\caption{$\gamma=-0.5$}
	\end{subfigure}%
	\begin{subfigure}{.29\textwidth}
		\centering
		\setlength\figureheight{3cm}
		\setlength\figurewidth{0.82\textwidth}
		%\tikzsetnextfilename{robustness_measures_ce_-0.1}
		%\input{figures/robustness_measures_ce_-0.1.tikz}
		\includegraphics{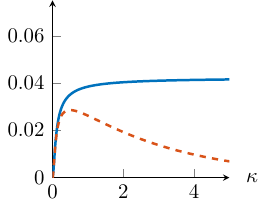}
		\setlength{\abovecaptionskip}{-10pt}
		\setlength{\belowcaptionskip}{10pt}
		\caption{$\gamma=-0.1$}
	\end{subfigure}
	\newline
	\begin{subfigure}{.29\textwidth}
		\centering
		\setlength\figureheight{3cm}
		\setlength\figurewidth{0.82\textwidth}
		%\tikzsetnextfilename{robustness_measures_ce_0}
		%\input{figures/robustness_measures_ce_0.tikz}
		\includegraphics{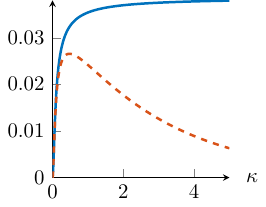}
		\caption{$\gamma=0$}
	\end{subfigure}%
	\begin{subfigure}{.29\textwidth}
		\centering
		\setlength\figureheight{3cm}
		\setlength\figurewidth{0.82\textwidth}
		%\tikzsetnextfilename{robustness_measures_ce_0.1}
		%\input{figures/robustness_measures_ce_0.1.tikz}
		\includegraphics{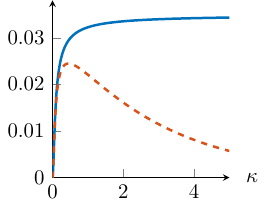}
		\caption{$\gamma=0.1$}
	\end{subfigure}%
	\begin{subfigure}{.29\textwidth}
		\centering
		\setlength\figureheight{3cm}
		\setlength\figurewidth{0.82\textwidth}
		%\tikzsetnextfilename{robustness_measures_ce_0.5}
		%\input{figures/robustness_measures_ce_0.5.tikz}
		\includegraphics{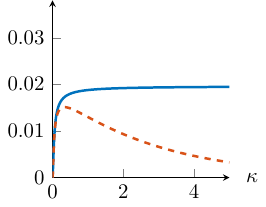}
		\caption{$\gamma=0.5$}
	\end{subfigure}%
	\begin{subfigure}{.13\textwidth}
		\centering
		\setlength\figureheight{3cm}
		\setlength\figurewidth{0.5\textwidth}
		%		\tikzsetnextfilename{legend_COA_RDR_ce}
		%		\begin{tikzpicture}
		%			\begin{axis}[hide axis, xmin=0, xmax=1, ymin=-1, ymax=1,
		%			legend style={legend cell align=left, anchor=north, at={(0,1)}}, font=\small]
		%				
		%				\definecolor{matlabblue}{rgb}{0.00000,0.44700,0.74100}
		%				\definecolor{matlabred}{rgb}{0.85000,0.32500,0.09800}
		%				
		%				\addlegendimage{empty legend}
		%				\addlegendentry{\hspace{-.7cm}\textbf{Legend}};
		%				\addlegendimage{color=matlabblue,solid,very thick}
		%				\addlegendentry{$\COA$};
		%				\addlegendimage{color=matlabred,dashed,very thick}
		%				\addlegendentry{$\RDR$};
		%			\end{axis}
		%		\end{tikzpicture}
		\includegraphics{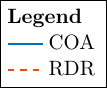}
	\end{subfigure}
	\caption{The behavior of $\COA$ and $\RDR$ plotted against uncertainty radius $\kappa$ for different values of the risk aversion coefficient $\gamma$. The parameters are those from Example~\ref{ex:example_pi_star_against_kappa}.}\label{fig:study_robustness_ce}
\end{figure}

\section{Outlook on stochastic drift and time-dependent uncertainty sets}\label{sec:outlook}

In this section we want to give a brief outlook on how the results of this paper can be applied also in more general financial market models with a stochastic drift process. This generalization is the topic of our follow-up work Sass and Westphal~\cite{sass_westphal_2021}. Here we only give a short outline of the setup to illustrate the relevance of this work.

In Sass and Westphal~\cite{sass_westphal_2021} the results of the present paper are generalized to a financial market with a stochastic drift process and time-dependent uncertainty sets $(K_t)_{t\in[0,T]}$. This is motivated by the idea that information about the hidden drift process, as e.g.\ obtained from filtering techniques, might change over time. A surplus of information should then be reflected in a smaller uncertainty set.
More precisely, we assume that under the reference measure returns follow the dynamics
\[ \rmd R_t = \nu_t\,\rmd t+\sigma\,\rmd W_t, \]
where the reference drift $(\nu_t)_{t\in[0,T]}$ is adapted to the filtration $(\calG_t)_{t\in[0,T]}$ representing the investor's information. This is justified by a separation principle where one performs a filtering step before solving the optimization problem, i.e.\ $(\nu_t)_{t\in[0,T]}$ represents the investor's filter for the drift process. We introduce a time-dependent uncertainty set $(K_t)_{t\in[0,T]}$ that is a set-valued stochastic process adapted to $(\calG_t)_{t\in[0,T]}$, meaning that the investor knows the realization of $K_t$ at time $t$.

It is not obvious how to set up a worst-case optimization problem in this time-dependent setting. The problem lies in the fact that the realization of the uncertainty sets $(K_t)_{t\in[0,T]}$ is not known initially but gets revealed over time. A worst-case drift process $(\mu_t)_{t\in[0,T]}$ is characterized by being the worst one with the property that $\mu_t\in K_t$ for all $t\in[0,T]$. However, optimization with respect to this worst-case drift process is not feasible for an investor since it is not known initially. Instead, it makes sense to consider the following local approach. For any fixed $t\in[0,T]$, the current uncertainty set $K_t$ is known. Given this $K_t$, investors take model uncertainty into account by assuming that in the future the worst possible drift process having values in $K_t$ will be realized, i.e.\ the worst drift process from the class
\[ \calK^{(t)}=\bigl\{(\mu^{(t)}_s)_{s\in[t,T]} \,\big|\, \mu^{(t)}_s\in K_t \text{ and } \mu^{(t)}_s \text{ is }\calG_t\text{-measurable for each }s\in[t,T]\bigr\}. \]
Investors then solve at each time $t\in[0,T]$ the local optimization problem
\begin{equation}\label{eq:optimization_problem_time_dependent}
	\adjustlimits \sup_{\pi^{(t)}\in\calA_h(t,x)} \inf_{\mu^{(t)}\in \calK^{(t)}} \E_{\mu^{(t)}}\Bigl[U\bigl(X^{t,x,\pi^{(t)}}_T\bigr)\Bigr].
\end{equation}
Here, we write $X^{t,x,\pi^{(t)}}_s$ for the wealth at $s\in[t,T]$ when starting at~$t$ with~$x$ and using strategy $\pi^{(t)}\in\calA_h(t,x)$, where the admissibility set is defined analogously to $\calA_h(x_0)$ for strategies starting at $t$. This leads to an optimal strategy $(\pi^{(t),*}_s)_{s\in[t,T]}$. In our continuous-time setting this decision will be revised as soon as $K_t$ changes, possibly continuously in time. The realized optimal strategy of the investor is then given by $\pi^*_t=\pi^{(t),*}_t$ for any $t\in[0,T]$.

This setup of the local optimization problems is reasonable from an investor's point of view. The uncertainty sets $K_t$ change continuously in time due to new incoming information along with return observations, for example. Naturally, the optimal strategy of the investor will then also be adapted continuously. In Sass and Westphal~\cite{sass_westphal_2021} it is shown in detail how the results of this paper can be used to solve the above described more complicated problem. An explicit representation of the optimal strategy and a minimax theorem can be derived. Those results then also apply to much more general financial market models.
The convergence results from Section~\ref{cha:asymptotic_behavior_as_uncertainty_increases}, however, do not have a straightforward analogon in the setting with time-dependent uncertainty sets.

\begin{remark}	
	Initially it is not clear whether we have an inconsistent control problem, cf.\ Bj\"{o}rk et al.~\cite{bjoerk_khapko_murgoci_2017}, in our original formulation~\eqref{eq:robust_problem_power_log}. But for the special case of~\eqref{eq:optimization_problem_time_dependent} with a constant uncertainty set $K$, the results in Sass and Westphal~\cite{sass_westphal_2021} show that one obtains at time~$t$ the same optimal risky fractions as when starting at time~$0$. In combination with the Bellman principle, which implies that at time~$t$ we only need the information $X_t=x$, this proves that our robust utility maximization problem with optimal solution $\pi^*$ obtained in Section~\ref{cha:a_duality_approach} is time-consistent.
	A generalization to allowing for more probability measures than those corresponding to a constant drift in a formulation based on a robust utility functional may raise consistency issues and would need assumptions on the structure of this set. This may then be treated as in M\"{u}ller~\cite{mueller_2005} under appropriate conditions.
\end{remark}

\appendix

\section{Proofs}\label{app:proofs}

For better readability of the paper, all proofs are collected in this appendix.

\begin{proof}[Proof of Proposition~\ref{prop:invest_only_in_bond}]
	Let $\mu\in K$ and $\pi\in\calA(x_0)$.
	We consider the case $\gamma=0$ first. The expected logarithmic utility of terminal wealth under measure $\mathbb{P}^\mu$ is
	\[ \E_\mu\bigl[\log(X^\pi_T)\bigr] = \log(x_0)+\E_\mu\biggl[\int_0^T \Bigl(r+\pi_t^\transp(\mu-r\ones)-\frac{1}{2}\lVert\sigma^\transp\pi_t\rVert^2\Bigr)\rmd t\biggr]. \]
	Since the vector $r\ones$ is an element of the set $K$, we immediately see that
	\[ \inf_{\mu\in K} \E_\mu\bigl[\log(X^\pi_T)\bigr] \leq \E_{r\ones}\bigl[\log(X^\pi_T)\bigr] \leq \log(x_0)+rT, \]
	so we can deduce that the trivial strategy $\pi\equiv 0$ is optimal for~\eqref{eq:recall_value_function_robust}, since $\pi\equiv 0$ leads to expected utility of terminal wealth $\log(x_0)+rT$ under each of the measures $\mathbb{P}^\mu$.
	
	For power utility, i.e.\ $\gamma\neq 0$, the argumentation is similar. Since $r\ones\in K$, we have
	\[ \inf_{\mu\in K} \E_\mu\bigl[U_\gamma(X^\pi_T)\bigr] \leq \frac{x_0^\gamma}{\gamma}\rme^{\gamma rT}\E_{r\ones}\biggl[ \exp\biggl(-\frac{\gamma}{2}\int_0^T\lVert\sigma^\transp\pi_t\rVert^2\,\rmd t+ \gamma\int_0^T \pi_t^\transp \sigma\,\rmd W^{r\ones}_t\biggr) \biggr] \]
	and we can rewrite the expectation on the right-hand side as
	\[ \E_{r\ones}\biggl[ \exp\biggl(\gamma\int_0^T\! \pi_t^\transp \sigma\,\rmd W^{r\ones}_t-\frac{1}{2}\gamma^2\int_0^T\!\lVert\sigma^\transp\pi_t\rVert^2\,\rmd t\biggr)\exp\biggl( -\frac{1}{2}\gamma(1-\gamma)\int_0^T\!\lVert\sigma^\transp\pi_t\rVert^2\,\rmd t \biggr) \biggr]. \]
	Thus,
	\[ \inf_{\mu\in K} \E_\mu\bigl[U_\gamma(X^\pi_T)\bigr] \leq \frac{x_0^\gamma}{\gamma}\rme^{\gamma rT}\E_{r\ones}\biggl[ \exp\biggl(\gamma\int_0^T \pi_t^\transp \sigma\,\rmd W^{r\ones}_t-\frac{1}{2}\gamma^2\int_0^T\lVert\sigma^\transp\pi_t\rVert^2\,\rmd t\biggr)\biggr]. \]
	But the exponential local martingale in the expression above has expectation less or equal than one, so
	\[ \inf_{\mu\in K} \E_\mu\bigl[U_\gamma(X^\pi_T)\bigr] \leq \frac{x_0^\gamma}{\gamma}\rme^{\gamma rT}. \]
	Again, as for logarithmic utility, the trivial strategy $\pi\equiv 0$ is optimal for~\eqref{eq:recall_value_function_robust} if $r\ones\in K$, since the zero strategy leads exactly to expected power utility $\frac{x_0^\gamma}{\gamma}\rme^{\gamma rT}$.
\end{proof}

\begin{proof}[Proof of Lemma~\ref{lem:definition_D}]
	Since $d\leq m$ and $\sigma\in\R^{d\times m}$ has rank $d$, the rows of $\sigma$ are independent vectors in $\R^m$. Now $D\sigma\in\R^{(d-1)\times m}$ and due to the specific form of $D$, the $i$-th row of $D\sigma$ is $\sigma_{i,\cdot}-\sigma_{d,\cdot}$, $i=1,\dots,d-1$. Here, $\sigma_{i,\cdot}$ denotes the $i$-th row of matrix $\sigma$.
	Now from the independence of $\sigma_{1,\cdot},\dots,\sigma_{d,\cdot}$ it follows for any $a_1,\dots,a_{d-1}\in\R$ that if
	\[ 0=\sum_{i=1}^{d-1} a_i(\sigma_{i,\cdot}-\sigma_{d,\cdot})=\sum_{i=1}^{d-1} a_i\sigma_{i,\cdot}-\sum_{i=1}^{d-1} a_i\sigma_{d,\cdot},  \]
	then $a_1=\cdots=a_{d-1}=0$. Hence, the rows of $D\sigma$ are independent, and $\mathrm{rank}(D\sigma)=d-1$.
\end{proof}

\begin{proof}[Proof of Proposition~\ref{prop:optimal_strategy_non-robust}]
	Let $\pi\in\calA_h(x_0)$. Then $\pi^d_t = h-\sum_{i=1}^{d-1}\pi^i_t$ for all $t\in[0,T]$, therefore we can transform
	\begin{align}
		\pi_t^\transp(\mu-r\ones) &= \sum_{i=1}^{d-1}\pi^i_t(\mu^i-r) + \biggl(h-\sum_{i=1}^{d-1}\pi^i_t\biggr)(\mu^d-r) = h(e_d^\transp\mu-r)+\widetilde{\pi}_t^\transp D\mu,\label{eq:wealth_representation_Lebesgue_term}\\
		\pi_t^\transp\sigma &= \sum_{i=1}^{d-1} \pi^i_t\sigma_{i,\cdot} + \biggl(h-\sum_{i=1}^{d-1}\pi^i_t\biggr)\sigma_{d,\cdot} = he_d^\transp\sigma+\widetilde{\pi}_t^\transp D\sigma,\label{eq:wealth_representation_stochastic_term}
	\end{align}
	where $\widetilde{\pi}_t:=\pi^{1:d-1}_t$ for all $t\in[0,T]$, and where $\sigma_{i,\cdot}$ denotes the $i$-th row of matrix $\sigma$.
	In the representation of the wealth process we first plug in~\eqref{eq:wealth_representation_stochastic_term} into the stochastic integral.
	For $\gamma\neq 0$ we perform a change of measure
	\[ \frac{\rmd \widetilde{\mathbb{P}}}{\rmd \mathbb{P}^\mu}=Z_T=\exp\biggl( \int_0^T \gamma he_d^\transp\sigma\,\rmd W^\mu_t -\frac{1}{2}\int_0^T \lVert \gamma h \sigma^\transp e_d \rVert^2\,\rmd t \biggr), \]
	such that under the measure $\widetilde{\mathbb{P}}$, the process $(\widetilde{W}^\mu_t)_{t\in[0,T]}$ with $\widetilde{W}^\mu_t=W^\mu_t-\int_0^t \gamma h\sigma^\transp e_d\,\rmd s$ is a Brownian motion by Girsanov's Theorem. We obtain
	\begin{equation*}
		\begin{aligned}
			&\E_\mu\biggl[\exp\biggl(\gamma\int_0^T \Bigl(\pi_t^\transp(\mu-r\ones) -\frac{1}{2}\lVert\sigma^\transp\pi_t\rVert^2\Bigr)\rmd t + \gamma\int_0^T \pi_t^\transp \sigma\,\rmd W^\mu_t \biggr)\biggr]\\
			&= \widetilde{\E}\biggl[\exp\biggl(\gamma\!\int_0^T\!\!\! \Bigl(\pi_t^\transp(\mu-r\ones) -\frac{1}{2}\lVert\sigma^\transp\!\pi_t\rVert^2+\frac{1}{2}\gamma\lVert h \sigma^\transp e_d \rVert^2\Bigr)\rmd t +\!\int_0^T \!\!\!\gamma\widetilde{\pi}_t^\transp D\sigma\,\rmd W^\mu_t \biggr)\biggr]\\
			&= \widetilde{\E}\biggl[\exp\biggl( \gamma\int_0^T \!\!\Bigl(\pi_t^\transp(\mu-r\ones) -\frac{1}{2}\lVert\sigma^\transp\pi_t\rVert^2+\frac{1}{2}\gamma\lVert h \sigma^\transp e_d \rVert^2+\gamma h\widetilde{\pi}_t^\transp D\sigma\sigma^\transp e_d\Bigr)\rmd t \\
			&\qquad\qquad\qquad+\int_0^T \!\!\gamma\widetilde{\pi}_t^\transp D\sigma\,\rmd \widetilde{W}^\mu_t \biggr)\biggr].
		\end{aligned}
	\end{equation*}
	By straightforward calculations using~\eqref{eq:wealth_representation_Lebesgue_term} and~\eqref{eq:wealth_representation_stochastic_term} the integrand in the Lebesgue integral above can be rewritten as
	\begin{equation*}
		\widetilde{\pi}_t^\transp\bigl(D\mu - h(1-\gamma)D\sigma\sigma^\transp e_d\bigr)-\frac{1}{2}\lVert(D\sigma)^\transp \widetilde{\pi}_t \rVert^2 +he_d^\transp\mu-hr-\frac{1}{2}(1-\gamma)\lVert h\sigma^\transp e_d \rVert^2.
	\end{equation*}
	If we now substitute
	\begin{equation}\label{eq:substitution_r_mu_sigma}
		\begin{aligned}
			\widetilde{\sigma}&=D\sigma, \\
			\widetilde{r}&=(1-h)r+he_d^\transp\mu-\frac{1}{2}(1-\gamma)\lVert h\sigma^\transp e_d \rVert^2, \\
			\widetilde{\mu}&=D\mu - h(1-\gamma)D\sigma\sigma^\transp e_d+\widetilde{r}\mathbf{1}_{d-1},
		\end{aligned}
	\end{equation}
	then the expected utility of terminal wealth is given by
	\begin{equation}\label{eq:reduction_of_utility_to_d-1}
		\begin{aligned}
			&\E_\mu\bigl[ U_\gamma(X^\pi_T) \bigr] =\frac{x_0^\gamma}{\gamma}\widetilde{\E}\biggl[ \exp\biggl(\gamma\int_0^T \!\Bigl(\widetilde{r}+\widetilde{\pi}_t^\transp(\widetilde{\mu}-\widetilde{r}\mathbf{1}_{d-1})-\frac{1}{2}\lVert\widetilde{\sigma}^\transp\widetilde{\pi}_t\rVert^2\Bigr) \rmd t +\gamma\int_0^T \!\widetilde{\pi}_t^\transp \widetilde{\sigma}\,\rmd \widetilde{W}_t \biggr) \biggr].
		\end{aligned}
	\end{equation}
	In the case $\gamma=0$, like in the power utility case, we can represent expected utility of terminal wealth as
	\begin{equation}\label{eq:reduction_of_utility_to_d-1_log}
		\begin{aligned}
			\E_\mu\bigl[\log(X^\pi_T)\bigr] &= \log(x_0)+\widetilde{r}\,T+\E\biggl[\int_0^T \Bigl(\widetilde{\pi}_t^\transp \bigl(\widetilde{\mu}-\widetilde{r}\mathbf{1}_{d-1}\bigr) -\frac{1}{2}\lVert\widetilde{\sigma}^\transp \widetilde{\pi}_t \rVert^2\Bigr)\rmd t\biggr],
		\end{aligned}
	\end{equation}
	where we use the same substitution with $\widetilde{r}$, $\widetilde{\mu}$ and $\widetilde{\sigma}$ as in~\eqref{eq:substitution_r_mu_sigma} for $\gamma=0$.
	
	In both cases $\gamma\neq 0$ and $\gamma=0$ we realize that the expressions in~\eqref{eq:reduction_of_utility_to_d-1} and~\eqref{eq:reduction_of_utility_to_d-1_log} are again the expected utility of terminal wealth in a financial market with $d-1$ risky assets where the risk-free interest rate is $\widetilde{r}$, the drift of the $d-1$ risky assets is given by $\widetilde{\mu}\in\R^{d-1}$, and the volatility matrix is $\widetilde{\sigma}\in\R^{(d-1)\times m}$.
	So we have reduced the $d$-dimensional constrained problem to a $(d-1)$-dimensional unconstrained problem. When trying to maximize the right-hand side of~\eqref{eq:reduction_of_utility_to_d-1}, respectively~\eqref{eq:reduction_of_utility_to_d-1_log}, over all admissible strategies $\widetilde{\pi}$ with values in $\R^{d-1}$, we know from Merton~\cite{merton_1969} that the optimal strategy is constant in time and has the form
	\begin{equation*}
		\widetilde{\pi}_t = \frac{1}{1-\gamma}(\widetilde{\sigma}\widetilde{\sigma}^\transp)^{-1}(\widetilde{\mu}-\widetilde{r}\mathbf{1}_{d-1})
		= \frac{1}{1-\gamma}(D\sigma\sigma^\transp D^\transp)^{-1}\bigl(D\mu - h(1-\gamma)D\sigma\sigma^\transp e_d\bigr).
	\end{equation*}
	We now return to our original $d$-dimensional market, using the relation $\pi_t=D^\transp\widetilde{\pi}_t+he_d$, giving us the optimal strategy $\pi$ with
	\[ \pi_t = \frac{1}{1-\gamma}D^\transp(D\sigma\sigma^\transp D^\transp)^{-1}D\mu +h\bigl(I_d-D^\transp(D\sigma\sigma^\transp D^\transp)^{-1}D\sigma\sigma^\transp\bigr)e_d=\frac{1}{1-\gamma}A\mu+hc.\qedhere \]
\end{proof}

\begin{proof}[Proof of Lemma~\ref{lem:A_symmetric_positive_semidefinite}]
	Note that $D\sigma\sigma^\transp D^\transp$ is symmetric. Hence, the same is true for its inverse and thus for $D^\transp(D\sigma\sigma^\transp D^\transp)^{-1}D$.
	Also, $D\sigma\sigma^\transp D^\transp=(D\sigma)(D\sigma)^\transp$ is positive definite since $\sigma\in\R^{d\times m}$ has rank $d$ and therefore by Lemma~\ref{lem:definition_D}, $D\sigma$ has full row rank $d-1$. It follows that also the inverse $(D\sigma\sigma^\transp D^\transp)^{-1}$ is positive definite. So since
	\[ x^\transp Ax=x^\transp D^\transp(D\sigma\sigma^\transp D^\transp)^{-1}Dx=(Dx)^\transp(D\sigma\sigma^\transp D^\transp)^{-1}(Dx)\geq 0 \]
	for any $x\in\R^d$, the matrix $A$ is positive semidefinite.
	Furthermore, it is easy to check that $\mathrm{ker}(D)=\mathrm{span}(\{\ones\})$ and $\mathrm{ker}(D^\transp)=\{0\}$. Hence, it holds $Ax=D^\transp(D\sigma\sigma^\transp D^\transp)^{-1}Dx=0$ if and only if $(D\sigma\sigma^\transp D^\transp)^{-1}Dx=0$, which is equivalent to $Dx=0$. Hence we can deduce $\mathrm{ker}(A)=\mathrm{ker}(D)=\mathrm{span}(\{\ones\})$.
\end{proof}

\begin{proof}[Proof of Lemma~\ref{lem:worst-case_parameter}]
	Recall that $\tau^\transp A\tau$ has an eigenvalue $\lambda_1=0$ with a corresponding normed eigenvector of the form $v_1=\frac{1}{\lVert \tau^{-1}\ones \rVert}\tau^{-1}\ones$. The other eigenvalues of $\tau^\transp A\tau$ are positive and due to symmetry we can assume that $v_1,\dots,v_d$ form an orthogonal basis of $\R^d$.
	Firstly, we observe that the gradient of $g$ is
	\[ \nabla g(\rho) = \frac{1}{2(1-\gamma)}2\tau^\transp A\tau\rho+\tau^\transp\Bigl(hc+\frac{1}{1-\gamma}A\nu\Bigr)
	= \tau^\transp\biggl( A\Bigl(\frac{1}{1-\gamma}(\tau\rho+\nu)-h\sigma\sigma^\transp e_d\Bigr)+h e_d\biggr). \]
	It follows that there is no $\rho\in B_\kappa(0)$ with $\nabla g(\rho)=0$, since $\tau^\transp$ is nonsingular and $he_d$ is not in the range of $A=D^\transp(D\sigma\sigma^\transp D^\transp)^{-1}D$. The minimum of $g$ on $B_\kappa(0)$ is therefore attained on the boundary.
	
	Let $\rho\in B_\kappa(0)$ be arbitrary. Since $v_1,\dots,v_d$ form a basis of $\R^d$, we are able to write $\rho=\sum_{i=1}^d a_iv_i$, where $a_1,\dots,a_d\in\R$ are uniquely determined. Since we know that a minimizer of the function $g$ must lie on the boundary of $B_\kappa(0)$ we obtain the constraint
	\begin{equation}\label{eq:norm_of_mu_squared}
		\kappa^2=\lVert\rho\rVert^2=\sum_{i=1}^d a_i^2
	\end{equation}
	on the coefficients. Before doing the minimization, we first notice that for our minimizer, the coefficient $a_1$ will be less or equal than zero. This is because
	\begin{equation}\label{eq:function_g_influence_a_1}
		\begin{aligned}
			g\biggl(\sum_{i=1}^d a_iv_i\biggr)
			&= \frac{1}{2(1-\gamma)}\biggl(\sum_{i=1}^d a_iv_i\biggr)^\transp \tau^\transp A\tau\biggl(\sum_{i=1}^d a_iv_i\biggr)+\Bigl(hc+\frac{1}{1-\gamma}A\nu\Bigr)^\transp\tau\biggl(\sum_{i=1}^d a_iv_i\biggr) \\
			%&= \frac{1}{2(1-\gamma)}\sum_{i=1}^d\sum_{j=1}^d a_ia_jv_i^\transp \tau^\transp A\tau v_j +\sum_{i=1}^d a_ihc^\transp\tau v_i+\frac{1}{1-\gamma}\sum_{i=1}^d a_i(A\nu)^\transp\tau v_i \\
			&= \frac{1}{2(1-\gamma)}\sum_{i=1}^d a_i^2\lambda_i+\sum_{i=1}^d a_ihc^\transp\tau v_i +\frac{1}{1-\gamma}\sum_{i=1}^d a_i\nu^\transp\lambda_i(\tau^\transp)^{-1} v_i \\
			&= \frac{1}{2(1-\gamma)}\sum_{i=2}^d a_i^2\lambda_i+\sum_{i=2}^d a_i\Bigl(hc+\frac{\lambda_i}{1-\gamma}\Gamma^{-1}\nu\Bigr)^\transp\tau v_i +a_1hc^\transp\tau v_1.
		\end{aligned}
	\end{equation}
	Next, one easily sees that
	\begin{equation}\label{eq:c_transposed_v_1}
		c^\transp\tau v_1=e_d^\transp(I_d-A\sigma\sigma^\transp)^\transp\tau\frac{1}{\lVert \tau^{-1}\ones \rVert}\tau^{-1}\ones=\frac{1}{\lVert \tau^{-1}\ones \rVert}e_d^\transp(\ones-\sigma\sigma^\transp A\ones)=\frac{1}{\lVert \tau^{-1}\ones \rVert},
	\end{equation}
	since $A\ones=0$ by Lemma~\ref{lem:A_symmetric_positive_semidefinite}. By plugging in this representation we deduce that, when looking for the minimizer of $g$, we can restrict to the parameters $\rho$ with coefficient $a_1\leq 0$. 
	We obtain
	\begin{equation*}
		\begin{aligned}
			\widetilde{g}(a_2,\dots,&a_d)
			:=g\biggl(\sum_{i=1}^d a_iv_i\biggr) \\
			&= \frac{1}{2(1-\gamma)}\sum_{i=2}^d a_i^2\lambda_i+\sum_{i=2}^d a_i\Bigl(hc+\frac{\lambda_i}{1-\gamma}\Gamma^{-1}\nu\Bigr)^\transp\tau v_i -\frac{h}{\lVert \tau^{-1}\ones \rVert}\sqrt{\kappa^2-\sum_{i=2}^d a_i^2},
		\end{aligned}
	\end{equation*}
	and minimize this expression in $a_2,\dots,a_d$.
	Note that the domain of $\widetilde{g}$ is $\{x\in\R^{d-1}\,|\,\lVert x\rVert\leq\kappa\}$. In the interior of this domain, the partial derivative of $\widetilde{g}$ with respect to $a_k$, $k=2,\dots,d$, is given by
	\begin{equation*}
		\begin{aligned}
			\frac{\partial \widetilde{g}}{\partial a_k}(a_2,\dots,a_d)
			&= \frac{2a_k\lambda_k}{2(1-\gamma)}+\Bigl(hc+\frac{\lambda_k}{1-\gamma}\Gamma^{-1}\nu\Bigr)^\transp \tau v_k-\frac{h}{2\lVert \tau^{-1}\ones \rVert\sqrt{\kappa^2-\sum_{i=2}^d a_i^2}}(-2a_k) \\
			&= \Biggl(\frac{\lambda_k}{1-\gamma}+\frac{h}{\lVert \tau^{-1}\ones \rVert\sqrt{\kappa^2-\sum_{i=2}^d a_i^2}}\Biggr)a_k+\Bigl(hc+\frac{\lambda_k}{1-\gamma}\Gamma^{-1}\nu\Bigr)^\transp\tau v_k.
		\end{aligned}
	\end{equation*}
	When setting this expression equal to zero, we obtain
	\begin{equation}\label{eq:the_coefficients_a_k}
		\begin{aligned}
			a_k %&= -\Biggl( \frac{\lambda_k}{1-\gamma}+\frac{h}{\lVert \tau^{-1}\ones \rVert\sqrt{\kappa^2-\sum_{i=2}^d a_i^2}} \Biggr)^{-1}\Bigl(hc+\frac{\lambda_k}{1-\gamma}\Gamma^{-1}\nu\Bigr)^\transp\tau v_k \\
			&= -\biggl( \frac{\lambda_k}{1-\gamma}-\frac{h}{\lVert \tau^{-1}\ones \rVert a_1} \biggr)^{-1}\biggl\langle h\tau^\transp c+\frac{\lambda_k}{1-\gamma}\tau^{-1}\nu, v_k\biggr\rangle.
		\end{aligned}
	\end{equation}
	Note that this representation does not provide the coefficients $a_k$ explicitly since $a_1$ here is a function of $(a_2,\dots,a_d)$.
	However, it is easy to check that the function
	\[ [-\kappa,0) \ni a_1 \mapsto a_1^2+\sum_{i=2}^d \biggl( \frac{\lambda_i}{1-\gamma}-\frac{h}{\lVert \tau^{-1}\ones \rVert a_1} \biggr)^{-2}\biggl\langle h\tau^\transp c+\frac{\lambda_i}{1-\gamma}\tau^{-1}\nu, v_i\biggr\rangle^2 \]
	has a strictly negative derivative on $[-\kappa,0)$. For $a_1=-\kappa$, the value of the function is greater or equal $\kappa^2$, for $a_1$ tending to zero from below it converges to zero, hence there is a unique value of $a_1\in[-\kappa,0)$ where the function has value $\kappa^2$. So~\eqref{eq:the_coefficients_a_k} together with~\eqref{eq:norm_of_mu_squared} uniquely determines $a_1,\dots,a_d$.
	
	Moreover, by some straightforward calculations we see that the Hessian of $\widetilde{g}$ is of the form
	\begin{align*}
		\frac{1}{1-\gamma}\widetilde{\Lambda}&+\frac{h}{\lVert \tau^{-1}\ones \rVert\sqrt{\kappa^2-\sum_{i=2}^d a_i^2}}I_{d-1}\\
		&+\frac{h}{\lVert \tau^{-1}\ones \rVert\bigl(\kappa^2-\sum_{i=2}^d a_i^2\bigr)^{3/2}}(a_2,\dots,a_d)^\transp(a_2,\dots,a_d),
	\end{align*}
	where $\widetilde{\Lambda}\in\R^{(d-1)\times(d-1)}$ is a diagonal matrix with diagonal entries $\lambda_2,\dots,\lambda_d>0$. Obviously, the first two summands are positive-definite matrices. The last summand is positive semidefinite. So we conclude that the Hessian of $\widetilde{g}$ is positive definite on the whole interior of the domain of $\widetilde{g}$. In particular, in the point $(a_2,\dots,a_d)$ defined via~\eqref{eq:the_coefficients_a_k} together with~\eqref{eq:norm_of_mu_squared}, there is a global minimum of the function $\widetilde{g}$.
	
	%	To conclude, the minimum of the function $g$ on $B_\kappa(0)$ is attained by $\rho^* = \sum_{i=1}^d a_iv_i$, where
	%	\begin{equation}\label{eq:coefficients_worst_case_parameter}
	%		a_i = -\biggl( \frac{\lambda_i}{1-\gamma}+\frac{h}{\psi(\kappa)\lVert \tau^{-1}\ones \rVert} \biggr)^{-1}\biggl\langle h\tau^\transp c+\frac{\lambda_i}{1-\gamma}\tau^{-1}\nu, v_i\biggr\rangle
	%	\end{equation}
	%	for $i=1,\dots,d$, and where $\psi(\kappa)=-a_1\in(0,\kappa]$ is uniquely determined by $\lVert\rho^*\rVert=\kappa$.
	%	Note that~\eqref{eq:coefficients_worst_case_parameter} also holds for $i=1$ since $\lambda_1=0$ and $c^\transp\tau v_1=\frac{1}{\lVert \tau^{-1}\ones \rVert}$ by~\eqref{eq:c_transposed_v_1}.
\end{proof}

\begin{proof}[Proof of Theorem~\ref{thm:solution_of_the_inf_sup_problem}]
	For any fixed parameter $\mu\in\R^d$, Proposition~\ref{prop:optimal_strategy_non-robust} gives the optimal strategy for the optimization problem
	\[ \sup_{\pi\in\calA_h(x_0)} \E_\mu\bigl[U_\gamma(X^\pi_T)\bigr]. \]
	With the help of Corollary~\ref{cor:optimal_utility_non-robust} we have seen that minimizing the above expression in $\mu$ on the set $K=\bigl\{ \mu\in\R^d \,\big|\, (\mu-\nu)^\transp \Gamma^{-1}(\mu-\nu) \leq \kappa^2 \bigr\}$ is equivalent to minimizing the function $g\colon B_\kappa(0)\to\R$ from Lemma~\ref{lem:worst-case_parameter} in $\rho$ and then setting $\mu=\nu+\tau\rho$.
	The claim now follows from Lemma~\ref{lem:worst-case_parameter} together with the representation in Proposition~\ref{prop:optimal_strategy_non-robust}.
\end{proof}

\begin{proof}[Proof of Lemma~\ref{lem:representation_of_pi_star}]
	Throughout the proof, let
	\[ a_i = -\biggl( \frac{\lambda_i}{1-\gamma}+\frac{h}{\psi(\kappa)\lVert \tau^{-1}\ones \rVert} \biggr)^{-1}\biggl\langle h\tau^\transp c+\frac{\lambda_i}{1-\gamma}\tau^{-1}\nu, v_i\biggr\rangle \]
	for $i=1,\dots,d$, so that $\tau^{-1}(\mu^*-\nu) = \sum_{i=1}^d a_iv_i$. Due to the form of the $a_i$ we can write
	\[ \sum_{i=1}^d \Bigl(\frac{\lambda_i}{1-\gamma}+\frac{h}{\psi(\kappa)\lVert \tau^{-1}\ones \rVert}\Bigr) a_iv_i = -\sum_{i=1}^d \Bigl\langle h\tau^\transp c+\frac{\lambda_i}{1-\gamma}\tau^{-1}\nu,v_i\Bigr\rangle v_i. \]
	Since the vectors $v_1,\dots,v_d$ form an orthonormal basis of $\R^d$ and are eigenvectors to the eigenvalues $\lambda_1,\dots,\lambda_d$ of the symmetric matrix $\tau^\transp A\tau$, the right-hand side equals
	\begin{equation*}
		\begin{aligned}
			-h\tau^\transp c-\frac{1}{1-\gamma}\sum_{i=1}^d \langle \tau^{-1}\nu,\lambda_i v_i\rangle v_i &= -h\tau^\transp c-\frac{1}{1-\gamma}\sum_{i=1}^d \langle \tau^{-1}\nu,\tau^\transp A\tau v_i\rangle v_i \\
			&= -h\tau^\transp c-\frac{1}{1-\gamma}\sum_{i=1}^d \langle \tau^\transp A\nu, v_i\rangle v_i \\
			&= -h\tau^\transp c-\frac{1}{1-\gamma}\tau^\transp A\nu.
		\end{aligned}
	\end{equation*}
	On the other hand, we get
	\begin{equation*}
		\begin{aligned}
			\sum_{i=1}^d \Bigl(\frac{\lambda_i}{1-\gamma}+\frac{h}{\psi(\kappa)\lVert \tau^{-1}\ones \rVert}\Bigr) a_iv_i
			&= \frac{1}{1-\gamma}\sum_{i=1}^d a_i\lambda_iv_i + \frac{h}{\psi(\kappa)\lVert \tau^{-1}\ones \rVert}\sum_{i=1}^d a_iv_i \\
			&= \frac{1}{1-\gamma}\sum_{i=1}^d a_i\tau^\transp A\tau v_i + \frac{h}{\psi(\kappa)\lVert \tau^{-1}\ones \rVert}\tau^{-1}(\mu^*-\nu) \\
			&= \frac{1}{1-\gamma}\tau^\transp A(\mu^*-\nu) + \frac{h}{\psi(\kappa)\lVert \tau^{-1}\ones \rVert}\tau^{-1}(\mu^*-\nu).
		\end{aligned}
	\end{equation*}
	We have used here that $v_i$ is an eigenvector of $\tau^\transp A\tau$ to the eigenvalue $\lambda_i$ for each $i=1,\dots,d$. In conclusion,
	\begin{equation*}
		\frac{1}{1-\gamma}\tau^\transp A\mu^* = -\frac{h}{\psi(\kappa)\lVert \tau^{-1}\ones \rVert}\tau^{-1}(\mu^*-\nu)-h\tau^\transp c.
	\end{equation*}
	Hence, by using the representation of $\pi^*$ from Theorem~\ref{thm:solution_of_the_inf_sup_problem} we obtain
	\begin{equation*}
		\pi^*_t = \frac{1}{1-\gamma}A\mu^* +hc = (\tau^\transp)^{-1}\Bigl(\frac{1}{1-\gamma}\tau^\transp A\mu^* +h\tau^\transp c\Bigr) = -\frac{h}{\psi(\kappa)\lVert \tau^{-1}\ones \rVert}\Gamma^{-1}(\mu^*-\nu)
	\end{equation*}
	for all $t\in[0,T]$.
\end{proof}

\begin{proof}[Proof of Theorem~\ref{thm:duality_result}]
	Since $\pi^*$ is a strategy that is constant in time and deterministic, we can rewrite the expected utility of terminal wealth as
	\[ \E_\mu\bigl[U_\gamma(X^{\pi^*}_T)\bigr] =
	\begin{cases}
		\frac{x_0^\gamma}{\gamma}\exp\biggl(\gamma T \Bigl(r+(\pi^*_0)^\transp (\mu-r\ones)-\frac{1}{2}\lVert \sigma^\transp\!\pi^*_0\rVert^2\Bigr)+\frac{1}{2}\gamma^2T\lVert \sigma^\transp\!\pi^*_0\rVert^2\biggr), &\gamma\neq 0,\\
		\log(x_0)+ T \Bigl(r+(\pi^*_0)^\transp (\mu-r\ones)-\frac{1}{2}\lVert \sigma^\transp\pi^*_0\rVert^2\Bigr), & \gamma=0.
	\end{cases} \]
	Obviously, for any $\gamma\in(-\infty,1)$ the parameter $\mu\in K$ that minimizes the expression above is the parameter that minimizes $(\pi^*_0)^\transp \mu$.
	For an arbitrary $\theta\in\R^d$, $\theta\neq 0$, an easy calculation shows that the parameter $\mu\in\R^d$ that minimizes $\theta^\transp\mu$ such that $(\mu-\nu)^\transp\Gamma^{-1}(\mu-\nu)\leq\kappa^2$ has the form
	\begin{equation}\label{eq:worst_case_mu_in_ellipsoid}
		\widetilde{\mu} = \nu-\frac{\kappa}{\sqrt{\theta^\transp\Gamma\theta}}\Gamma\theta.
	\end{equation}
	Hence it is sufficient to show that the parameter $\mu^*$ is equal to $\widetilde{\mu}$ from~\eqref{eq:worst_case_mu_in_ellipsoid} for $\theta=\pi^*_0$.
	Using Lemma~\ref{lem:representation_of_pi_star} we have
	\begin{equation*}
		(\pi^*_0)^\transp\Gamma\pi^*_0 = \frac{h^2}{\psi(\kappa)^2\lVert \tau^{-1}\ones \rVert^2}(\mu^*-\nu)^\transp\Gamma^{-1}(\mu^*-\nu) = \frac{h^2\kappa^2}{\psi(\kappa)^2\lVert \tau^{-1}\ones \rVert^2}
	\end{equation*}
	and
	\begin{equation}\label{eq:sqrt_pi_Gamma_pi}
		\sqrt{(\pi^*_0)^\transp\Gamma\pi^*_0} = \frac{h\kappa}{\psi(\kappa)\lVert \tau^{-1}\ones \rVert}.
	\end{equation}
	When rearranging the representation in Lemma~\ref{lem:representation_of_pi_star} for $\mu^*$ and plugging in~\eqref{eq:sqrt_pi_Gamma_pi} we obtain
	\begin{equation*}
		\mu^* = \nu-\frac{\psi(\kappa)\lVert \tau^{-1}\ones \rVert}{h}\Gamma\pi^*_0 = \nu-\frac{\kappa}{\sqrt{(\pi^*_0)^\transp\Gamma\pi^*_0}}\Gamma\pi^*_0.
	\end{equation*}
	We conclude that $\mu^*$ is the parameter that minimizes $(\pi^*_0)^\transp \mu$ over all $\mu\in K$ and therefore the worst possible parameter for the strategy $\pi^*$.
	
	Now, for an arbitrary parameter $\mu\in K$, let $\pi(\mu)=(\pi_t(\mu))_{t\in[0,T]}$ denote the strategy from $\calA_h(x_0)$ that is optimal, given that the drift parameter is $\mu$. Then we know from Theorem~\ref{thm:solution_of_the_inf_sup_problem} that
	\begin{equation}\label{eq:inf_sup_equality}
		\adjustlimits \inf_{\mu\in K} \sup_{\pi\in\calA_h(x_0)} \E_\mu\bigl[U_\gamma(X^\pi_T)\bigr]
		= \inf_{\mu\in K} \E_\mu\bigl[U_\gamma(X^{\pi(\mu)}_T)\bigr]
		= \E_{\mu^*}\bigl[U_\gamma(X^{\pi^*}_T)\bigr].
	\end{equation}
	On the other hand, the fact that $\mu^*$ is the worst parameter for an investor using strategy $\pi^*$ yields
	\begin{equation}\label{eq:sup_inf_inequality}
		\begin{aligned}
			\E_{\mu^*}\bigl[U_\gamma(X^{\pi^*}_T)\bigr] = \inf_{\mu\in K} \E_\mu\bigl[U_\gamma(X^{\pi^*}_T)\bigr] \leq \adjustlimits \sup_{\pi\in\calA_h(x_0)} \inf_{\mu\in K} \E_\mu\bigl[U_\gamma(X^{\pi}_T)\bigr].
		\end{aligned}
	\end{equation}
	Furthermore, we also have
	\[ \adjustlimits \sup_{\pi\in\calA_h(x_0)} \inf_{\mu\in K} \E_\mu\bigl[U_\gamma(X^{\pi}_T)\bigr] \leq \adjustlimits \inf_{\mu\in K} \sup_{\pi\in\calA_h(x_0)} \E_\mu\bigl[U_\gamma(X^{\pi}_T)\bigr] \]
	since the inequality always holds when interchanging supremum and infimum, see for example Ekeland and Temam~\cite[Ch.~VI, Prop.~1.1]{ekeland_temam_1976}.
	Consequently, the inequality in~\eqref{eq:sup_inf_inequality} is an equality and the claim follows.
\end{proof}

\begin{proof}[Proof of Lemma~\ref{lem:asymptotics_of_a_kappa}]
	By acknowledging the dependence on $\kappa$, we write $a_i(\kappa)$ for the coefficients of $\rho^*=\tau^{-1}(\mu^*-\nu)$.
	We have already seen in the proof of Lemma~\ref{lem:worst-case_parameter} that $a_1(\kappa)=-\psi(\kappa)$.
	Hence, the constraint $\lVert\tau^{-1}(\mu^*-\nu)\rVert=\kappa$ implies
	\begin{equation*}\label{eq:equation_for_a_kappa}
		1=\frac{\lVert\tau^{-1}(\mu^*-\nu)\rVert^2}{\kappa^2}=\sum_{i=1}^d \Bigl(\frac{a_i(\kappa)}{\kappa}\Bigr)^2=\Bigl(\frac{\psi(\kappa)}{\kappa}\Bigr)^2+\sum_{i=2}^d \Bigl(\frac{a_i(\kappa)}{\kappa}\Bigr)^2
	\end{equation*}
	due to orthonormality of $v_1,\dots,v_d$. In the following, we show that the sum in the expression above goes to zero as $\kappa$ goes to infinity. To prove this, take some $i\in\{2,\dots,d\}$. We know that
	\[ \Bigl(\frac{a_i(\kappa)}{\kappa}\Bigr)^2=\frac{1}{\kappa^2}\biggl( \frac{\lambda_i}{1-\gamma}+\frac{h}{\psi(\kappa)\lVert \tau^{-1}\ones \rVert} \biggr)^{-2}\biggl\langle h\tau^\transp c+\frac{\lambda_i}{1-\gamma}\tau^{-1}\nu, v_i\biggr\rangle^2, \]
	where the expression in the inner product does not depend on $\kappa$. For the other factor, recall that $\psi(\kappa)>0$ and $\lambda_i>0$. Hence,
	\[ \frac{\lambda_i}{1-\gamma}+\frac{h}{\psi(\kappa)\lVert \tau^{-1}\ones \rVert}>\frac{\lambda_i}{1-\gamma}>0 \]
	and therefore
	\[ \frac{1}{\kappa^2}\biggl( \frac{\lambda_i}{1-\gamma}+\frac{h}{\psi(\kappa)\lVert \tau^{-1}\ones \rVert} \biggr)^{-2} \leq \frac{1}{\kappa^2}\biggl( \frac{\lambda_i}{1-\gamma} \biggr)^{-2}, \]
	where the upper bound goes to zero as $\kappa$ goes to infinity.
	The claim now follows from the fact that $\psi(\kappa)$ is positive for each $\kappa$.
\end{proof}

\begin{proof}[Proof of Proposition~\ref{prop:asymptotics_of_mu_star}]
	Using the same notation as before, as well as the result from the previous lemma, we can deduce that
	\[ \frac{1}{\kappa}\tau^{-1}\bigl(\mu^*(\kappa)-\nu\bigr)=\frac{a_1(\kappa)}{\kappa}v_1+\sum_{i=2}^d \frac{a_i(\kappa)}{\kappa}v_i=-\frac{\psi(\kappa)}{\kappa}v_1+\sum_{i=2}^d \frac{a_i(\kappa)}{\kappa}v_i \]
	goes to $-v_1$ as $\kappa$ goes to infinity. The second claim follows immediately.
\end{proof}

\begin{proof}[Proof of Theorem~\ref{thm:limit_of_optimal_strategy}]
	Recall that by Lemma~\ref{lem:representation_of_pi_star} we can write
	\[ \pi^*_t(\kappa)=-\frac{h}{\psi(\kappa)\lVert \tau^{-1}\ones \rVert}\Gamma^{-1}\bigl(\mu^*(\kappa)-\nu\bigr)=-\frac{h}{\lVert \tau^{-1}\ones \rVert}\frac{\kappa}{\psi(\kappa)}\frac{1}{\kappa}\Gamma^{-1}\bigl(\mu^*(\kappa)-\nu\bigr) \]
	for any $t\in[0,T]$. We then obtain
	\[ \lim_{\kappa\to\infty} \pi^*_t(\kappa)=\frac{h}{\lVert \tau^{-1}\ones \rVert} (\tau^\transp)^{-1}v_1=\frac{h}{\lVert \tau^{-1}\ones \rVert^2}(\tau\tau^\transp)^{-1}\ones=\frac{h}{\ones^\transp\Gamma^{-1}\ones}\Gamma^{-1}\ones \]
	by combining the results from Lemma~\ref{lem:asymptotics_of_a_kappa} and Proposition~\ref{prop:asymptotics_of_mu_star}.
\end{proof}

\begin{proof}[Proof of Proposition~\ref{prop:comparison_greater_equal_h_logarithm}]
	Let $\pi'\in\calA'_h(x_0)$ with $\lVert\pi'\rVert\leq M$. Then $\pi'$ can be decomposed as $\pi'_t = \pi_t+\varepsilon_t\ones$ for all $t\in[0,T]$, where $\pi=(\pi_t)_{t\in[0,T]}\in\calA_h(x_0)$ and $\varepsilon_t\geq 0$ for all $t\in[0,T]$. For any fixed $\mu\in K(\kappa)$ we rewrite the expected logarithmic utility given strategy $\pi'$ as
	\begin{equation*}
		\begin{aligned}
			\E_\mu\bigl[\log(X^{\pi'}_T)\bigr] %&= \log(x_0)+rT+\E_\mu\biggl[\int_0^T \Bigl((\pi'_t)^\transp(\mu-r\ones)-\frac{1}{2}\lVert\sigma^\transp\pi'_t\rVert^2\Bigr)\rmd t\biggr]\\
			&= \E_\mu\bigl[\log(X^{\pi}_T)\bigr]+\E_\mu\biggl[\int_0^T \varepsilon_t\Bigl(\ones^\transp(\mu-r\ones)-\frac{1}{2}\varepsilon_t\lVert\sigma^\transp\ones\rVert^2-\ones^\transp\sigma\sigma^\transp\pi_t\Bigr)\rmd t\biggr].
		\end{aligned}
	\end{equation*}
	In particular, we have
	\begin{equation}\label{eq:log_utility_decomposition}
		\begin{aligned}
			&\inf_{\mu\in K(\kappa)} \E_\mu\bigl[\log(X^{\pi'}_T)\bigr] \leq \E_{\mu^*}\bigl[\log(X^{\pi'}_T)\bigr]\\
			&= \E_{\mu^*}\bigl[\log(X^{\pi}_T)\bigr]+\E_{\mu^*}\biggl[\int_0^T \varepsilon_t\Bigl(\ones^\transp\bigl(\mu^*(\kappa)-r\ones\bigr)-\frac{1}{2}\varepsilon_t\lVert\sigma^\transp\ones\rVert^2-\ones^\transp\sigma\sigma^\transp\pi_t\Bigr)\rmd t\biggr],
		\end{aligned}
	\end{equation}
	where $\mu^*=\mu^*(\kappa)$ is the worst-case parameter from Theorem~\ref{thm:solution_of_the_inf_sup_problem}. Our assumption $\lVert\pi'\rVert\leq M$ implies that also $\lVert\pi_t\rVert$ is bounded for every $t\in[0,T]$, and so is $\ones^\transp\sigma\sigma^\transp\pi_t$. Hence there exists a $\kappa_M>0$ such that the second summand in~\eqref{eq:log_utility_decomposition} is non-positive for $\kappa\geq \kappa_M$. That is because $\varepsilon_t\geq 0$ for all $t\in[0,T]$ and
	\[ \lim_{\kappa\to\infty} \ones^\transp\mu^*(\kappa) = \ones^\transp\nu-\lim_{\kappa\to\infty}\psi(\kappa)\ones^\transp\tau v_1 = \ones^\transp\nu-\lim_{\kappa\to\infty}\psi(\kappa)\frac{d}{\lVert\tau^{-1}\ones\rVert}=-\infty. \]
	Since $\kappa_M$ depends only on $M$ but not on the strategy $\pi'$ or its decomposition, we can further deduce
	\[ \adjustlimits \sup_{\substack{\pi\in\calA'_h(x_0)\\ \lVert\pi\rVert\leq M}} \inf_{\substack{\mu\in K(\kappa)\\ \phantom{0}}} \E_\mu\bigl[\log(X^\pi_T)\bigr] \leq \sup_{\pi\in\calA_h(x_0)} \E_{\mu^*}\bigl[\log(X^{\pi}_T)\bigr] = \adjustlimits \sup_{\pi\in\calA_h(x_0)} \inf_{\mu\in K(\kappa)} \E_\mu\bigl[\log(X^\pi_T)\bigr] \]
	for all $\kappa\geq\kappa_M$, which completes the proof.
\end{proof}

\begin{proof}[Proof of Lemma~\ref{lem:properties_of_A_and_c}]
	Using the definition of $A$ in Definition~\ref{def:matrix_A_vector_c} we see that
	\[ A\sigma\sigma^\transp A = D^\transp(D\sigma\sigma^\transp D^\transp)^{-1}D\sigma\sigma^\transp D^\transp(D\sigma\sigma^\transp D^\transp)^{-1}D = D^\transp(D\sigma\sigma^\transp D^\transp)^{-1}D =A, \]
	and hence in particular
	\[ c^\transp\sigma\sigma^\transp A=e_d^\transp(I_d-\sigma\sigma^\transp A)\sigma\sigma^\transp A=e_d^\transp(\sigma\sigma^\transp A-\sigma\sigma^\transp A)=0. \]
	Further, due to $A\ones=0$ we also have
	\[ c^\transp\ones=e_d^\transp(I_d-\sigma\sigma^\transp A)\ones=e_d^\transp\ones=1. \qedhere\]
\end{proof}

\begin{proof}[Proof of Proposition~\ref{prop:comparison_greater_equal_h_power}]
	Take an arbitrary strategy $\pi\in\overline{\calA}_h(x_0)$. Then there exists some $h'\geq h$ such that $\pi\in\calA_{h'}(x_0)$ and we know that
	\[ \inf_{\mu\in K(\kappa)} \E_\mu\bigl[U_\gamma(X^\pi_T)\bigr] \leq \inf_{\mu\in K(\kappa)} \E_\mu\bigl[U_\gamma(X^{\pi'}_T)\bigr] = \E_{\mu'}\bigl[U_\gamma(X^{\pi'}_T)\bigr], \]
	where $\mu'=\mu'(\kappa)$ is the minimizer of the function
	\[ \mu \mapsto \frac{1}{2(1-\gamma)}\mu^\transp A\mu+h'c^\transp\mu \]
	on the uncertainty set $K(\kappa)$ and $\pi'=\pi'(\kappa)\equiv \frac{1}{1-\gamma}A\mu'+h'c$. In the following we show that for sufficiently large level of uncertainty
	\begin{equation}\label{eq:power_utility_compare_h'_and_h}
		\E_{\mu'}\bigl[U_\gamma(X^{\pi'}_T)\bigr] \leq \E_{\mu^*}\bigl[U_\gamma(X^{\pi^*}_T)\bigr]
	\end{equation}
	where $\mu^*=\mu^*(\kappa)$ and $\pi^*=\pi^*(\kappa)$ are the worst-case parameter and the optimal strategy for the utility maximization among strategies in $\calA_h(x_0)$. Note that for showing~\eqref{eq:power_utility_compare_h'_and_h} it is sufficient to prove
	\begin{equation}\label{eq:power_utility_compare_h'_sufficient}
		(\pi'_0)^\transp(\mu'-r\ones)-\frac{1-\gamma}{2}\lVert\sigma^\transp\pi'_0\rVert^2 \leq (\pi^*_0)^\transp(\mu^*-r\ones)-\frac{1-\gamma}{2}\lVert\sigma^\transp\pi^*_0\rVert^2.
	\end{equation}
	Using the representation of $\pi'$ we obtain
	\begin{equation*}
		\begin{aligned}
			&(\pi'_0)^\transp(\mu'-r\ones)-\frac{1-\gamma}{2}\lVert\sigma^\transp\pi'_0\rVert^2\\
			&= \frac{1}{1-\gamma}(\mu')^\transp A\mu'+h'c^\transp(\mu'-r\ones)-\frac{1}{2(1-\gamma)}(\mu')^\transp A\mu'-\frac{1-\gamma}{2}(h')^2c^\transp\sigma\sigma^\transp c\\
			&= \frac{1}{2(1-\gamma)}(\mu')^\transp A\mu'+h'c^\transp\mu'-h'r-\frac{1-\gamma}{2}(h')^2c^\transp\sigma\sigma^\transp c.
		\end{aligned}
	\end{equation*}
	Here we have used the identities from Lemma~\ref{lem:properties_of_A_and_c}. An analogous computation can be done for $\pi^*$ and $\mu^*$. We then see that, since $\mu'$ minimizes
	\[ \mu \mapsto \frac{1}{2(1-\gamma)}\mu^\transp A\mu+h'c^\transp\mu \]
	on $K(\kappa)$, in particular it holds
	\begin{equation*}
		\begin{aligned}
			\frac{1}{2(1-\gamma)}(\mu')^\transp A\mu'+h'c^\transp\mu' &\leq \frac{1}{2(1-\gamma)}(\mu^*)^\transp A\mu^*+h'c^\transp\mu^*.%\\
			%&= \frac{1}{2(1-\gamma)}(\mu^*)^\transp A\mu^*+hc^\transp\mu^*+(h'-h)c^\transp\mu^*.
		\end{aligned}
	\end{equation*}
	Using again $c^\transp\ones=1$ it is easy to show that $c^\transp\mu^*=c^\transp\mu^*(\kappa)$ goes to minus infinity as $\kappa$ goes to infinity. Hence we can choose $\kappa'>0$ such that $c^\transp\mu^*\leq 0$ for all $\kappa\geq\kappa'$. Note that $\kappa'$ does not depend on $\pi'$. For all $\kappa\geq\kappa'$ we then have
	\begin{equation*}
		\begin{aligned}
			&(\pi'_0)^\transp(\mu'-r\ones)-\frac{1-\gamma}{2}\lVert\sigma^\transp\pi'_0\rVert^2\\
			&\leq \frac{1}{2(1-\gamma)}(\mu^*)^\transp A\mu^*+hc^\transp\mu^*+(h'-h)c^\transp\mu^*-h'r-\frac{1-\gamma}{2}(h')^2c^\transp\sigma\sigma^\transp c\\
			&\leq \frac{1}{2(1-\gamma)}(\mu^*)^\transp A\mu^*+hc^\transp\mu^*-hr-\frac{1-\gamma}{2}h^2c^\transp\sigma\sigma^\transp c\\
			&= (\pi^*_0)^\transp(\mu^*-r\ones)-\frac{1-\gamma}{2}\lVert\sigma^\transp\pi^*_0\rVert^2,
		\end{aligned}
	\end{equation*}
	which proves~\eqref{eq:power_utility_compare_h'_sufficient} and hence~\eqref{eq:power_utility_compare_h'_and_h}. Since $\kappa'$ was chosen independent of $h'$ or $\pi'$, we deduce in particular
	\[ \adjustlimits \sup_{\pi\in \overline{\calA}_h(x_0)} \inf_{\mu\in K(\kappa)} \E_\mu\bigl[U_\gamma(X^\pi_T)\bigr] \leq \E_{\mu^*}\bigl[U_\gamma(X^{\pi^*}_T)\bigr] = \adjustlimits \sup_{\pi\in\calA_h(x_0)} \inf_{\mu\in K(\kappa)} \E_\mu\bigl[U_\gamma(X^\pi_T)\bigr] \]
	for all $\kappa\geq\kappa'$. The reverse inequality holds trivially.
\end{proof}

\begin{proof}[Proof of Proposition~\ref{prop:oCOA_greater_than_oRDR}]
	Since both $\pi^*$ and $\hat{\pi}$ are constant in time and deterministic, we can show for $\gamma\neq 0$ that
	\begin{equation}\label{eq:oCOA}
		\begin{aligned}
			\COA &= x_0\rme^{rT} \biggl( \exp\Bigl(T\Bigl((\hat{\pi}_0)^\transp(\nu-r\ones)-\frac{1-\gamma}{2}\lVert\sigma^\transp\hat{\pi}_0\rVert^2\Bigr)\Bigr)\\
			&\qquad\qquad\qquad-\exp\Bigl(T\Bigl((\pi^*_0)^\transp(\nu-r\ones)-\frac{1-\gamma}{2}\lVert\sigma^\transp\pi^*_0\rVert^2\Bigr)\Bigr) \biggr)
		\end{aligned}
	\end{equation}
	and
	\begin{equation}\label{eq:oRDR}
		\begin{aligned}
			\RDR &= x_0\rme^{rT} \biggl( \exp\Bigl(T\Bigl((\pi^*_0)^\transp(\mu^*-r\ones)-\frac{1-\gamma}{2}\lVert\sigma^\transp\pi^*_0\rVert^2\Bigr)\Bigr)\\
			&\qquad\qquad\qquad-\exp\Bigl(T\Bigl((\hat{\pi}_0)^\transp(\mu^*-r\ones)-\frac{1-\gamma}{2}\lVert\sigma^\transp\hat{\pi}_0\rVert^2\Bigr)\Bigr) \biggr).
		\end{aligned}
	\end{equation}
	For $\gamma=0$ we obtain the same representations as in~\eqref{eq:oCOA} and~\eqref{eq:oRDR} with $\gamma=0$. We now plug in the representations from~\eqref{eq:representation_of_pi_hat_robustness}, respectively~\eqref{eq:representation_of_pi_star_robustness}, of the strategies $\pi^*$ and $\hat{\pi}$ and use the properties $A\ones=0$, $c^\transp\sigma\sigma^\transp A=0$ and $A\sigma\sigma^\transp A=A$, see Lemma~\ref{lem:properties_of_A_and_c}. We obtain
	\begin{equation*}
		\begin{aligned}
			\frac{\COA}{x_0\rme^{rT}} 
			&= \overline{L}(\gamma,\kappa)\exp\Bigl(T\Bigl(-hr-\frac{1-\gamma}{2}h^2c^\transp\sigma\sigma^\transp c+hc^\transp\nu+\frac{1}{2(1-\gamma)}\nu^\transp A\nu\Bigr)\Bigr),\\
			\frac{\RDR}{x_0\rme^{rT}} &= \overline{L}(\gamma,\kappa)\exp\Bigl(T\Bigl(-hr-\frac{1-\gamma}{2}h^2c^\transp\sigma\sigma^\transp c+hc^\transp\mu^*+\frac{1}{2(1-\gamma)}(\mu^*)^\transp A\mu^*\Bigr)\Bigr),
		\end{aligned}
	\end{equation*}
	where
	\[ \overline{L}(\gamma,\kappa) = 1-\exp\Bigl(-\frac{T}{2(1-\gamma)}(\mu^*-\nu)^\transp A(\mu^*-\nu)\Bigr). \]
	Hence, we can deduce in particular that
	\[ \frac{\COA}{\RDR}=\frac{\exp\Bigl(T\Bigl(\frac{1}{2(1-\gamma)}\nu^\transp A\nu+hc^\transp\nu\Bigr)\Bigr)}{\exp\Bigl(T\Bigl(\frac{1}{2(1-\gamma)}(\mu^*)^\transp A\mu^*+hc^\transp\mu^*\Bigr)\Bigr)}\geq 1, \]
	since $\mu^*$ minimizes the function $\mu\mapsto \frac{1}{2(1-\gamma)}\mu^\transp A\mu+hc^\transp\mu$ on the set $K$.
\end{proof}

\begin{proof}[Proof of Proposition~\ref{prop:limit_of_oCOA_and_oRDR}]
	Firstly, note that by the same reasoning as in the proof of Theorem~\ref{thm:duality_result} we have
	\[ (\hat{\pi}_0)^\transp\mu^*\leq(\pi^*_0)^\transp\mu^*=(\pi^*_0)^\transp\nu-\kappa\sqrt{(\pi^*_0)^\transp\Gamma\pi^*_0}, \]
	and that the right-hand side goes to $-\infty$ as $\kappa$ goes to infinity. It follows that
	\[ \lim_{\kappa\to\infty}\E_{\mu^*}\bigl[U_\gamma(X^{\hat{\pi}}_T)\bigr]=\lim_{\kappa\to\infty}\E_{\mu^*}\bigl[U_\gamma(X^{\pi^*}_T)\bigr]=
	\begin{cases}
		-\infty,&\gamma\leq 0,\\
		0,&\gamma>0,
	\end{cases} \]
	and therefore $\lim_{\kappa\to\infty}\RDR(\kappa)=0$.
	For $\COA$ we observe that $\E_\nu[U_\gamma(X^{\pi^*}_T)]$ converges to a finite value as $\kappa$ goes to infinity, with that limit being different from zero if $\gamma\neq 0$. It follows that $U_\gamma^{-1}(\E_\nu[U_\gamma(X^{\pi^*}_T)])$ also converges. We thus deduce convergence of $\COA(\kappa)$. Since $\COA(\kappa)\geq 0$ for any $\kappa$, we know that the limit is non-negative.
\end{proof}

\section*{Acknowledgments}

The authors thank two anonymous referees for helpful comments and suggestions that improved this paper.

%\bibliographystyle{../dissertation_style}
%\bibliography{../bibliography-database}

\begin{thebibliography}{10}
	\expandafter\ifx\csname urlstyle\endcsname\relax
	\providecommand{\doi}[1]{doi:\discretionary{}{}{}#1}\else
	\providecommand{\doi}{doi:\discretionary{}{}{}\begingroup
		\urlstyle{rm}\Url}\fi
	
	\bibitem{analui_2014}
	\textsc{B.~Analui}, \emph{Multistage Stochastic Optimization of Energy
		Portfolios under Model Ambiguity}, Ph.D. thesis, Universit\"{a}t Wien (2014).
	
	\bibitem{biagini_pinar_2017}
	\textsc{S.~Biagini \& M.~{\c{C}}. P{\i}nar}, The robust {Merton} problem of an
	ambiguity averse investor, \emph{Mathematics and Financial Economics}
	\textbf{11} (2017), no.~1, pp. 1--24.
	
	\bibitem{bjoerk_khapko_murgoci_2017}
	\textsc{T.~Bj{\"o}rk, M.~Khapko \& A.~Murgoci}, On time-inconsistent stochastic
	control in continuous time, \emph{Finance and Stochastics} \textbf{21}
	(2017), pp. 331--360.
	
	\bibitem{chen_epstein_2002}
	\textsc{Z.~Chen \& L.~Epstein}, Ambiguity, risk, and asset returns in
	continuous time, \emph{Econometrica} \textbf{70} (2002), no.~4, pp.
	1403--1443.
	
	\bibitem{delage_kuhn_wiesemann_2019}
	\textsc{E.~Delage, D.~Kuhn \& W.~Wiesemann}, {``Dice''}-sion--making under
	uncertainty: When can a random decision reduce risk?, \emph{Management
		Science} \textbf{65} (2019), no.~7, pp. 3282--3301.
	
	\bibitem{demiguel_garlappi_nogales_uppal_2009}
	\textsc{V.~DeMiguel, L.~Garlappi, F.~J. Nogales \& R.~Uppal}, A generalized
	approach to portfolio optimization: improving performance by constraining
	portfolio norms, \emph{Management Science} \textbf{55} (2009), no.~5, pp.
	798--812.
	
	\bibitem{ekeland_temam_1976}
	\textsc{I.~Ekeland \& R.~Temam}, \emph{Convex Analysis and Variational
		Problems}, North-Holland Publishing Company (1976).
	
	\bibitem{garlappi_uppal_wang_2007}
	\textsc{L.~Garlappi, R.~Uppal \& T.~Wang}, Portfolio selection with parameter
	and model uncertainty: A multi-prior approach, \emph{The Review of Financial
		Studies} \textbf{20} (2007), no.~1, pp. 41--81.
	
	\bibitem{gilboa_schmeidler_1989}
	\textsc{I.~Gilboa \& D.~Schmeidler}, Maxmin expected utility with non-unique
	prior, \emph{Journal of Mathematical Economics} \textbf{18} (1989), no.~2,
	pp. 141--153.
	
	\bibitem{knight_1921}
	\textsc{F.~H. Knight}, \emph{Risk, Uncertainty and Profit}, Houghton Mifflin,
	Boston (1921).
	
	\bibitem{kramkov_schachermayer_1999}
	\textsc{D.~Kramkov \& W.~Schachermayer}, The asymptotic elasticity of utility
	functions and optimal investment in incomplete markets, \emph{The Annals of
		Applied Probability} \textbf{9} (1999), no.~3, pp. 904--950.
	
	\bibitem{kramkov_schachermayer_2003}
	\textsc{D.~Kramkov \& W.~Schachermayer}, Necessary and sufficient conditions in
	the problem of optimal investment in incomplete markets, \emph{The Annals of
		Applied Probability} \textbf{13} (2003), no.~4, pp. 1504--1516.
	
	\bibitem{lin_riedel_2014}
	\textsc{Q.~Lin \& F.~Riedel}, Optimal consumption and portfolio choice with
	ambiguity (2014). \href{https://arxiv.org/abs/1401.1639}{arXiv:1401.1639}
	[q-fin.PM].
	
	\bibitem{lin_riedel_2021}
	\textsc{Q.~Lin \& F.~Riedel}, Optimal consumption and portfolio choice with
	ambiguous interest rates and volatility, \emph{Economic Theory} \textbf{71}
	(2021), pp. 1189--1202.
	
	\bibitem{maccheroni_marinacci_rustichini_2006}
	\textsc{F.~Maccheroni, M.~Marinacci \& A.~Rustichini}, Ambiguity aversion,
	robustness, and the variational representation of preferences,
	\emph{Econometrica} \textbf{74} (2006), no.~6, pp. 1447--1498.
	
	\bibitem{merton_1969}
	\textsc{R.~C. Merton}, Lifetime portfolio selection under uncertainty: the
	continuous-time case, \emph{The Review of Economics and Statistics}
	\textbf{51} (1969), no.~3, pp. 247--257.
	
	\bibitem{mueller_2005}
	\textsc{M.~M\"{u}ller}, \emph{Market Completion and Robust Utility
		Maximization}, Ph.D. thesis, Humboldt-Universit\"{a}t zu Berlin (2005).
	
	\bibitem{neufeld_nutz_2018}
	\textsc{A.~Neufeld \& M.~Nutz}, Robust utility maximization with {L}\'{e}vy
	processes, \emph{Mathematical Finance} \textbf{28} (2018), no.~1, pp.
	82--105.
	
	\bibitem{oksendal_sulem_2008}
	\textsc{B.~\O{}ksendal \& A.~Sulem}, A game theoretic approach to martingale
	measures in incomplete markets, \emph{Surveys of Applied and Industrial
		Mathematics (TVP Publishers, Moscow)} \textbf{15} (2008), pp. 18--24.
	
	\bibitem{oksendal_sulem_2011}
	\textsc{B.~\O{}ksendal \& A.~Sulem}, Robust stochastic control and equivalent
	martingale measures, in \emph{Stochastic Analysis with Financial
		Applications}, vol.~65 of \emph{Progress in Probability}, Springer Basel
	(2011), pp. 179--189.
	
	\bibitem{pflug_pichler_wozabal_2012}
	\textsc{G.~Pflug, A.~Pichler \& D.~Wozabal}, The {$1/N$} investment strategy is
	optimal under high model ambiguity, \emph{Journal of Banking \& Finance}
	\textbf{36} (2012), no.~2, pp. 410--417.
	
	\bibitem{pham_wei_zhou_2018}
	\textsc{H.~Pham, X.~Wei \& C.~Zhou}, Portfolio diversification and model
	uncertainty: a robust dynamic mean-variance approach (2018).
	\href{https://arxiv.org/abs/1809.01464}{arXiv:1809.01464} [q-fin.PM].
	
	\bibitem{quenez_2004}
	\textsc{M.-C. Quenez}, Optimal portfolio in a multiple-priors model, in
	\textsc{R.~C. Dalang, M.~Dozzi \& F.~Russo}, eds., \emph{Seminar on
		Stochastic Analysis, Random Fields and Applications IV}, vol.~58 of
	\emph{Progress in Probability}, Birkh\"{a}user, Basel (2004), pp. 291--321.
	
	\bibitem{sass_westphal_2021}
	\textsc{J.~Sass \& D.~Westphal}, Robust utility maximization in a multivariate
	financial market with stochastic drift, \emph{International Journal of
		Theoretical and Applied Finance} \textbf{24} (2021), no.~4. 28 pages.
	
	\bibitem{schied_2005}
	\textsc{A.~Schied}, Optimal investments for robust utility functionals in
	complete market models, \emph{Mathematics of Operations Research} \textbf{30}
	(2005), no.~3, pp. 750--764.
	
	\bibitem{schied_2007}
	\textsc{A.~Schied}, Optimal investments for risk- and ambiguity-averse
	preferences: a duality approach, \emph{Finance and Stochastics} \textbf{11}
	(2007), no.~1, pp. 107--129.
	
	\bibitem{schmeidler_1989}
	\textsc{D.~Schmeidler}, Subjective probability and expected utility without
	additivity, \emph{Econometrica} \textbf{57} (1989), no.~3, pp. 571--587.
	
	\bibitem{westphal_2019}
	\textsc{D.~Westphal}, \emph{Model Uncertainty and Expert Opinions in
		Continuous-Time Financial Markets}, Ph.D. thesis, Technische Universit\"{a}t
	Kaiserslautern (2019).
	
	\bibitem{zawisza_2018}
	\textsc{D.~Zawisza}, A note on the worst case approach for a market with a
	stochastic interest rate, \emph{Applicationes Mathematicae} \textbf{45}
	(2018), no.~2, pp. 151--160.
	
\end{thebibliography}

\end{document}